
\documentclass[a4paper,UKenglish,cleveref, autoref, thm-restate]{lipics-v2021}

\hideLIPIcs  



\usepackage[subtle]{savetrees} 
 
\addtolength{\parskip}{-0.2ex}
\setlength{\intextsep}{0.3\baselineskip}

\usepackage{paralist} 
\usepackage{amsthm,amsmath,amssymb} 
\usepackage{xspace}
\usepackage[vlined,linesnumbered,ruled]{algorithm2e}
\numberwithin{equation}{section}
\usepackage{csquotes}
\usepackage{pgfplots}
\pgfplotsset{compat=newest}
\usepgfplotslibrary{groupplots}

\usepackage{cite} 

\usepackage{tikz}
\usepackage{graphics,graphicx,xcolor,color,url}    
\usetikzlibrary{arrows,shapes,snakes,automata,backgrounds,petri}
\tikzset{
	cross/.style={cross out, draw=black, minimum size=2*(#1-\pgflinewidth), inner sep=0pt, outer sep=0pt},
	cross/.default={3pt},
	vertex/.style={circle, draw, fill=black, inner sep=0pt, minimum width=4pt},
}

\newcommand{\drawpred}[2]{
	\draw (#1, #2) node[cross,red] {};
}
\newcommand{\drawreal}[2]{
	\draw[black!40!green] (#1, #2) circle (3pt);
}
\newcommand{\interval}[4]{
	\draw (#2, #4) node[anchor=east]{#1} -- (#3, #4);
	\draw (#2, #4-0.1) -- (#2, #4+0.1);
	\draw (#3, #4-0.1) -- (#3, #4+0.1);
}
\newcommand{\intervalp}[5]{
	\interval{#1}{#2}{#3}{#4}
	\drawpred{#5}{#4}
}

\newcommand{\intervalpr}[6]{
	\interval{#1}{#2}{#3}{#4}
	\drawpred{#5}{#4}
	\drawreal{#6}{#4}
}

\SetKw{KwLet}{let}
\SetKw{KwAnd}{and}

\newcommand{\bintervalpr}[6]{
	\binterval{#1}{#2}{#3}{#4}
	\drawpred{#5}{#4}
	\drawreal{#6}{#4}
}

\newcommand{\binterval}[4]{
	\draw[blue] (#2, #4) node[anchor=east,black]{#1} -- (#3, #4);
	\draw[blue] (#2, #4-0.1) -- (#2, #4+0.1);
	\draw[blue] (#3, #4-0.1) -- (#3, #4+0.1);
}

\bibliographystyle{plainurl}

\title{Learning-Augmented Query Policies for Minimum Spanning Tree with Uncertainty} 

\titlerunning{Learning-Augmented Query Policies for MST with Uncertainty} 

\author{Thomas Erlebach}{Department of Computer Science, Durham University, United Kingdom \and \url{https://www.cs.le.ac.uk/people/te17/} }{te17@leicester.ac.uk}{https://orcid.org/0000-0002-4470-5868}{Supported by EPSRC grant EP/S033483/2.}
\author{Murilo Santos de Lima}{\camerate{Munich, Germany} \and \url{https://www.ime.usp.br/~mslima/} }{mslima@ic.unicamp.br}{https://orcid.org/0000-0002-2297-811X}{\camerate{Work done while employed at University of Leicester, United Kingdom, and} funded by EPSRC grant EP/S033483/1.}
\author{Nicole Megow}{Faculty of Mathematics and Computer Science, University of Bremen, Germany \and \url{https://www.uni-bremen.de/en/cslog/nmegow} }{nicole.megow@uni-bremen.de}{https://orcid.org/0000-0002-3531-7644}{Supported by the German Science Foundation (DFG) under contract ME~3825/1.}
\author{Jens Schlöter}{Faculty of Mathematics and Computer Science, University of Bremen, Germany \and \url{https://www.uni-bremen.de/en/cslog/team/jens-schloeter} }{jschloet@uni-bremen.de}{https://orcid.org/0000-0003-0555-4806}{Funded by the German Science Foundation (DFG) under contract ME~3825/1.}

\authorrunning{T. Erlebach, M.S. de Lima, N. Megow, J. Schlöter} 

\Copyright{Thomas Erlebach, Murilo Santos de Lima, Nicole Megow, Jens Schlöter} 

\ccsdesc[500]{Theory of Computation~Design and analysis of algorithms}

\keywords{explorable uncertainty, queries, untrusted predictions} 


\category{} 

\relatedversion{} 




\nolinenumbers 


\newcommand{\pred}[1]{\overline{#1}}
\newcommand{\vertDOTS}{\begin{array}{lll}
		\bullet\\
		\bullet\\
		\bullet\\
	\end{array}
}


\newcommand{\jo}{\ensuremath{k^+}\xspace} 
\newcommand{\oj}{\ensuremath{k^-}\xspace} 

\newcommand{\ALG}{\mathrm{ALG}}
\newcommand{\OPT}{\mathrm{OPT}}
\newcommand{\opt}{|\OPT|}

\newcommand{\w}{\ensuremath{\overline{w}}} 

\newcommand{\ZZ}{\mathbb{Z}}

\newcommand{\RR}{\mathbb{R}}

\newcommand{\eps}{\ensuremath{\varepsilon}\xspace} 
\newcommand{\sym}{\ensuremath{\Delta}\xspace} 
\DeclareMathOperator{\EX}{\mathbb{E}}
\newcommand{\ud}{\ensuremath{D}}
\newcommand{\hs}{\ensuremath{\mathcal{H}}}
\usepackage[textsize=scriptsize]{todonotes}



\newcommand{\camera}[1]{{#1}}
\newcommand{\camerate}[1]{{#1}}

\newcommand{\red}[1]{#1}
\newcommand{\nnew}[1]{#1}
\newcommand{\tomc}[1]{#1}
\newcommand{\jnew}[1]{#1}

\graphicspath{ {figs/} }

\newtheorem{obs}[theorem]{Observation}

\EventEditors{John Q. Open and Joan R. Access}
\EventNoEds{2}
\EventLongTitle{42nd Conference on Very Important Topics (CVIT 2016)}
\EventShortTitle{CVIT 2016}
\EventAcronym{CVIT}
\EventYear{2016}
\EventDate{December 24--27, 2016}
\EventLocation{Little Whinging, United Kingdom}
\EventLogo{}
\SeriesVolume{42}
\ArticleNo{23}		

\begin{document}

\maketitle

\begin{abstract}
	We study how to utilize (possibly erroneous) predictions in a model for computing under uncertainty in which an algorithm can query unknown data. Our aim is to minimize the number of queries needed to solve the minimum spanning tree problem, a fundamental combinatorial optimization problem that has been central also to the research area of explorable uncertainty. For all integral $\gamma\ge 2$, we present algorithms that are $\gamma$-robust and $(1+\frac{1}{\gamma})$-consistent, meaning that they use at most $\gamma\OPT$ queries if the predictions are arbitrarily wrong and at most $(1+\frac{1}{\gamma})\OPT$ queries if the predictions are correct, where $\OPT$ is the optimal number of queries for the given instance. Moreover, we show that this trade-off is best possible. Furthermore, we argue that a suitably defined \emph{hop distance} is a useful measure for the amount of prediction error and design algorithms with performance guarantees that degrade smoothly with the hop distance. We also show that the predictions are PAC-learnable in our model. Our results demonstrate that untrusted predictions can circumvent the known lower bound of~$2$, without any degradation of the worst-case ratio.  To obtain our results, we provide new structural insights for the minimum spanning tree problem that might be useful in the context of 
		\nnew{query-based algorithms} regardless of predictions. \nnew{In particular, we generalize the concept of witness sets---the key to lower-bounding the optimum---by proposing novel global witness set structures and completely new ways of adaptively using those.}

\end{abstract} 

%

\section{Introduction}
We introduce learning-augmented algorithms to the area of optimization under explorable uncertainty and focus on the fundamental {\em minimum spanning tree (MST) problem under {explorable} uncertainty}. We are given a (multi)graph $G=(V,E)$ with unknown edge weights $w_e\in \RR_+$, for $e\in E$.  For each edge $e$, an uncertainty interval $I_e$ is known that contains $w_e$. A \emph{query} on edge $e$ 
reveals the true value $w_e$. The task is to determine an MST, i.e., a tree that connects all vertices of $G$, of minimum total weight w.r.t.\ the true values~$w_e$. A query set is called {\em feasible} if it reveals sufficient information to identify an MST (not necessarily its exact weight). As queries are costly, the goal is to find a feasible query set of minimum size. 

We study \emph{adaptive strategies} that make queries sequentially and utilize the results of previous steps to decide upon the next query. 
\tomc{As there} exist input instances that are impossible to solve without querying all \tomc{edges, we} evaluate our algorithms in an {\em instance-dependent} manner: For each input, we compare the number of queries made by an algorithm with the best possible number of queries \emph{for that input}, using competitive analysis. 
For a given problem instance, let $\OPT$ denote an arbitrary optimal query set \camera{(we later give a formal definition of $\OPT$)}. An algorithm is $\rho$-{\em competitive} if it executes, for any problem instance, at most $\rho \cdot {\opt}$ queries.
\camera{While MST under explorable uncertainty is not a classical online problem where the input is revealed passively over time, the query results are uncertain and, to a large degree, dictate whether decisions to query certain edges were good or not. 
\camerate{For analyzing an algorithm, it is natural to assume that the query results are determined by
an adversary.}
This gives the problem a clear online flavor and prohibits the existence of $1$-competitive algorithms even if we have unlimited running time and space~\cite{erlebach08steiner_uncertainty}. 
We note that competitive algorithms in general do not have any running time requirements, but all our algorithm run in polynomial time.}



The MST problem is among the most widely studied problems in the research area of {\em explorable uncertainty}~\cite{kahan91queries} and has been a cornerstone in the development of algorithmic approaches and lower bound techniques \cite{erlebach08steiner_uncertainty,erlebach14mstverification,erlebach15querysurvey,megow17mst,focke17mstexp,MerinoS19}. The best known deterministic algorithm for MST with uncertainty is $2$-competitive, and no deterministic algorithm can be better \cite{erlebach08steiner_uncertainty}. A randomized algorithm with competitive ratio $1.707$ is known~\cite{megow17mst}.
\jnew{Further work considers the non-adaptive problem, which has a very different flavor~\cite{MerinoS19}.
}


In this paper, we assume that an algorithm has, for each edge $e$, access to a prediction $\pred{w}_e \in I_e$ for the unknown value $w_e$. 
For example, machine learning~(ML) methods could be used to predict the value of an edge. 
{Given the tremendous progress in artificial intelligence and ML in recent decades, we can expect that those predictions are of good accuracy, but there is no guarantee and the predictions might be completely wrong.}
This lack of provable performance guarantees for ML
often causes concerns regarding how confident one can be that
an ML algorithm will perform sufficiently well in all circumstances.
We address the very natural question
whether the availability of such (ML) predictions can be exploited by query 
algorithms for computing with explorable uncertainty.
Ideally, an algorithm should perform very well if
predictions are accurate, but even if they are arbitrarily
wrong, the algorithm should not perform worse than an algorithm
without access to predictions.
To emphasize
that the predictions can be wrong, we refer to them
as \emph{untrusted predictions}. 

We note that the availability of \tomc{both uncertainty intervals and
untrusted predictions is} natural in many scenarios. For example,
the quality of links (measured using metrics such as throughput
and reliability) in a wireless network often fluctuates over
time within a certain interval, and 
ML methods can be used to predict the precise link quality based on time-series data
of previous link quality measurements~\cite{Abdel-NasserMOLP/20}.
The actual quality of a link can be obtained via a new measurement.
If we wish to build a minimum spanning tree using links that
currently have the highest link quality and want to minimize
the additional measurements needed, we arrive at an MST
problem with uncertainty and untrusted predictions.


We study for the first time the combination of
explorable uncertainty and untrusted predictions. Our work is inspired by the vibrant recent research trend of considering~untrusted (ML) predictions in the context
of {\em online} algorithms, a different uncertainty model where the input is revealed to an algorithm incrementally. Initial work on online caching problems~\cite{lykouris2018competitive} has initiated a vast growing line of research on  caching~\cite{rohatgi2019near,AntoniadisCEPS2020,Wei20}, rent-or-buy problems \cite{purohit2018improving,GollapudiP19,WeiZ20}, scheduling~\cite{purohit2018improving,angelopoulos2019online,LattanziLMV20,Mitzenmacher20,AzarLT21},  graph problems~\cite{KumarPSSV19,lindermayrMS22,EberleLMNS22} 	and many more.

We adopt the following notions 
introduced in~{\cite{lykouris2018competitive,purohit2018improving}}: 
An algorithm is \emph{$\alpha$-consistent} if it
is $\alpha$-competitive when the predictions are correct, and it is
\emph{$\beta$-robust} if it is $\beta$-competitive no matter how wrong the
predictions are. Furthermore, we are interested in a smooth transition
between the case with correct predictions and the case with arbitrarily
incorrect predictions. We aim for performance guarantees
that degrade gracefully with 
increasing prediction error.

Given predicted values for the uncertainty intervals, it is tempting to simply run an optimal 
algorithm under the assumption that the predictions are correct. 
This is obviously optimal with respect to consistency, but might give arbitrarily bad solutions in the case when the predictions are faulty. 
Instead of blindly 
trusting the predictions, we need more sophisticated strategies to be robust against prediction errors. This requires new lower bounds on an optimal solution, new structural \tomc{insights,} and new algorithmic techniques.


\paragraph*{Main results} 
In this work, we show that, in the setting of explorable uncertainty,
it is in fact possible to exploit ML predictions of the uncertain
values and improve the performance of a query strategy when the predictions are good, while at the same time guaranteeing a strong bound on the worst-case performance even when the predictions
are arbitrarily bad.


We give algorithms for the MST problem with uncertainty that are parameterized by a hyperparameter $\gamma$ that reflects the user's confidence in the accuracy of the predictor. 
For any integral $\gamma\ge 2$, we present \tomc{a} $(1+\frac{1}{\gamma})$-consistent and $\gamma$-robust algorithm,
and show that this is the best possible trade-off between consistency and robustness. 
In particular, for $\gamma=2$, we obtain a $2$-robust, $1.5$-consistent algorithm.
 {It is worth noting that this algorithm achieves the
improved competitive ratio of $1.5$ 
for
accurate predictions
while maintaining the worst-case ratio of~$2$. This is in contrast
to 
many learning-augmented online algorithms where the exploitation
of predictions usually incurs an increase in the worst-case
ratio (e.g., \cite{purohit2018improving,AntoniadisGKK20}).} 

Our main result is a \camera{second and different} algorithm with a more fine-grained performance analysis that obtains a guarantee that improves with the accuracy of the predictions.
Very natural, simple error measures such as the number of inaccurate predictions or the $\ell_1$-norm of the difference between predictions and true values turn out to prohibit any reasonable error-dependency.
Therefore, we propose an error measure, called \emph{hop distance} $k_h$, that takes structural insights about uncertainty intervals into account
and may also be useful for other problems in computing with uncertainty and untrusted predictions.
We give a precise definition of this error measure \tomc{later. We} also show in~\Cref{app:learning} that the predictions are efficiently PAC-learnable with respect to~$k_h$. 
Our main result is a learning-augmented algorithm with a competitive ratio with a linear error-dependency $\min\{(1+ \frac{1}{\gamma}) + \frac{5 \cdot k_h}{|\OPT|}, \gamma+1 \}$, for any integral $\gamma \geq 2$. 
\jnew{All our algorithms have polynomial running-times.}
\nnew{We describe our techniques and highlight their novelty in the following section.}

{The integrality requirement for $\gamma$ comes from using $\gamma$ to determine set sizes and can be removed by randomization at the cost of a slightly worse consistency guarantee; see \Cref{app:rational-gamma}. 

\paragraph*{Further related work}

There is a long history of research on the tradeoff between exploration and exploitation when coping with uncertainty in the input data. Often, stochastic models are assumed, e.g., in work on multi-armed bandits~\cite{Thompson33,BubeckC12,GittinsGW11-book}, Weitzman's Pandora's box 
\cite{Weitzman1979}, and query-variants of combinatorial optimization problems; {see, e.g.,   
\cite{singla2018price,gupta2019markovian,yamaguchi18ipqueries} and many more. 
In our work, we assume no knowledge of stochastic information and aim for robust algorithms that perform well even in a worst case. 

The line of research on 
explorable uncertainty has been initiated by Kahan~\cite{kahan91queries} in the context of selection problems.
Subsequent work addressed finding the~$k$-th smallest value in a set of uncertainty intervals~\cite{gupta16queryselection,feder03medianqueries}, 
caching problems 
\cite{olston2000queries}, computing a function value~\cite{khanna01queries}, sorting~\cite{HalldorssonL21-sorting}, and classical combinatorial optimization problems. Some of the major aforementioned results on the MST problem under explorable uncertainty have been extended to general matroids~\cite{erlebach16cheapestset,megow17mst,MerinoS19}. Further problems that have been studied are the shortest path problem~\cite{feder07pathsqueires}, the knapsack problem~\cite{goerigk15knapsackqueries} and scheduling problems~\cite{DurrEMM20,arantes18schedulingqueries,albersE2020,AlbersE2021,BampisDKLP21}. 
Although optimization under explorable uncertainty has been studied mainly in an adversarial model, recently first results have been obtained for stochastic variants for sorting~\cite{ChaplickHLT21} and selection type problems (hypergraph orientation)~\cite{BampisDEdLMS21}.


There is a significant body of work on computing in models where information
about a hidden object can be accessed only via queries. The 
hidden object can, for example, be a function, a matrix, or a graph. 
In the graph context, property testing~\cite{Goldreich2017} has been studied extensively and there are many more works, see~\cite{Mazzawi2010,Beame2018,Rubinstein2018,Chen2020,AssadiCK21,Nisan2021}. The bounds on the number of queries made by an algorithm are typically  absolute (as a function of the input) and the resulting correctness guarantees are probabilistic. This is very different from our work, where we aim for a comparison to the minimum number of queries needed for the given graph.


\section{Overview of techniques and definition of error measure}
\label{sec:overview}

We assume that each uncertainty interval is either {\em open}, $I_e = (L_e,U_e)$,  or {\em trivial}, $I_e = \{w_e\}$, \nnew{and we refer to edge $e$ as {\em non-trivial} or {\em trivial}, respectively}; a standard 
assumption to avoid a simple lower bound of~$|E|$ on the competitive ratio \cite{gupta16queryselection,erlebach08steiner_uncertainty}.

\camera{Before we give an overview of the used techniques, we formally define feasible and optimal query sets.
We say that a query set $Q \subseteq E$ is \emph{feasible} if there exists a set of edges $T \subseteq E$ such that $T$ is an MST for the true values $w_e$ of all $e \in Q$ and \emph{every possible} combination of edge weights in $I_e$ for the unqueried edges $e \in E\setminus Q$.
That is, querying a feasible query set $Q$ must give us sufficient information to identify a spanning tree $T$ that is an MST for the true values no matter what the true values of the unqueried edges $E\setminus Q$ actually are.
We call a feasible query set $Q$ \emph{optimal} if it has minimum cardinality $|Q|$ among all feasible query sets.
Thus, the optimal solution depends only on the true values and not on the predicted values. 
}

\camera{As Erlebach \camerate{and Hoffmann\cite{erlebach14mstverification}} give a polynomial-time algorithm that computes an optimal query set under the assumption that all query results are known upfront, we can use their algorithm to compute the optimal query set under the assumption that all predicted values match the actual edge weights and query the computed set in an arbitrary order. If the predicted values are indeed correct, this yields a $1$-consistent algorithm.}
However, such an algorithm may have an arbitrarily bad performance if the predictions are incorrect. Similarly, the known deterministic $2$-competitive algorithm for the MST problem with uncertainty (without predictions)~\cite{erlebach08steiner_uncertainty} is $2$-robust and $2$-consistent. The known lower bound of $2$ rules out any robustness factor less than $2$. 
{It builds on the following simple example with two intersecting intervals $I_a,I_b$ that are candidates for the largest edge weight in a cycle. No matter which interval a deterministic algorithm queries first, say $I_a$, the realized value could be $w_a\in I_a\cap I_b$, which requires a second query. If the adversary chooses $w_b\notin I_a\cap I_b$, querying just $I_b$ would have been sufficient to identify the interval with larger true value.} 
\camera{See also~\cite[Example 3.8]{erlebach08steiner_uncertainty} and~\cite[Section 3]{megow17mst} for an illustration of the lower bound example.}

\medskip \noindent {\bf Algorithm overview}\quad  
We aim for $(1+\frac{1}{\gamma})$-consistent and $\gamma$-robust algorithms for each integral $\gamma \ge 2$. 
The algorithm
proceeds in two phases: The first phase
runs as long as there are prediction mandatory edges, i.e., edges that must be contained in every
feasible query set under the assumption that the predictions are correct; \camera{we later give a formal characterization of such edges}. In this phase, 
we exploit the existence of those edges and their properties to execute queries with strong local guarantees, i.e., 
each feasible query set contains a large portion of our queries.
For the second phase, we observe and exploit that the absence of prediction mandatory queries implies that the predicted optimal solution is a minimum vertex cover in a bipartite auxiliary graph.
The challenge here is that the auxiliary graph can change
with each wrong prediction. To obtain an error-dependent guarantee (our error
measure $k_h$ is discussed below) 
we need to
adaptively query a dynamically changing minimum vertex cover.

\medskip \noindent {\bf \nnew{Novel} techniques}\quad 
During the first phase, we 
\nnew{generalize} the classical witness set analysis.
In computing with explorable uncertainty,
the concept of {\em witness sets} is crucial for comparing the query set of an algorithm with an optimal solution (\nnew{a way of lower-bounding}). 
A 
witness set~\cite{bruce05uncertainty} is a set of elements for which we can guarantee that any feasible solution must query at least {\em one} of these elements.
Known algorithms for the MST problem without predictions~\cite{erlebach08steiner_uncertainty,megow17mst} essentially follow the algorithms of Kruskal or Prim and only identify witness sets of size $2$ in the cycle or cut that is currently under consideration. 
Querying disjoint witness sets of size~$2$ (witness pairs)
ensures $2$-robustness but does not lead to 
an improved consistency.
 
In our first phase, we extend this concept by considering \emph{strengthened} witness
sets of three elements such that any feasible query set must contain at least two of them.
Since we cannot always find strengthened witness sets based on structural
properties alone (otherwise, there would be a $1.5$-competitive algorithm for
the problem without predictions), we identify such sets under the assumption
that the predictions are correct.
Even after identifying such elements, the algorithm needs to query them in a careful order: \camera{if} the predictions are wrong, we lose the guarantee on the elements, and querying all of them might violate the robustness.
In order to identify strengthened witness sets, we provide new, more global criteria to identify additional witness sets (of size two)
beyond the current cycle or cut. 
These new ways to identify witness sets are a major contribution that may be of
independent interest regardless of predictions. 
During the first phase, we 
repeatedly query $\gamma-2$ prediction mandatory edges
together with a strengthened witness set, 
which
ensures $(1+\frac{1}{\gamma})$-consistency.
We query the elements in a carefully selected order while adjusting for errors to ensure $\gamma$-robustness.


For the second phase, we observe that the predicted optimal solution of the remaining instance is a minimum vertex cover $VC$ in a bipartite auxiliary graph representing the structure of \emph{potential} witness pairs (edges of the input graph correspond to vertices of the auxiliary graph).
For instances with this property, we aim for $1$-consistency and $2$-robustness; the best-possible trade-off for such instances.
If the predictions are correct, each edge of the auxiliary graph is a witness pair.
However, if a prediction error is observed when a vertex of $VC$ is queried, the auxiliary graph
changes.
This means that some edges of the original auxiliary graph are not actually witness pairs. Indeed, the size of a minimum vertex cover can increase and decrease and does not constitute a lower bound on $|\OPT|$; see \Cref{app:mst-optimal-tradeoff}.

If we only aim for consistency and robustness, we can circumvent this problem by selecting a distinct matching partner $h(e) \not\in VC$ for each $e \in VC$ 
applying \emph{K\H{o}nig-Egerv\'ary}'s Theorem (duality of maximum matchings and minimum vertex covers in bipartite graphs, see e.g.~\cite{Biggs1986}).
By querying the elements of $VC$ in a 
carefully chosen order until a prediction error is observed for the first time, we can guarantee that $\{e,h(e)\}$ is a witness set for each $e \in VC$ that is already queried.
In the case of an error, this allows} us to extend the previously queried elements to disjoint witness pairs to guarantee $2$-robustness.
Then, we can switch to an arbitrary (prediction-oblivious)
$2$-competitive algorithm for the remaining queries. 

If we additionally aim for an
error-sensitive guarantee, however, handling the dynamic changes to the auxiliary
graph, its minimum vertex cover $VC$ and matching $h$ requires substantial additional work,
and we see overcoming this challenge as our main contribution.
In particular, querying the partner $h(e)$ of each already queried $e \in VC$ in case of an error might be too expensive for the error-dependent guarantee. However, if we do not query these partners, the changed instance still depends on them, and if we charge against such a partner multiple times, we can lose the robustness.
Our solution is based on an elaborate
charging/counting scheme and involves:
\begin{itemize}
	\item keeping track of matching partners of already queried elements
of $VC$;
\item updating the matching and $VC$ using an augmenting path method to bound the number of elements that are charged against multiple times in relation to the 
prediction error;
\item and querying the partners of previously queried edges (and their new matching partners) as soon as they become endpoints of a newly matched edge, in order to prevent dependencies between the (only partially queried) witness sets of previously queried edges.
\end{itemize}

The error-sensitive algorithm achieves a 
competitive ratio
of $1+\frac{1}{\gamma}+\frac{5 k_h}{|\OPT|}$, at the price of a slightly increased
robustness of~$\gamma+1$ instead of $\gamma$.


\medskip \noindent {\bf Hop distance as error metric}\quad  
When we aim for a fine-grained performance analysis giving guarantees that depend on the quality of predictions, we need a metric to measure the prediction error. 
A very natural, simple error measure is the number of inaccurate predictions $k_\#=|\{e \in E\,|\, w_e \not= \w_e\}|$. However, we can show that even for $k_{\#} = 1$  the competitive ratio cannot be better than the known bound of~$2$ (see Lemma~\ref{lem:lb_wrong_predictions} in Appendix~\ref{app:prelim}). The reason for the weakness of this measure is that it completely ignores the interleaving  structure of intervals. Similarly, an $\ell_1$ error metric such as
$\sum_{e\in E}|w_e - \w_e|$ would not be meaningful because only
the \emph{order} of the values and the interval endpoints matters for our problems.

We propose a refined error measure, which we call {\em hop distance}. It is very intuitive even though it requires some technical care to make it precise. If we consider only a single predicted value $\w_e$ for some $e \in E$, then, in a sense, this value predicts the relation of the true value $w_e$ to the intervals of edges $e' \in E\setminus \{e\}$.
In particular, w.r.t.~a fixed $e' \in E\setminus \{e\}$, the value $\w_e$ predicts whether $w_e$ is left of $I_{e'}$ ($\w_e \le L_{e'}$), right of $I_{e'}$ ($\w_e \ge U_{e'}$), or contained in $I_{e'}$ ($L_{e'} < \w_e < U_{e'}$).
Interpreting the prediction $\w_e$ in this way, the prediction is \enquote{wrong} (w.r.t.~a fixed $e' \in E\setminus \{e\}$) if the predicted relation of the true value $w_e$ to interval $I_{e'}$ is not actually true, e.g., $w_e$ is predicted to be left of $I_{e'}$ ($\w_e \le L_{e'}$) but the actual $w_e$ is either contained in or right of $I_{e'}$ ($w_e > L_{e'}$).
Formally, we define the function $k_{e'}(e)$ that indicates whether the predicted relation of $w_e$ to $I_{e'}$ is true ($k_{e'}(e) = 0$) or not ($k_{e'}(e) = 1$).
With the prediction error $\jo(e)$ for a single $e \in E$, we want to capture the number of relations between $w_e$ and intervals $I_{e'}$ with $e' \in E\setminus \{e\}$ that are not accurately predicted.
Thus, we define $\jo(e) = \sum_{e' \in E\setminus \{e\}} k_{e'}(e)$.
For a set of edges $E' \subseteq E$, we define $\jo(E') = \sum_{e \in E'} \jo(e)$.
Consequently, with the error for the complete instance we want to capture the total number of wrongly predicted relations and, therefore, define it by $k_h=\jo(E)$.
We call this error measure $k_h$ the {\em hop distance}; see Figure~\ref{fig_error_ex} for an illustration.

Symmetrically, we can define $\oj(e) = \sum_{e' \in E\setminus \{e\}} k_{e}(e')$ and $\oj(E') = \sum_{e \in E'} \oj(e)$ for subset $E' \subseteq E$. 
Then $\jo(E) = k_h = \oj(E)$ follows by reordering the summations. 


\begin{figure}[t]
	\centering
	\begin{subfigure}[l]{0.2\textwidth}
	\scalebox{0.6}{\input{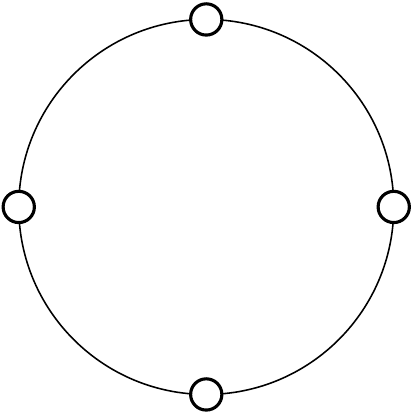_t}}
	\end{subfigure}
	\begin{subfigure}[l]{0.25\textwidth}
		\begin{tikzpicture}[line width = 0.3mm, scale = 0.9, transform shape]	
		\intervalpr{$I_{e_1}$}{0}{4}{1.5}{1}{2.75}	
		\intervalpr{$I_{e_2}$}{1.5}{6}{2.25}{4.5}{2}	
		\intervalpr{$I_{e_3}$}{2.5}{6}{3}{4.5}{5.5}							
		\intervalpr{$I_{e_4}$}{3.1}{6}{3.75}{3.25}{3.75}		
		
		\begin{scope}[on background layer]
		\draw[line width = 0.2mm,,lightgray] (0,1) -- (0,4);
		\draw[line width = 0.2mm,,lightgray] (1.5,1) -- (1.5,4);
		\draw[line width = 0.2mm,,lightgray] (2.5,1) -- (2.5,4);
		\draw[line width = 0.2mm,,lightgray] (3.1,1) -- (3.1,4);
		\draw[line width = 0.2mm,,lightgray] (6,1) -- (6,4);
		\draw[line width = 0.2mm,,lightgray] (4,1) -- (4,4);
		\end{scope}
		
		\draw[decoration= {brace, amplitude = 5 pt, aspect = 0.5}, decorate] (2.75,1.3) -- (1,1.3);
		\node[] (l1) at (1.85,0.9) {\scalebox{0.75}{$\jo(e_1)=2$}};
		
		\draw[decoration= {brace, amplitude = 5 pt, aspect = 0.5}, decorate] (4.5,2.1) -- (2,2.1);
		\node[] (l1) at (3.2,1.7) {\scalebox{0.75}{$\jo(e_2)=3$}};
		
		\draw[decoration= {brace, amplitude = 2.5 pt, aspect = 0.5}, decorate] (5.5,2.85) -- (4.5,2.85);
		\node[] (l1) at (4.95,2.5) {\scalebox{0.75}{$\jo(e_3)=0$}};
		
		\draw[decoration= {brace, amplitude = 1 pt, aspect = 0.5}, decorate] (3.75,3.6) -- (3.25,3.6);
		\node[] (l1) at (3.5,3.3) {\scalebox{0.75}{$\jo(e_4)=0$}};

		\node[] (l1) at (4.75,1.5) {\scalebox{0.75}{$\oj(e_1)=1$}};
		\node[] (l1) at (6.75,2.25) {\scalebox{0.75}{$\oj(e_2)=1$}};
		\node[] (l1) at (6.75,3) {\scalebox{0.75}{$\oj(e_3)=2$}};
		\node[] (l1) at (6.75,3.75) {\scalebox{0.75}{$\oj(e_4)=1$}};
		\end{tikzpicture}
	\end{subfigure}
	\hspace*{3.5cm}
	\begin{subfigure}[r]{0.25\textwidth}
	\vspace*{-0.5cm}
	$$
	k_h = \sum_{i=1}^{4} \jo(e_i) = 5
	$$
	$$
	k_h = \sum_{i=1}^{4} \oj(e_i) = 5
	$$
	\end{subfigure}
	
%
\caption{
 Example of a single cycle (left) with uncertain edge weights from intersecting intervals  
$I_{e_1},I_{e_2},I_{e_3},I_{e_4}$ (middle). Circles illustrate true values and crosses illustrate the predicted values. 
}
\label{fig_error_ex}
\end{figure}

\section{Structural results}
\label{sec:mst:prelim}

In this section, we introduce some known concepts and prove new structural results on MST 
under explorable uncertainty, \tomc{which} we use later to design learning-augmented algorithms. 

\medskip \noindent \textbf{Witness sets and mandatory queries}\ \ Witness sets are the key to  the analysis of query algorithms. They allow for a comparison of an algorithm's query set to an optimal solution.
A subset $W \subseteq E$ is a \emph{witness set} if $W \cap Q \not= \emptyset$ for all feasible query sets $Q$.
An important special case are witness sets of cardinality one, i.e., edges that are part of every feasible query set.
We call such edges \emph{mandatory}. 
Similarly, we call edges that are mandatory under the assumption that the predictions are correct {\em prediction mandatory}.

\camera{For an example, consider Figure~\ref{fix-ex-mandatory}. In the example, we see the uncertainty intervals, predicted values and true values of four edges that form a simple cycle. We can observe that both $e_1$ and $e_2$ are mandatory for this example. 
To see this, assume that $e_1$ is not mandatory. Then, there must be a feasible query set $Q$ with $e_1 \not\in Q$ for the instance, which implies that $Q = \{e_2,e_3,e_4\}$ must be feasible.
But even after querying $Q$ to reveal the true values of $e_2$, $e_3$ and $e_4$, it still depends on the still unknown true value of $e_1$ whether there exists an MST $T$ with $e_1 \in T$ (only if $w_{e_1} \le w_{e_2}$) and/or $e_1 \not\in T$ (only if $w_{e_2} \le w_{e_1}$).
Even after querying $Q$ there is no spanning tree $T$ that is an MST for each possible edge weight in $I_{e_1}$ of the unqueried edge $e_1$ and, thus, $Q$ is not feasible.
This implies that $e_1$ is mandatory, and we can argue analogously that $e_2$ is mandatory as well. 

To argue whether an edge is prediction mandatory, on the other hand, we assume that all queries reveal the predicted values as true values. 
Under this assumption, a query to $e_1$ in the example would reveal a value that is larger than all upper limits $U_{e_i}$ with $i \in \{2,3,4\}$, which implies that $e_1$ cannot be part of any MST and that $T=\{e_2,e_3,e_4\}$ is an MST no matter what the true values of $e_2$, $e_3$ and $e_4$ actually are.
Therefore, under the assumption that all predicted values match the true values, $Q=\{e_1\}$ is a feasible query set and, thus, $e_2$ is not prediction mandatory despite being mandatory.
However, we can use a similar argumentation as above to argue that $e_1$ is also prediction mandatory.
}

\begin{figure}[tb]
	
		\centering
	\begin{subfigure}[l]{0.45\textwidth}
		
		\centering
		\scalebox{0.6}{\input{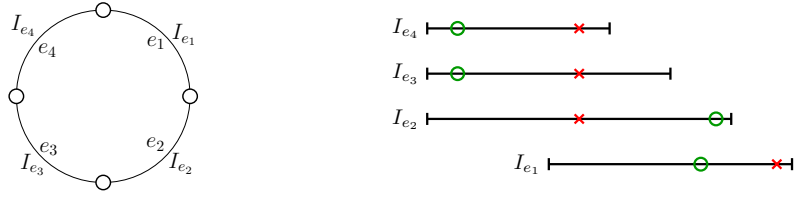}}
	\end{subfigure}
	\begin{subfigure}[r]{0.45\textwidth}
		\centering
		\begin{tikzpicture}[line width = 0.3mm, scale = 0.8, transform shape]
		\intervalpr{$I_{e_1}$}{2}{6}{1.5}{5.75}{4.5}

		\intervalpr{$I_{e_2}$}{0}{5}{2.25}{2.5}{4.75}
		
		\intervalpr{$I_{e_3}$}{0}{4}{3}{2.5}{0.5}						
		\intervalpr{$I_{e_4}$}{0}{3}{3.75}{2.5}{0.5}				
		
		\end{tikzpicture}
	\end{subfigure}
	\caption{
	 \camera{Example of a single cycle (left) with uncertain edge weights from intersecting intervals  
	$I_{e_1},I_{e_2},I_{e_3},I_{e_4}$ \camerate{(right)}. Circles illustrate true values and crosses illustrate the predicted values. For the example, $\{e_1,e_2\}$ is the set of all mandatory edges and $\{e_1\}$ is the set of all prediction mandatory edges.}
	}
	\label{fix-ex-mandatory}
\end{figure}


\camera{We continue by giving properties that allow us to identify (prediction) mandatory edges. To that end, let}
the \emph{lower limit tree} $T_L \subseteq E$ be an MST for
{values} $w^L$ with $w^L_e = L_e + \epsilon$ for an infinitesimally small $\epsilon > 0$. 
Analogously, let the \emph{upper limit tree} $T_U$ be an MST for {values}  $w^U$ with $w^U_e = U_e - \epsilon$.
This concept has been introduced in \cite{megow17mst} to identify mandatory queries; it is shown that 
every non-trivial $e \in T_L \setminus T_U$ is mandatory.
Thus, we may repeatedly query edges in 
$T_L \setminus T_U$ until $T_L = T_U$ without worsening the competitive ratio. 
We can extend this preprocessing to achieve uniqueness for $T_L$ and $T_U$ and, thus, may assume unique $T_L = T_U$.
	\begin{restatable}{lemma}{LemMSTPreprocessing}
		\label{mst_preprocessing}
		By querying only mandatory elements we can obtain an instance with $T_L = T_U$ such that $T_L$ and $T_U$ are the unique lower limit tree and upper limit tree, respectively.
	\end{restatable}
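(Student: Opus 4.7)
The plan is to first apply the quoted mandatory-query rule to reduce the instance to one on which some common tree is simultaneously a lower- and upper-limit MST, and then to enforce uniqueness of $T_L$ and $T_U$ via a canonical tie-breaking that preserves this equality. I iterate the quoted rule in Stage~1: while some choice of $T_L$ and $T_U$ has a non-trivial edge in its symmetric difference, I query such an edge, which is mandatory by the quoted result. Each query converts a non-trivial edge into a trivial one, so the loop terminates.

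To see that at termination some choice satisfies $T_L = T_U$, I pick any $T_L, T_U$ and, while they differ, pick $e \in T_L \setminus T_U$, which must be trivial. By the strong matroid exchange property there is $e' \in T_U \setminus T_L$ such that both $T_L - e + e'$ and $T_U - e' + e$ are spanning trees. MST optimality then yields $w^L_{e'} \ge w^L_e = w_e$ and $w^U_{e'} \le w^U_e = w_e$; if $e'$ were non-trivial, these inequalities would force $L_{e'} \ge w_e$ and $U_{e'} \le w_e$, contradicting $L_{e'} < U_{e'}$. Hence $e'$ is trivial with $w_{e'} = w_e$, and swapping $e$ and $e'$ strictly shrinks $|T_L \triangle T_U|$ while preserving the MST property for both weightings; iteration reaches $T_L = T_U$.

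For uniqueness, I fix an arbitrary total order $\prec$ on $E$ and let $T_L$ (resp.\ $T_U$) be the $\prec$-lexicographically smallest MST for $w^L$ (resp.\ $w^U$); both are unique by a standard cycle-exchange argument. To verify that the tie-broken trees still coincide, suppose $e \in T_L \setminus T_U$; the same exchange yields $e' \in T_U \setminus T_L$ with $w^L_{e'} \ge w^L_e$ and $w^U_{e'} \le w^U_e$, and by the reasoning above both edges are trivial with $w_e = w_{e'}$, so $\{e,e'\}$ ties under both weightings and the $\prec$-rule selects the same edge for both trees, contradicting $e \in T_L \setminus T_U$. The main obstacle is the subtle case of two non-trivial edges sharing an $L$-value but with different $U$-values, which tie in $w^L$ yet not in $w^U$: I expect to handle it by observing that the edge with the larger $U$-value can be placed into $T_L \setminus T_U$ by an appropriate choice of trees and is therefore queried during Stage~1, so no such mismatched configuration survives into the uniqueness analysis.
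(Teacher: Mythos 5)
Your overall strategy is genuinely different from the paper's. The paper's proof is constructive and local: starting from $T_L=T_U$, it examines each non-tree edge~$f$ and tree edge~$l$ separately, swaps it into the other tree when its limit is not uniquely extremal, notes that the swap produces a singleton symmetric difference, queries that edge via the cited rule, and deletes or contracts trivial edges that are tied. You instead maintain the global invariant ``no choice of $T_L,T_U$ has a non-trivial edge in the symmetric difference,'' prove existence of a common tree by iterated exchange, and then argue uniqueness by canonical ($\prec$-lexicographic) tie-breaking plus a strong-exchange contradiction. Your Stage~2 exchange argument is correct: once the invariant holds, both $e$ and its exchange partner $e'$ are trivial with equal weight, so $T_L-e+e'$ and $T_U-e'+e$ are also MSTs, forcing $e\prec e'$ and $e'\prec e$ simultaneously.

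There are, however, two gaps. The first is one you flag yourself: the ``main obstacle'' paragraph is left as an expectation rather than an argument. In fact, this case cannot survive Stage~1 and your Stage~2 argument already dispenses with it: the Stage~1 invariant implies that any $e\in T_L\setminus T_U$ for the lex-smallest trees is trivial, so two non-trivial edges with equal $L$-values and differing $U$-values never appear in that symmetric difference. Moreover, when a non-trivial $e$ lies in $T_U\setminus T_L$, strong exchange produces a partner $e'\in T_L\setminus T_U$ with $w^L_{e'}\le w^L_e<w^U_e\le w^U_{e'}$, which forces $e'$ to be non-trivial as well; so it suffices to query only from $T_L\setminus T_U$, where the cited mandatoriness result applies, and you should state this rather than invoking the quoted rule for the whole symmetric difference.

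The second gap is more substantive: the lemma asserts that after the procedure $T_L$ and $T_U$ are \emph{the unique} lower and upper limit trees, not merely that a canonical choice of them coincides. Your lex-smallest trees are unique as a canonical selection, but the underlying instance can still admit several lower limit trees---for example, two trivial parallel edges of equal weight tie in $w^L$, and no query can break that tie. This distinction matters downstream: the proof of Lemma~\ref{lemma_mst_tree_change} uses that each $e\in T_L$ has the \emph{strictly} unique minimal lower limit in its cut $X_e$, a property a lex-smallest tree need not have. The paper resolves this by deleting or contracting trivial edges that are provably maximal in a cycle or minimal in a cut, and then querying the non-trivial tied edge via a singleton symmetric difference. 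Your Stage~1 invariant does, in fact, already rule out ties among non-trivial edges in any cut or cycle (otherwise a swap would expose a non-trivial edge in some symmetric difference); the residual ties are purely among trivial edges, and to match the lemma you must explicitly dispose of them by deletion/contraction rather than by choosing a canonical tree.
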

	Given an instance with unique $T_L=T_U$, we observe that each $e  \in T_L$ that is not part of the MST for the true values is mandatory.
	Similarly, each $e \not\in T_L$ that is part of the MST for the true values is mandatory as well. 
	A stronger version of this observation is as follows. 

	\begin{restatable}{lemma}{mstTreeChange}
		\label{lemma_mst_tree_change}
		Let $G$ be an instance with unique $T_L=T_U$ and let $G'$ be an instance with unique $T_L' = T_U'$ obtained from $G$ by querying set $Q$, then $e \in T_L \sym T_L' \jnew{= (T_L \setminus T'_L) \cup (T'_L \setminus T_L)}$ implies $e \in Q$.
	\end{restatable}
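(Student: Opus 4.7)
The plan is to prove the contrapositive: for every $e \notin Q$, we have $e \in T_L \iff e \in T_L'$. Equivalently, assume for contradiction that there exists $e \in T_L \sym T_L'$ with $e \notin Q$, and split into two cases. The key leverage point is the hypothesis $T_L = T_U$ in $G$ (and $T_L' = T_U'$ in $G'$), because querying an edge $f$ shrinks its interval from $(L_f,U_f)$ to $\{w_f\}$, which can only \emph{increase} $w^L_f$ (from $L_f+\epsilon$ to $w_f$) and only \emph{decrease} $w^U_f$ (from $U_f-\epsilon$ to $w_f$). So depending on which direction of the symmetric difference we are in, I would use whichever of $T_L$ or $T_U$ makes these weight movements work in our favor.

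For the case $e \in T_L \setminus T_L'$ with $e \notin Q$, I would take the fundamental cycle $C$ of $e$ with respect to $T_L'$, and let $P = C \setminus \{e\}$ be the $T_L'$-path between the endpoints of $e$. Since $T_L'$ is the unique lower limit tree of $G'$, every $f \in P$ satisfies $w^L_f(G') < w^L_e(G')$. I would now transfer this strict inequality to $G$: since $e \notin Q$, $w^L_e(G)=w^L_e(G')$; for $f \in P \setminus Q$ weights are unchanged; and for $f \in P \cap Q$ we have $w^L_f(G)$ equal to either $L_f+\epsilon$ (non-trivial) or $w_f$ (trivial), which in either case is $\le w_f = w^L_f(G')$ (picking $\epsilon$ small). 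Hence $w^L_f(G) \le w^L_f(G') < w^L_e(G) $ for every $f \in P$, so $e$ is the strictly heaviest edge of the cycle $C$ in $G$ under $w^L(G)$. By the cycle property applied to the unique MST $T_L$, this forces $e \notin T_L$, contradicting the case assumption.

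For the symmetric case $e \in T_L' \setminus T_L$ with $e \notin Q$, I would use $T_U$ instead of $T_L$, invoking $T_L = T_U$ to conclude $e \notin T_U$. Take the fundamental cycle $C$ of $e$ w.r.t.\ $T_U$ in $G$; by uniqueness of $T_U$, $w^U_f(G) < w^U_e(G)$ for every $f$ on the $T_U$-path between the endpoints of $e$. This time I would push the inequality \emph{forward} to $G'$: again $e \notin Q$ gives $w^U_e(G') = w^U_e(G)$, while for any $f \in Q$ the new value $w^U_f(G') = w_f \le U_f - \epsilon = w^U_f(G)$, and for $f \notin Q$ the weight is unchanged. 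Hence $w^U_f(G') \le w^U_f(G) < w^U_e(G) = w^U_e(G')$ on the whole path, so $e$ is strictly heaviest on $C$ in $G'$ under $w^U(G')$, which by the cycle property contradicts $e \in T_U' = T_L'$.

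The main obstacle I anticipate is the asymmetry between $w^L$ and $w^U$ together with the infinitesimal $\epsilon$: a naive argument that only uses $T_L$ fails in the second case, because on queried edges $w^L_f$ moves in the wrong direction to preserve strict inequality. The structural hypothesis $T_L=T_U$ is exactly what resolves this, letting us pick the right tree per case. The $\epsilon$ issue I would handle by the standard convention that $\epsilon$ is smaller than every positive gap $w_f - L_f$ and $U_f - w_f$ appearing in the instance, so that the comparisons $L_f + \epsilon \le w_f$ and $w_f \le U_f - \epsilon$ hold strictly wherever needed.
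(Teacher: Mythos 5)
Your proof is correct, and the core monotonicity idea (querying an edge can only raise its $w^L$-value and lower its $w^U$-value, so an unqueried edge retains its extremal status) is exactly the mechanism the paper relies on. Your second case ($e\in T_L'\setminus T_L$) matches the paper's argument essentially verbatim: pass to $T_U = T_L$, observe $e$ is the strict maximum on its fundamental cycle in $T_U\cup\{e\}$, and push the $w^U$-inequalities forward into $G'$ to contradict $e\in T_U'$.

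Your first case takes a genuinely dual route. The paper argues in $G$ with the \emph{cut} property: since $e\in T_L$ and $T_L$ is unique, $e$ is the strict minimum (under $w^L$) in the cut $X_e$ induced by $T_L\setminus\{e\}$; querying $X_e\setminus\{e\}$ only raises those lower limits, so $e$ stays strictly minimal in $G'$, forcing $e\in T_L'$. You instead start in $G'$ with the \emph{cycle} property on the fundamental cycle of $e$ w.r.t.\ $T_L'$, where $e$ is strict maximum, and pull the inequalities backward into $G$ (using $w^L_f(G)\le w^L_f(G')$ for all $f$ and $w^L_e(G)=w^L_e(G')$) to conclude $e\notin T_L$. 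Both are sound; the paper's version keeps a pleasing parallelism (cut in $G$ for the first case, cycle in $G$ for the second, both transferring $G\to G'$), whereas yours keeps the cycle property in both cases at the cost of transferring in opposite directions. The payoff is the same. Your handling of the $\epsilon$-convention and of previously-trivial edges in $Q$ is careful and correct.
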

	

	Next we establish a relation between the set $E_M \subseteq E$ of mandatory queries, the set $E_P \subseteq E$ of prediction mandatory queries, and the hop distance $k_h$.
	
	\begin{restatable}{lemma}{lemHopDist}
		\label{Theo_hop_distance_mandatory_distance}
		Consider a given problem instance $G=(V,E)$ with predicted values $\pred{w}$.
		Each $e \in E_M \sym E_P$ satisfies $\oj(e) \ge 1$.
		Consequently, $k_h \ge |E_M \sym E_P|$.
	\end{restatable}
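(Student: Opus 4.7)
Plan. I would prove this by contradiction. Suppose some $e \in E_M \sym E_P$ has $\oj(e) = 0$. The indicator $k_e(\cdot)$ is symmetric under swapping the roles of $w$ and $\w$, so WLOG $e \in E_M \setminus E_P$.

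The first step is a simple but useful restatement of mandatory-ness. Because feasibility is upward-closed (adding a query can only help), $e$ is mandatory iff the maximal query set avoiding $e$, namely $E \setminus \{e\}$, is itself infeasible; analogously for prediction-mandatory. Thus $e \notin E_P$ yields a spanning tree $T$ that serves as a prediction-MST for every completion $w_e \in I_e$ when all other edges reveal their predicted values $\w_{e'}$. My aim is to exhibit a spanning tree $T'$ serving the same role for the true values $w$, which would contradict $e \in E_M$.

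I would then inspect every fundamental cycle condition $(f,g)$ of a candidate tree (non-tree $f$, tree edge $g$ on $\text{cycle}(T+f)$). For pairs with $e \in \{f,g\}$, the condition reduces to $U_e \le w_f$ (when $e=g$) or $w_g \le L_e$ (when $e=f$), since $e$ is unqueried and ranges over $I_e$. These conditions depend only on the position of the other edge's weight relative to $I_e$, and by $\oj(e)=0$ such positions coincide under $w$ and $\w$ for every $e' \ne e$. Hence the prediction conditions at $e$-touching pairs transfer verbatim to the truth conditions. For pairs with $e \notin \{f,g\}$, one of the two edges may be unqueried (in which case the condition is interval-only and the same under both value assignments), or both are queried (in which case the prediction condition $\w_g \le \w_f$ does not directly imply $w_g \le w_f$).

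The main obstacle is precisely this last sub-case. To handle it, my plan is to let $T'$ be the MST of the instance obtained by fixing every $e' \ne e$ at its true value $w_{e'}$ and treating $e$ at its lower limit, so that $T'$ automatically satisfies every non-$e$ cycle condition under truth. It then remains to verify the $e$-related conditions for $T'$, and this is where $\oj(e)=0$ is used again: I would compare $T'$ and $T$ via a sequence of cycle-swaps along edges $e' \ne e$ whose order differs between $w$ and $\w$, and use the fact that each such edge has matching position relative to $I_e$ (by $\oj(e)=0$), so the $e$-touching fundamental cycles of $T'$ satisfy the same interval-endpoint inequalities that $T$'s $e$-touching cycles did under $\w$. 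This lets me transfer feasibility of $T$ under $\w$ to feasibility of $T'$ under $w$, contradicting $e \in E_M$. The consequence $k_h \ge |E_M \sym E_P|$ then follows immediately from $k_h = \sum_{e} \oj(e) \ge \sum_{e \in E_M \sym E_P} \oj(e) \ge |E_M \sym E_P|$. The technically delicate step is showing that the swap sequence between $T$ and $T'$ preserves the $e$-related cycle structure, which I would justify using the uniqueness established in Lemma~\ref{mst_preprocessing} together with Lemma~\ref{lemma_mst_tree_change}.
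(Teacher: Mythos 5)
Your overall frame — symmetry between $E_M\setminus E_P$ and $E_P\setminus E_M$, contradiction from $\oj(e)=0$, and the observation that the $e$-touching cycle/cut conditions depend only on the position of other edges' weights relative to $I_e$ — aligns with the paper's argument. But you then divert into a construction that is both unnecessary and, as you acknowledge, incomplete. The source of the trouble is that you try to certify $e\notin E_M$ by exhibiting a \emph{global} spanning tree $T'$ satisfying every fundamental-cycle condition, which forces you to worry about pairs $(f,g)$ with $e\notin\{f,g\}$. That worry is misplaced: since $E\setminus\{e\}$ is the query set under consideration, all of these edges are queried and have fixed (true) weights, so the MST of $G\setminus\{e\}$ automatically satisfies every non-$e$ condition. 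The only thing that decides whether $E\setminus\{e\}$ is feasible is whether $e$'s membership in the MST is the same for all completions $w_e\in I_e$, and that is exactly a single cut certificate (all $g$ in the relevant cut have $w_g\ge U_e$) or a single cycle certificate (all $g$ in the relevant cycle have $w_g\le L_e$).

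This is where the genuine gap lives. Your repair constructs $T'$ as the MST under true weights with $e$ at its lower limit and then asserts, via an unspecified cycle-swap sequence, that ``the $e$-touching fundamental cycles of $T'$ satisfy the same interval-endpoint inequalities'' as those of the prediction tree $T$. That does not follow from $\oj(e)=0$: the hypothesis controls, for each fixed $e'\ne e$, the position of $w_{e'}$ versus $\w_{e'}$ relative to $I_e$, but it says nothing about \emph{which} edges end up in the cut/cycle of $T'\setminus\{e\}$ versus $T\setminus\{e\}$, and those sets can differ. Lemmas~\ref{mst_preprocessing} and~\ref{lemma_mst_tree_change}, which you invoke for the delicate step, concern uniqueness of $T_L=T_U$ and how $T_L$ changes after queries; they do not give you the matching of fundamental cycles across the two value assignments. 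The step as proposed would not close.

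The shorter route, and the one the paper takes, is to skip the tree comparison entirely: from $e\notin E_P$ extract the cut or cycle witness (with inequalities $\w_g\ge U_e$, respectively $\w_g\le L_e$, for all $g$ in it). Since $\oj(e)=0$, each such $g$ has the same position of $w_g$ relative to $I_e$ as $\w_g$, so the \emph{same} cut or cycle is a witness under the true values, which certifies $e\notin E_M$ directly — contradiction, no auxiliary tree needed. (The paper phrases this contrapositively for $e\in E_P\setminus E_M$ — deriving a witness under truth, showing the prediction instance must violate it, and thereby producing the required $g$ with $k_e(g)=1$ — but the content is the same cut/cycle-local argument.) The final line deducing $k_h\ge|E_M\sym E_P|$ from $k_h=\sum_e\oj(e)$ is fine as you stated it.
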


	For a formal proof, we refer to the appendix. 
		Intuitively, if $e \in E_P \setminus E_M$, then it is possible to solve the instance without querying $e$.
		Thus, the relation of the true values $w_{e'}$ with $e' \in E \setminus \{e\}$ to interval $I_e$ must be such that querying $E\setminus \{e\}$ allows us to verify that $e$ is either part or not part of the MST.
		If the predicted relations of the true values $w_{e'}$ with $e' \in E \setminus \{e\}$ to interval $I_e$ were the same, then querying $E\setminus \{e\}$  would also allow us to verify that $e$ is either part or not part of the MST. 
		Thus, the predicted relation of at least one $w_{e'}$ to interval $I_e$ must be wrong.
		We can argue analogously for $e \in E_M \setminus E_P$.

	\medskip \noindent \textbf{Identifying witness sets}\ \ 
	%
	We introduce new structural properties to identify witness sets.
	Existing algorithms for MST under uncertainty~\cite{erlebach08steiner_uncertainty,megow17mst} essentially follow the algorithms of Kruskal or Prim, and only identify witness sets in the cycle or cut that is currently under consideration.
	Let $f_1,\ldots,f_l$ denote the edges in $E\setminus T_L$ ordered by non-decreasing lower limit.
	Then, $C_i$ with $i \in \{1,\ldots,l\}$ denotes the unique cycle in $T_L \cup \{f_i\}$.

	For each $e \in T_L$, let $X_e$ denote the set of edges in the cut of the two connected components of $T_L \setminus \{e\}$.
	Existing algorithms for MST under explorable uncertainty repeatedly consider (the changing) $C_1$ or $X_e$, where $e$ is the edge in $T_L$ with maximum upper limit, and identify the maximum or minimum edge in the cycle or cut by querying witness sets of size two, until the problem is solved. 
	For our algorithms, we need to identify witness sets in cycles $C_i \not= C_1$ and cuts $X_{e'} \not= X_e$.

	\begin{restatable}{lemma}{LemMstWitnessFir}
		\label{lemma_mst_witness_set_1}		
		Consider cycle $C_i$ with $i \in \{1,\ldots,l\}$. 
		Let $l_i \in C_i \setminus\{f_i\}$ such that $I_{l_i} \cap I_{f_i} \not= \emptyset$ and~$l_i$ has the largest upper limit in $C_i \setminus \{f_i\}$, then $\{f_i,l_i\}$ is a witness set. 
		If $w_{f_i} \in I_{l_i}$, then $l_i$ is mandatory.
	\end{restatable}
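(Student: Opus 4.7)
The plan is to prove both parts by contradiction, in each case constructing two valid extensions of the unqueried edge values that agree on whatever has been queried but that yield different MSTs, which contradicts feasibility.

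First I would assemble the structural facts. Since $T_L=T_U$ is the unique lower- and upper-limit tree, applying the cycle property to $C_i$ (which is $T_L \cup \{f_i\}$) for both lower and upper limits gives $L_{f_i} > L_e$ and $U_{f_i} > U_e$ for every $e \in C_i \setminus \{f_i\}$. Combined with the hypothesis that $l_i$ attains the largest upper limit in $C_i\setminus\{f_i\}$, this yields the chain $U_{f_i} > U_{l_i} \ge U_e$ for all $e \in C_i \setminus \{f_i,l_i\}$; the hypothesis $I_{l_i}\cap I_{f_i}\neq\emptyset$ furthermore gives $L_{f_i} < U_{l_i}$. In particular, for any queried $e \in C_i\setminus\{f_i\}$ we have $w_e < U_e \le U_{l_i} < U_{f_i}$.

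For the witness-set claim, I take an arbitrary feasible $Q$ with $Q \cap \{f_i,l_i\}=\emptyset$ and construct two extensions of the unqueried weights that coincide on $Q$ and on every unqueried edge in $C_i\setminus\{f_i,l_i\}$ (set near its lower limit). Extension~1 puts $w_{f_i} = L_{f_i}+\epsilon$ and $w_{l_i} = U_{l_i}-\epsilon$; Extension~2 puts $w_{f_i} = U_{f_i}-\epsilon$ and $w_{l_i} = L_{l_i}+\epsilon$. Choosing $\epsilon$ small enough that $U_{l_i}-\epsilon$ exceeds every fixed queried weight in $C_i$ and that $L_{f_i}+\epsilon < U_{l_i}-\epsilon$ (possible by $L_{f_i}<U_{l_i}$), the edge $l_i$ becomes the strict maximum of $C_i$ in Extension~1, so the cycle property forces $l_i \notin \mathrm{MST}$. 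Symmetrically, in Extension~2 the edge $f_i$ is the strict maximum of $C_i$ (using $U_{f_i}>U_{l_i}$), so the MST coincides with $T_L$, which contains $l_i$. The two MSTs differ, contradicting feasibility of $Q$.

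For the mandatory claim, assume $Q$ is feasible with $l_i \notin Q$; by part one we must have $f_i \in Q$, so $w_{f_i}$ is known, and the hypothesis $w_{f_i}\in I_{l_i}$ gives $L_{l_i} < w_{f_i} < U_{l_i}$. I vary only $w_{l_i}$: Extension~A uses $w_{l_i} = U_{l_i}-\epsilon$ and Extension~B uses $w_{l_i} = L_{l_i}+\epsilon$, with all other unqueried edges set identically near their lower limits. In Extension~A, $w_{l_i}$ dominates $w_{f_i}$ and every other weight in $C_i$ as before, so $l_i$ is excluded from the MST. In Extension~B, $w_{l_i} < w_{f_i}$ so $l_i$ is not the maximum of $C_i$; moreover, for any other fundamental cycle $C_j$ containing $l_i$, the uniqueness of $T_L$ yields $L_{f_j} > L_{l_i}$, hence $w_{f_j} > L_{l_i}+\epsilon = w_{l_i}$, so $l_i$ is not the maximum of $C_j$ either, and therefore $l_i$ belongs to the MST. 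The MSTs disagree on $l_i$, contradicting feasibility.

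The main obstacle I expect is making sure the local statements about cycle maxima really translate into globally distinct MSTs, rather than only differing on $C_i$. I handle this by exploiting the uniqueness of $T_L=T_U$: whenever the remaining unqueried weights are set near their lower limits, $T_L$ is still the unique MST of the extended instance, so the inclusion or exclusion of $l_i$ forced by the cycle property in the manipulated cycle carries over verbatim to the global MST, and no unintended change can cancel it out.
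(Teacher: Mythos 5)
Your high-level strategy — constructing two extensions of the unqueried weights, one under which every MST excludes $l_i$ and one under which every MST includes $l_i$ — is a legitimate and in fact more elementary route than the paper's, which instead restricts to the subgraphs $G_{i-1}$, invokes Lemmas~4.1 and~4.2 of \cite{megow17mst}, and relies on a nontrivial new structural observation (Observation~\ref{obs_mst_1}) about how the maximum upper-limit edge on a path in $T_L$ persists into the tree verified by $Q_{i-1}$. Your approach, if carried out correctly, would sidestep that machinery entirely.

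However, the way you close the global-vs-local gap is incorrect. Your final paragraph asserts that whenever the unqueried weights are set near their lower limits, $T_L$ remains the unique MST of the extended instance, and your Extension~2 argument (``$f_i$ is the strict maximum of $C_i$, so the MST coincides with $T_L$'') relies on this. That claim is false, because the edges in $Q$ carry their fixed \emph{true} values, not values near their lower limits. For example, a queried non-tree edge $f_j$ with $w_{f_j}$ low and a queried tree edge $e\in C_j$ with $w_e$ high can push $e$ out of the MST, so the MST need not be $T_L$ at all. The inference from ``$f_i$ is excluded'' to ``MST $= T_L$, hence $l_i$ is included'' therefore does not follow — excluding $f_i$ says nothing about $l_i$.

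The correct and local fix is available, and you essentially write it down in your mandatory-claim step but don't apply it to Extension~2: under Extension~2, $l_i$ is the \emph{strict minimum} on the fundamental cut $X_{l_i}$ of $T_L\setminus\{l_i\}$. Every $f_j\in X_{l_i}\setminus\{l_i\}$ lies outside $T_L$ with $l_i\in C_j$, so uniqueness of $T_L$ gives $L_{l_i}<L_{f_j}$; for unqueried $f_j$ (including $f_i$) the assigned weight exceeds $L_{f_j}$, and for queried $f_j$ the true weight satisfies $w_{f_j}>L_{f_j}$ since $I_{f_j}$ is open. Hence $w_{l_i}=L_{l_i}+\eps < w_{f_j}$ for all $f_j\in X_{l_i}$, which forces $l_i$ into \emph{every} MST without any claim about $T_L$ being the MST. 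The same reformulation repairs your Extension~B: state the conclusion as ``$l_i$ is strict minimum on $X_{l_i}$, hence in every MST'' rather than ``$l_i$ is not the maximum of any fundamental cycle $C_j$, therefore $l_i$ belongs to the MST'' — the latter only certifies membership in \emph{some} MST, which is not enough, since a feasible $Q$ only requires a single $T$ that works for all extensions, and that $T$ could a priori avoid $l_i$ in Extension~B if some MST there avoids $l_i$.
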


\noindent {\bf Characterization of prediction mandatory free instances\ \ }
We say an instance is \emph{prediction mandatory free} if it contains no prediction mandatory elements.
A key part of our algorithms is to transform instances into prediction mandatory free instances while maintaining a competitive ratio that
allows us to achieve the optimal consistency and robustness trade-off overall.
We give the following characterization of prediction mandatory free instances, (cf.~Figure~\ref{Ex_mst_phase1_cases}).
Then, we show that prediction mandatory free instances remain so as long as we ensure $T_L = T_U$.
	
	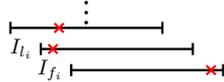
\begin{figure}[th]
			\centering
			\begin{tikzpicture}[line width = 0.3mm, scale = 0.8, transform shape]
			\intervalp{$I_{f_i}$}{-3}{-0.5}{0.3}{-0.7}
			\intervalp{$I_{l_i}$}{-3.5}{-1}{0.65}{-3.3}
			\intervalp{}{-4}{-1.5}{1}{-3.2}
			\path (-2.75, 1.25) -- (-2.75, 1.25) node[font=\LARGE, midway, sloped]{$\dots$};
			\end{tikzpicture}
		\caption{Intervals in a prediction mandatory free cycle with predictions indicated as red crosses  
		}
		\label{Ex_mst_phase1_cases}
	\end{figure}
	
	\begin{restatable}{lemma}{LemMSTPredFreeIff}
		\label{mst_pred_free_characterization}	\label{lem:remaining_predfree}
		{
			An instance $G$ is prediction mandatory free iff  
			$\pred{w}_{f_i} \ge U_{e}$ and $\pred{w}_e \le L_{f_i}$ holds for each $e \in C_i \setminus \{f_i\}$ and each cycle $C_i$ with ${i} \in \{1,\ldots,l\}$.}
		Once an instance is prediction mandatory free, it remains so even if we query further elements, as long as we maintain unique $T_L = T_U$.
	\end{restatable}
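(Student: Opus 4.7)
The plan is to prove the iff characterization first and then apply it, together with Lemma~\ref{lemma_mst_tree_change}, to the preservation claim.

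\medskip\noindent\textbf{Backward direction of the characterization.} Under the assumption that the predictions are correct, I would exhibit two disjoint feasible query sets: $Q_1 = T_L$ and $Q_2 = E \setminus T_L$. Querying $Q_1$ reveals $w_e = \pred{w}_e \le L_{f_i} < w_{f_i}$ in every cycle $C_i$ (the last strict inequality uses openness of $I_{f_i}$), so $f_i$ is strictly maximal in $C_i$ and the MST must be $T_L$. Querying $Q_2$ symmetrically gives $w_{f_i} = \pred{w}_{f_i} \ge U_e > w_e$, again confirming $T_L$ as the MST. Since $Q_1$ excludes every non-tree edge and $Q_2$ every tree edge, no edge lies in every feasible query set, so $G$ is prediction mandatory free.

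\medskip\noindent\textbf{Forward direction.} I would argue by contrapositive: if some pair $(e,f_i)$ with $e \in C_i \setminus \{f_i\}$ violates $\pred{w}_{f_i} \ge U_e$ or $\pred{w}_e \le L_{f_i}$, then one of $e, f_i$ is prediction mandatory. Using $L_e \le L_{f_i}$ (from the ordering that defines the $f_j$'s) together with the openness of the intervals, I would enumerate the four subsets of $\{e, f_i\}$ that could be queried and check which resolves the order of $w_e$ and $w_{f_i}$ under the assumption that predictions equal the true values. A short sub-case split on whether $\pred{w}_e < \pred{w}_{f_i}$ or $\pred{w}_e > \pred{w}_{f_i}$ (the latter being the case in which the predicted MST would differ from $T_L$) shows that a violation of $\pred{w}_{f_i} \ge U_e$ forces $e$ into every feasible set, while a violation of $\pred{w}_e \le L_{f_i}$ forces $f_i$ into every feasible set, yielding a mandatory edge and contradicting prediction mandatory freeness.

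\medskip\noindent\textbf{Preservation.} I would apply the just-established characterization to $G'$. Fix any cycle $C'$ in $G'$ with non-tree edge $f'$ and tree edge $e \in C' \setminus \{f'\}$. By Lemma~\ref{lemma_mst_tree_change}, $T_L \sym T_L' \subseteq Q$, so every edge whose tree-membership changed between $G$ and $G'$ lies in $Q$ and now has a singleton interval $\{w_e\}$ in $G'$. I would split into cases on which of $e, f'$ belong to $Q$. When both are non-queried and the pair $(e, f')$ also appears in the cycle of $G$ through $f'$, the two inequalities are inherited verbatim from $G$'s conditions. When at least one is in $Q$, the uniqueness of $T_L' = T_U'$ pins down the revealed value relative to the other interval's endpoint via Kruskal extremality: if, for instance, $e \in Q$ and $w_e > L_{f'}'$, then Kruskal at the lower limits would exclude $e$ from $T_L'$ while Kruskal at the upper limits would still include it, contradicting $T_L' = T_U'$. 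The remaining sub-case (new pairs between two non-queried edges that only became cycle-mates after the tree change) is handled by combining the original $G$-conditions on overlapping cycles with the extremality enforced by $T_L' = T_U'$ on the new cycle $C'$.

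\medskip\noindent\textbf{Main obstacle.} The preservation step is the hardest: the cycle structure of $G'$ can differ from that of $G$, and new pairs $(e, f')$ between non-queried edges can appear in $G'$ that were not pairs in $G$. Lemma~\ref{lemma_mst_tree_change} confines all such tree changes to queried (hence trivial) edges, but translating these structural changes into the precise endpoint inequalities demanded by the characterization still requires carefully combining inheritance from $G$ with the Kruskal extremality argument enforced by $T_L' = T_U'$ on the new cycles.
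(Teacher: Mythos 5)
Your backward direction is correct and essentially matches the paper's argument: exhibit $T_L$ and $E\setminus T_L$ as two disjoint feasible query sets under the assumption that predictions are correct (the paper phrases this via the vertex-cover characterization of \cite{erlebach14mstverification}, but your direct version is equivalent).

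The forward direction has a genuine gap. You propose to enumerate the four subsets of $\{e,f_i\}$ and check which resolve the order of $w_e$ and $w_{f_i}$ — but prediction-mandatoriness of an edge is a \emph{global} property (it means the edge lies in every feasible query set, which depends on the whole graph, not just on the pair $\{e,f_i\}$). To establish that a violation makes some edge prediction mandatory, the paper needs the new witness-set machinery (Lemma~\ref{lemma_mst_witness_set_1} for cycles and Lemma~\ref{lemma_mst_witness_set_2} for cuts), which in turn rest on Observation~\ref{obs_mst_1} and the structural Lemmas~\ref{lemma_mst_1} and~\ref{lemma_mst_2}. Crucially, the paper selects the \emph{first} violating cycle $C_i$ (smallest index), which is what allows Lemma~\ref{lemma_mst_witness_set_2} to be applied (its hypothesis $l_i \notin C_j$ for all $j<i$ must be verified). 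Your proposal does not mention selecting the first violating cycle and does not invoke any machinery capable of turning a local violation into a global mandatoriness statement; a purely local case split on $\{e,f_i\}$ cannot succeed.

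For the preservation claim, your high-level plan — apply the just-proved characterization to $G'$, invoke Lemma~\ref{lemma_mst_tree_change} to confine tree changes to queried edges, and split on whether $e$ and $f'$ are queried — is the right framework and aligns with the paper's proof by contradiction. However, as you yourself flag, the hard case (a new pair $(e,f')$ in a cycle $C'$ of $G'$ that was not a cycle pair in $G$, with both $e,f'$ non-queried) is precisely where the argument lives. The paper resolves it by noting that $C'$ must cross the $G$-cut $X_e$ at some third edge $f \neq f',e$, then uses prediction-mandatory-freeness of $G$ (i.e., the already-established characterization applied to $G$) to get $\w_e \le L_f$, and combines this with $\w_e > L_{f'}$ to contradict $f'$ having the maximum lower limit in $C'$; a symmetric argument handles the case $\w_{f'} \in I_e$ by tracking the maximum upper limit on the tree path between the endpoints of $f'$. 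Your "Kruskal extremality" phrasing gestures at this but does not close the argument, so this part of the proof remains incomplete as written.
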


\noindent {\bf \nnew{
	Making instances prediction mandatory free}\ \ }
	In Appendix~\ref{app:mst-predfree}, we give a \nnew{powerful} preprocessing algorithm that transforms arbitrary instances into prediction mandatory free instances. 

	\begin{restatable}{theorem}{MSTEndOfPhaseOne}
		\label{mst_end_of_phase_one}
		There is an algorithm that makes a given instance prediction mandatory free and satisfies $|\ALG| \le \min\{ (1 + \frac{1}{\gamma}) \cdot (|(\ALG \cup D) \cap \OPT| + \jo(\ALG) + \oj(\ALG)), \gamma \cdot |(\ALG \cup D) \cap \OPT|  + \gamma -2\}$ for the set of edges $\ALG$ queried by \nnew{the algorithm} 
		\jnew{and a set $D \subseteq E\setminus \ALG$ of unqueried edges that do not occur in the remaining instance \emph{after} executing the algorithm.}
	\end{restatable}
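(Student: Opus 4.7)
The plan is to describe an iterative procedure that, as long as the current instance contains a prediction mandatory edge, queries a batch of up to $\gamma + 1$ edges consisting of (i) a \emph{strengthened witness set} of three edges of which any feasible query set must contain at least two and (ii) $\gamma-2$ additional prediction mandatory edges. The procedure terminates once no prediction mandatory edge remains, which by Lemma~\ref{lem:remaining_predfree} is a stable property as long as unique $T_L = T_U$ is maintained via the preprocessing of Lemma~\ref{mst_preprocessing} between iterations. The set $D$ collects edges that drop out of the remaining instance as a side-effect of the queries, e.g., edges that are no longer contained in any relevant cycle after re-establishing unique $T_L = T_U$.

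First I would construct the strengthened witness set. By Lemma~\ref{lemma_mst_witness_set_1}, each cycle $C_i$ yields a size-two witness set $\{f_i, l_i\}$. As long as the instance is not prediction mandatory free, Lemma~\ref{mst_pred_free_characterization} gives us a cycle violating the characterization, and the predictions pinpoint either an extra edge that must be queried together with $\{f_i, l_i\}$ (because its predicted value forces $f_i$ or $l_i$ to be mandatory under the predictions) or a second overlapping witness pair that can be merged with the first. In either case we obtain three edges such that every feasible query set contains at least two. Next I would fix a query order: query the $\gamma-2$ prediction mandatory edges first, abort the batch if any reveals a value contradicting its prediction, otherwise query the three edges of the strengthened witness set in an order that, together with the re-maintenance of $T_L = T_U$, lets at most one of them escape into $D$ rather than into $\ALG$.

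The analysis is a double amortization over batches. For the consistency-flavored bound, a batch with no detected prediction errors queries $\gamma+1$ edges of which at least $\gamma$ lie in $\OPT$ (the $\gamma-2$ prediction mandatory edges are truly mandatory when predictions agree, and the strengthened witness property accounts for two of the remaining three), giving ratio $(\gamma+1)/\gamma = 1 + 1/\gamma$. For batches in which predictions fail, each unexpected query is charged to the corresponding contribution of $\jo(\ALG)+\oj(\ALG)$ via Lemma~\ref{Theo_hop_distance_mandatory_distance}, which relates the symmetric difference $E_M \sym E_P$ to the hop distance. For the robustness-flavored bound, every batch contains a classical witness pair (from Lemma~\ref{lemma_mst_witness_set_1}), so at least one query of the batch lies in $\OPT$ or is chargeable to an edge in $D \cap \OPT$; amortizing $\gamma+1$ queries against at least one such contribution and reserving the $\gamma-2$ additive slack for the final incomplete batch (where fewer than $\gamma-2$ prediction mandatory edges are available) yields the $\gamma$-factor.

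The principal obstacle is coordinating the query order so that \emph{both} bounds hold simultaneously under the same execution: aborting a batch early because of a prediction error must not disturb the charging of the queries already made against either $\OPT$ or the error terms, and $D$ must be defined to absorb exactly the right edges without inflating $|(\ALG \cup D) \cap \OPT|$ beyond what the robustness analysis can afford. Getting the interaction between the adaptive query order, the $T_L = T_U$ re-maintenance, and the definition of $D$ to satisfy both the $(1+1/\gamma)$ and the $\gamma$ inequality is the delicate step, and will likely require a tailored case distinction on how a batch can fail.
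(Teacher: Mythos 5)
Your overall architecture matches the paper's: an iterative procedure that, per round, queries $\gamma-2$ prediction mandatory edges plus a small witness structure, re-establishes unique $T_L = T_U$ via Lemma~\ref{mst_preprocessing}, charges unexpected queries against the error terms via Lemma~\ref{Theo_hop_distance_mandatory_distance}, and uses $D$ to absorb edges that get eliminated without being queried. But two of the concrete steps you propose are not how the proof actually goes, and the gap you flag at the end is real and not filled.

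First, the assertion that whenever the instance is not prediction mandatory free you can produce a \emph{strengthened witness set of three edges} with at least two in $\OPT$ is false in general; if it were, one could obtain a $1.5$-competitive prediction-free algorithm, which is ruled out. The paper's Algorithm~\ref{ALG_mst_part_1} instead performs a case analysis on the cycle $C_i$ that violates the characterization of Lemma~\ref{mst_pred_free_characterization}: the three-edge situation (Lines~\ref{line_mst_one_case_b1},~\ref{line_mst_one_case_c1}) occurs only when an auxiliary $l_i'$ or $f_j$ with overlapping interval exists; when it does not (Lines~\ref{line_mst_one_case_b2},~\ref{line_mst_one_case_c2}), the algorithm may query \emph{one} edge only and delete/contract its partner. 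That single-query case is precisely where the set $D$ arises, and the bound in the theorem is against $|(\ALG\cup D)\cap\OPT|$ exactly because a witness-set partner can land in $D$ rather than $\ALG$. Moreover, the queries are made adaptively (e.g., query $l_i'$ only if $w_{f_i}\in I_{l_i}$ and $w_{l_i}\notin I_{f_j}$), and the proof that two of the three lie in $\OPT$ \emph{when all three are queried} requires arguing that the realized values create a second disjoint witness pair after the first query; it is not a static property of the triple.

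Second, the paper does \emph{not} abort the batch of $\gamma-2$ prediction mandatory edges when a query reveals a wrong prediction. Each such edge $e$ is simply charged: it is either truly mandatory (so in $\OPT$) or satisfies $\oj(e)\ge 1$ by Lemma~\ref{Theo_hop_distance_mandatory_distance}, giving $|P_i|\le |\OPT\cap P_i|+\oj(P_i)$ unconditionally. An abort rule like yours would leave an incomplete batch whose queries you then still have to account for, and it is not clear how the $\gamma\cdot|(\ALG\cup D)\cap\OPT|+\gamma-2$ robustness bound survives multiple partial batches; the paper reserves the $\gamma-2$ additive slack exclusively for the single final iteration in which no witness-set queries happen. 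You correctly identify this coordination problem as "the delicate step," but that is the content of Lemmas~\ref{mst_phase1_case_a}--\ref{mst_phase1_case_c} and the double-counting argument at the end of the proof, and your proposal does not supply it.
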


	\jnew{
	The set $D$ are edges that, even without being queried by the algorithm, are proven to be maximal in a cycle or minimal in a cut. 
	Thus, they can be deleted or contracted 
	\nnew{w.l.o.g.}\ and do not 
	\nnew{exist} in the instance remaining \emph{after} executing the preprocessing algorithm.
	This is an important property as it means that the remaining instance is independent of $D$ and $\ALG$ (as all elements of $\ALG$ are already queried).
	Since the theorem compares $|\ALG|$ with $|(\ALG \cup D) \cap \OPT|$ instead of just $|\OPT|$, this allows us to combine the \nnew{given} guarantee 
		with the guarantees of dedicated algorithms for prediction mandatory free instances.
	However, we have to be careful with the additive term 
	$\gamma-2$, 
	but we will see that we can charge this term against the improved robustness of our algorithms for prediction mandatory free instances.}


\section{\nnew{An algorithm with} optimal consistency and robustness trade-off}
\label{sec:optimal-tradeoff}
We give a bound on the best~achievable~tradeoff between consistency and robustness. 

\begin{restatable}{theorem}{ThmLBTradeoffWithoutError}
\label{theo_minimum_combined_lb}
  Let 
  $\beta \geq 2$ be a fixed integer.
  For the MST problem \camera{under explorable uncertainty with predictions}, there is no deterministic $\beta$-robust algorithm that is $\alpha$-consistent for $\alpha < 1 + \frac{1}{\beta}$. And vice versa, no deterministic $\alpha$-consistent algorithm, with $\alpha>1$, is $\beta$-robust for $\beta < \max\{\frac{1}{\alpha-1},2\}$.
\end{restatable}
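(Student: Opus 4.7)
The plan is to establish the trade-off by an adversarial construction that extends the classical $2$-competitive lower bound gadget of~\cite{erlebach08steiner_uncertainty}. For a fixed integer $\beta\ge 2$, I would build an instance parameterized by $\beta+1$ candidate edges in a cycle (plus a trivial small edge), whose intervals pairwise intersect at a common ``trap'' region. The predictions are set so that $\text{OPT}^{\bar w}=\beta$ (querying any $\beta$ of the candidates suffices under correct predictions), while the adversary can simultaneously arrange that $\text{OPT}$ in the adversarial case is also $\beta$ but is realized by a \emph{different} $\beta$-subset. Any $\beta$-robust algorithm that queries only $\beta$ edges in the consistent case must commit to a specific $\beta$-subset; the adversary, knowing the algorithm's deterministic strategy, then makes the single ``uncovered'' candidate $e^\ast$ critical by revealing the queried values in the trap region and placing $w_{e^\ast}$ so that only queries of the \emph{other} $\beta$ candidates determine the MST. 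The algorithm then needs an additional query (the $(\beta{+}1)$-st) to finish, giving adversarial ratio $>\beta$ unless it had already hedged in the consistent case.

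The consequence is that, to be $\beta$-robust, the algorithm must query all $\beta+1$ candidates in the consistent case, giving consistency $(\beta+1)/\beta = 1+1/\beta$. To make this rigorous, I would proceed by: (i)~describing the candidate cycle and intervals precisely, so that pairwise intersections and upper/lower limits give the right ambiguities; (ii)~analyzing $\text{OPT}^{\bar w}$ and $\text{OPT}$ under adversarial reveals using the structural tools developed in Sec.~\ref{sec:mst:prelim} (in particular \Cref{lemma_mst_witness_set_1}, identifying which sets of $\beta$ candidates constitute a feasible solution); (iii)~running a case analysis over every possible deterministic first-query order the algorithm could commit to, and showing the adversary can always pick a consistent prediction and true-value pair where the algorithm's first $\beta$ queries miss some feasible $\text{OPT}$-witness.

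The hard part will be step~(iii): ruling out that a clever algorithm exploits asymmetries of the intervals (e.g., always querying the edge with largest $U_e$) to dodge the adversary. I would address this by symmetrizing the construction so that all $\beta+1$ candidate intervals are structurally indistinguishable (all sharing the same $L_e,U_e$) and the only information differentiating them is the predicted value $\bar w_e$, forcing any deterministic algorithm to pick its ordering purely from $\bar w$. The adversary can then pick $\bar w$ to enforce any desired ordering, and freely pick the critical $e^\ast$ among the algorithm's last candidate, closing the argument.

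For the second statement, I would derive it by contraposition from the first: if an $\alpha$-consistent algorithm with $\alpha>1$ were $\beta$-robust for $\beta<1/(\alpha{-}1)$, then $\alpha<1+1/\beta$ contradicts part one. The additional lower bound $\beta\ge 2$ comes from the standard $2$-lower bound of \cite{erlebach08steiner_uncertainty}, which applies even when predictions are available since an adversary can simply set arbitrary (uninformative) predictions and fall back to the classical two-interval argument; hence $\beta\ge\max\{1/(\alpha{-}1),2\}$.
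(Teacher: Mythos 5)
Your high-level plan is close in spirit to the paper's proof, and the contraposition argument for the second statement is correct, but there is a critical arithmetic flaw in the core adversarial argument that the construction you sketch cannot repair. You state that the adversary arranges things so that ``$\OPT$ in the adversarial case is also $\beta$ but is realized by a \emph{different} $\beta$-subset,'' and that forcing the algorithm into a $(\beta{+}1)$-st query then ``gives adversarial ratio $>\beta$.'' These two claims are incompatible: if $\opt=\beta$ and the algorithm queries $\beta+1$ edges, the ratio is $(\beta+1)/\beta = 1+\frac{1}{\beta}$, which is strictly \emph{less} than $\beta$ for every $\beta\ge 2$. That is perfectly $\beta$-robust, so no contradiction is obtained. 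To defeat $\beta$-robustness, the adversary must not merely add one extra query to the algorithm; it must simultaneously \emph{collapse} the optimum to a tiny set.

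The paper achieves exactly this by making the extra edge $e_0$ structurally asymmetric (a shifted interval $I_{e_0}=(2,4)$ vs.\ $I_{e_i}=(1,3)$ for $i\ge 1$). Consistency forces the algorithm to query $e_1,\dots,e_\beta$ before $e_0$. The adversary confirms predictions on the first $\beta-1$ of them, then places $w_{e_\beta}\in I_{e_0}$ to force a query of $e_0$, and finally sets $w_{e_0}\ge U_{e_i}$ for all $i\ge 1$. In the resulting realization, querying \emph{only} $e_0$ already certifies the MST, so $\opt=1$, and the algorithm's $\beta+1$ queries give ratio $\beta+1>\beta$. Your symmetric construction, in which all $\beta+1$ candidate intervals coincide, cannot produce $\opt=1$ this way: with identical intervals, no single queried value can dominate or separate from the others, so you cannot drive $\OPT$ down to a singleton. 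You would need to reintroduce an asymmetric ``anchor'' edge (playing the role of $e_0$), which is precisely the paper's gadget. Once you replace ``$\opt=\beta$'' by ``$\opt=1$ via an asymmetric edge,'' and note that the worry about clever orderings is moot because the $\beta$ candidate edges are structurally indistinguishable (so WLOG is legitimate), the argument goes through as in the paper.
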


The main result of this section is an optimal algorithm w.r.t.\ this tradeoff bound. 

\begin{restatable}{theorem}{optTradeoff}
	\label{thm:optimal-tradeoff}
	For every integer $\gamma\ge 2$, there exists a $(1+\frac{1}{\gamma})$-consistent and $\gamma$-robust algorithm for the MST problem \camera{under explorable uncertainty with predictions}.
\end{restatable}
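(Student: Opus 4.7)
The plan is to combine the preprocessing of Theorem~\ref{mst_end_of_phase_one} with a dedicated algorithm for prediction mandatory free instances. First, I run Phase~1, the procedure of Theorem~\ref{mst_end_of_phase_one}, which uses a query set $\ALG_1$ to transform the input into a prediction mandatory free instance while identifying a set $D$ of edges that can be deleted or contracted from the remaining instance. By Lemma~\ref{lem:remaining_predfree}, the remaining instance stays prediction mandatory free as long as $T_L = T_U$ is maintained, so I can then execute Phase~2: a dedicated algorithm that is $1$-consistent and $2$-robust on prediction mandatory free instances.

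For Phase~2 I use the K\H{o}nig--Egerv\'ary-based scheme sketched in Section~\ref{sec:overview}. On a prediction mandatory free instance, the predicted optimal query set forms a minimum vertex cover $VC$ in a bipartite auxiliary graph whose edges represent witness pairs identified via Lemma~\ref{lemma_mst_witness_set_1} under the assumption that the predictions are correct. By K\H{o}nig--Egerv\'ary there exists a matching $h$ of size $|VC|$, and for each $e \in VC$ the partner $h(e) \notin VC$ yields a pair $\{e, h(e)\}$ that is a witness pair of the current instance whenever the predicted relation on its endpoints is accurate. I query the elements of $VC$ in a carefully chosen order while maintaining $T_L = T_U$; as long as every query is consistent with the prediction, the already-queried prefix $Q \subseteq VC$ and the partners $h(Q)$ form $|Q|$ vertex-disjoint witness pairs, so $|Q| \le |\OPT \cap (Q \cup h(Q))|$. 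If all of $VC$ is queried without a prediction error, Phase~2 finishes with $|\ALG_2| = |VC| = |\OPT_2|$. Otherwise, at the first detected error I additionally query $h(e)$ for every previously queried $e \in Q$ (at most $2|Q|$ queries against a lower bound $|Q|$) and hand the remaining sub-instance to a prediction-oblivious $2$-competitive algorithm such as the one of~\cite{erlebach08steiner_uncertainty}, yielding $|\ALG_2| \le 2|\OPT_2|$ overall.

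For the combined analysis I use $|\OPT| \ge |(\ALG_1 \cup D) \cap \OPT| + |\OPT_2|$, which holds because the support of the remaining instance is disjoint from $\ALG_1 \cup D$, so any feasible query set for the original instance must pay separately for each part. Under correct predictions one has $\jo(\ALG_1) = \oj(\ALG_1) = 0$; thus Theorem~\ref{mst_end_of_phase_one} gives $|\ALG_1| \le (1+\tfrac{1}{\gamma})|(\ALG_1 \cup D) \cap \OPT|$ and Phase~2's $1$-consistency gives $|\ALG_2| \le |\OPT_2|$, summing to at most $(1+\tfrac{1}{\gamma})|\OPT|$. For robustness, Theorem~\ref{mst_end_of_phase_one} gives $|\ALG_1| \le \gamma|(\ALG_1 \cup D) \cap \OPT| + (\gamma-2)$ and Phase~2 gives $|\ALG_2| \le 2|\OPT_2|$; whenever $|\OPT_2| \ge 1$ the slack $\gamma-2$ is absorbed, since $2|\OPT_2| + (\gamma-2) \le \gamma|\OPT_2|$, so the total is at most $\gamma|\OPT|$. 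The degenerate case $|\OPT_2| = 0$ is handled by observing that then Phase~1 terminates without needing the additive term, because the additive slack in Theorem~\ref{mst_end_of_phase_one} is only charged when non-trivial Phase~2 work remains.

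The main obstacle is the correctness of Phase~2, and specifically the choice of query order on $VC$: the order must guarantee that any prefix queried before the first prediction error can be completed into vertex-disjoint witness pairs in the original instance, which hinges on the structural characterization of Lemma~\ref{mst_pred_free_characterization} and the witness-pair construction of Lemma~\ref{lemma_mst_witness_set_1}. A secondary care point is preserving the invariants of the remaining instance --- in particular unique $T_L = T_U$ and prediction-mandatory-freeness --- across the boundary between Phase~1 and Phase~2 and after each query performed inside Phase~2; this is what permits safe use of the stated bounds on $\OPT_2$ throughout.
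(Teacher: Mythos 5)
Your structure mirrors the paper's own proof: run Algorithm~\ref{ALG_mst_part_1} (Theorem~\ref{mst_end_of_phase_one}) to make the instance prediction mandatory free while preserving unique $T_L=T_U$, then run the $1$-consistent, $2$-robust Algorithm~\ref{ALG_mst_part_2} (Theorem~\ref{thm:predfree}), and combine the per-phase guarantees. Your consistency bound and your robustness bound when $|\OPT_2| \ge 1$ match the paper's exactly, including the key inequality $2|\OPT_2| + (\gamma-2) \le \gamma|\OPT_2|$.

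The gap is in your final sentence. You assert that in the degenerate case $|\OPT_2| = 0$ ``Phase~1 terminates without needing the additive term, because the additive slack in Theorem~\ref{mst_end_of_phase_one} is only charged when non-trivial Phase~2 work remains.'' That is false. The $\gamma-2$ additive term accounts for the set $P$ of prediction-mandatory edges queried in the \emph{last} iteration of Algorithm~\ref{ALG_mst_part_1}, and $P$ is nonempty precisely when the instance was not yet prediction mandatory free at the start of that iteration --- this has nothing to do with whether any Phase-2 work remains afterwards. It is entirely possible for the last iteration to query up to $\gamma-2$ prediction-mandatory edges after which the instance is fully solved, giving $P \neq \emptyset$ and $\ALG_2 = \emptyset$ simultaneously; and since those edges are only \emph{prediction} mandatory, the adversary can choose true values so that none of them lies in $\OPT$, so $|P|$ cannot be absorbed for free. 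The correct charging argument, which the paper uses, is: if $P \neq \emptyset$, then the instance was not solved at the start of the last iteration (a solved instance is prediction mandatory free and triggers termination), hence $|\OPT \setminus ((\ALG_1 \cup D)\setminus P)| \ge 1$; this yields $|P| \le \gamma-2 \le (\gamma-2)\cdot|\OPT \setminus ((\ALG_1 \cup D)\setminus P)|$, which together with $|\ALG_1 \setminus P| \le \gamma\,|\OPT \cap ((\ALG_1 \cup D)\setminus P)|$ (the part of Theorem~\ref{mst_end_of_phase_one} that carries no additive term) gives $|\ALG| \le \gamma|\OPT|$. You need this step rather than the claimed vacuity of the additive term.
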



\nnew{To show this result, we design an algorithm for prediction mandatory free instances with unique $T_L = T_U$. We run it after the preprocessing algorithm which obtains such special instance with the query guarantee in~\Cref{mst_end_of_phase_one}. Our new algorithm achieves the optimal~trade-off.}

\begin{theorem}
	\label{thm:predfree}
	There exists a $1$-consistent and $2$-robust algorithm for prediction mandatory free instances with unique $T_L = T_U$ of the MST problem \camera{under explorable uncertainty with predictions}.
\end{theorem}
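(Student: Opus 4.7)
The plan is to construct a bipartite auxiliary graph $H$ encoding the potential witness pairs, compute a minimum vertex cover $VC$ of $H$ together with a K\H{o}nig-Egerv\'ary matching $h\colon VC \to V(H)\setminus VC$, and query $VC$ one by one in a carefully chosen order that preserves a witness-pair invariant. Let $V(H)$ be the set of non-trivial edges of $G$, and for each non-tree edge $f_i$ with $i\in\{1,\ldots,l\}$ and each $e\in C_i\setminus\{f_i\}$ with $I_{f_i}\cap I_e\neq\emptyset$ add $\{f_i,e\}$ as an edge of $H$. Since $e\in T_L$ while $f_i\notin T_L$, the graph $H$ is bipartite.

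My first goal is to show $|VC|=|\OPT|$ whenever the predictions are correct. Using the characterization of prediction mandatory free instances (\Cref{mst_pred_free_characterization})---namely $\pred{w}_{f_i}\ge U_e$ and $\pred{w}_e\le L_{f_i}$ for every $e\in C_i$---I argue that under correct predictions every edge of $H$ is an actual witness pair, extending \Cref{lemma_mst_witness_set_1} to all intersecting pairs in each cycle. So $|\OPT|\ge|VC|$. Conversely, $VC$ is a feasible query set under correct predictions: for each cycle $C_i$, an edge $e\in C_i$ with $I_{f_i}\cap I_e=\emptyset$ is already ordered below $f_i$ by the intervals, while edges with $I_{f_i}\cap I_e\neq\emptyset$ are covered by $VC$, and the revealed values confirm $f_i$ as the cycle maximum; combined with $T_L=T_U$, this identifies the MST.

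The algorithm computes $VC$ and $h$ via K\H{o}nig-Egerv\'ary and then queries the elements of $VC$ one at a time in an order chosen so that, provided no prior query has contradicted its prediction, the pair $\{e,h(e)\}$ is certified to be a witness pair of the original instance for every $e\in VC$ that has been queried so far, \emph{including} the query just executed. A natural order is to process $VC\cap\{f_1,\ldots,f_l\}$ first and then $VC\cap T_L$, breaking ties by decreasing upper limit; the precise order is chosen so that revealing any prefix of predicted values does not allow the adversary to certify $\{e_k,h(e_k)\}$ outside the original witness structure. If all $|VC|$ queries return their predicted values, the algorithm has identified the MST after $|VC|=|\OPT|$ queries, proving $1$-consistency. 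Otherwise, let $e_k$ be the first query whose true value deviates from its prediction. The algorithm additionally queries $h(e_1),\ldots,h(e_k)$ and then invokes any deterministic $2$-competitive algorithm for MST under explorable uncertainty (e.g.~\cite{erlebach08steiner_uncertainty}) on the residual instance.

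For the $2$-robustness analysis, the invariant supplies $k$ pairwise disjoint witness pairs $\{e_i,h(e_i)\}$ of the original instance, so $|\OPT|\ge k$ and $|\OPT\cap\bigcup_{i=1}^{k}\{e_i,h(e_i)\}|\ge k$. Consequently the residual satisfies $|\OPT_{\mathrm{rem}}|\le|\OPT|-k$, and the $2$-competitive subroutine performs at most $2(|\OPT|-k)$ additional queries. Adding the $2k$ queries already issued to the witness pairs gives $|\ALG|\le 2k+2(|\OPT|-k)=2|\OPT|$, as desired. The main obstacle is precisely the careful design of the query order and the certification that the witness-pair invariant survives through the error-causing query $e_k$: being a witness pair is a property of the unknown true values, so I need the order together with the partial observations up to step $k-1$ to force $\{e_k,h(e_k)\}$ to remain a witness pair regardless of how the adversary fixes the remaining unqueried values. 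I expect to establish this via a bipartite alternating-path argument in $H$ (any realization that breaks the pair would yield an augmenting path contradicting the minimality of $VC$), combined with the prediction mandatory free characterization and the preservation of unique $T_L=T_U$ guaranteed by \Cref{mst_preprocessing,lemma_mst_tree_change}.
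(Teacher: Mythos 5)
Your algorithm is the same as the paper's (Algorithm~\ref{ALG_mst_part_2}): build the bipartite auxiliary graph $\bar G$, compute a minimum vertex cover $VC$ and a K\H{o}nig matching $h$, query $VC$ in a special order, stop at the first misprediction, query the matching partners of everything queried so far, and fall back to a prediction-oblivious $2$-competitive algorithm. Your consistency argument ($|VC|=|\OPT|$ when predictions are correct) and your robustness accounting ($2k + 2(|\OPT|-k) \le 2|\OPT|$ using disjoint witness pairs) also match the paper exactly.

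The gap is the one you flag yourself: the crucial invariant that $\{e_i,h(e_i)\}$ is a witness set for every prefix up to and \emph{including} the first error---the paper's Lemma~\ref{mst_phase2_2}---is left as a plan, and the mechanism you propose for it (``any realization that breaks the pair would yield an augmenting path contradicting the minimality of $VC$'') is not the right one. Being a witness set is a statement about the unknown true values via cycle/cut structure, not about matchings in the auxiliary graph; an augmenting path in $\bar G$ has no direct bearing on whether an adversary can fix unqueried values to solve the instance without touching $\{e,h(e)\}$. The paper's proof instead uses the \emph{vertex-cover} property (not the matching) in a structural way: since $h(l'_i)\notin VC$, every edge of the cycle $C_{h(l'_i)}$ whose interval intersects $I_{h(l'_i)}$ lies in $VC$; after contracting the $l'_j$ with $j<i$ that were confirmed minimal, $l'_i$ has the largest upper limit on $C_{h(l'_i)}$, and then Lemma~\ref{lemma_mst_witness_set_1} (a global witness-pair criterion for cycles $C_i\neq C_1$) gives the witness set. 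A symmetric cut argument handles the $f'$-side. Note also that your stated order (``$f$-edges first, break ties by decreasing upper limit'') does not quite match the one the lemma actually needs---non-decreasing lower limit for $VC\setminus T_L$, non-increasing upper limit for $VC\cap T_L$---and this ordering is exactly what makes the contraction/deletion step in the lemma legitimate, so you cannot treat it as a tunable detail.
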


In a prediction mandatory free instance $G=(V,E)$, each $f_i \in E \setminus T_L$ is predicted to be maximal on cycle $C_i$, and each $l \in T_L$ is predicted to be minimal in $X_l$ (cf.~\Cref{mst_pred_free_characterization}). 
{If these predictions are correct, then $T_L$ is an MST and the optimal query set is a minimum vertex cover in a bipartite graph $\bar{G} = (\bar{V},\bar{E})$ with $\bar{V} = E$ (excluding trivial edges) and $\bar{E} = \{ \{f_i,e\} \mid i \in \{1,\ldots,l\}, e \in C_i \setminus \{f_i\} \text{ and } I_e \cap I_{f_i} \not= \emptyset\}$ \cite{erlebach14mstverification,megow17mst}. We refer to $\bar{G}$ as the {\em vertex cover instance}.}
Note that {if a query reveals that} an $f_i$ is not maximal on $C_i$ or an $l \in T_L$ is not minimal in $X_{l}$, then the vertex cover instance changes.
Let $VC$ be a minimum vertex cover of $\bar{G}$.
Non-adaptively querying $VC$ ensures $1$-consistency but might lead to an arbitrar\nnew{ily} bad robustness. Indeed, the size of a minimum vertex cover can increase and decrease drastically as 
shown in \Cref{app:mst-optimal-tradeoff}.
Thus, the algorithm has to be more adaptive.

The idea of the algorithm (cf. Algorithm~\ref{ALG_mst_part_2}) is to sequentially query each $e \in VC$ and charge for querying $e$ by a distinct non-queried element $h(e)$ such that $\{e,h(e)\}$ is a witness set.
Querying exactly one element per 
{distinct} witness set implies optimality.
To identify~$h(e)$ for each element $e \in VC$, we use  {the fact that
\emph{K\H{o}nig-Egerv\'ary}'s Theorem (e.g.,~\cite{Biggs1986}) on the duality between minimum vertex covers and maximum matchings in bipartite graphs} implies that there is a matching $h$ that maps each $e \in VC$ to a distinct $e' \not\in VC$. 
While the sets $\{e,h(e)\}$ with $e \in VC$ in general are not witness sets, querying $VC$ in a specific order until the vertex cover instance changes guarantees that $\{e,h(e)\}$ is a witness set for each already queried~$e$.  
The algorithm queries in this order until it detects a wrong prediction or solves the problem.
If it finds a wrong prediction, it queries the partner $h(e)$ of each already queried edge $e$, and continues to solve the problem with a $2$-competitive algorithm (e.g.,~\cite{erlebach08steiner_uncertainty,megow17mst}).
The following lemma specifies the order in which the algorithm queries 
$VC$.

\begin{restatable}{lemma}{MstPhaseTwTwo}
	\label{mst_phase2_2}
	Let $l'_1,\ldots,l'_k$ be the edges in $VC \cap T_L$ ordered by non-increasing upper limit and let $d$ be such that \camera{the true value of} each $l'_{i}$ with $i < d$ is minimal in cut $X_{l'_i}$, then $\{l'_{i},h(l'_{i})\}$ is a witness set for each $i \le d$. Let $f'_1,\ldots,f'_g$ be the edges in $VC \setminus T_L$ ordered by non-decreasing lower limit and let $b$ be such that \camera{the true value of} each $f'_{i}$ with $i < b$ is maximal in cycle $C_{f'_i}$, then $\{f'_{i},h(f'_{i})\}$ is a witness set for each $i \le b$.
\end{restatable}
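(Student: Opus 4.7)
The plan is to reduce both parts of the claim to \Cref{lemma_mst_witness_set_1} and its symmetric cut counterpart by combining K\H{o}nig duality with a \emph{reduction} that removes the already-verified $l'_j$ (resp.\ $f'_j$) from consideration. Fix $i\le d$ for the first part and set $f:=h(l'_i)$. Since $h$ matches $VC$ with $\bar V\setminus VC$ in $\bar G$ (by K\H{o}nig) and $\bar G$ is bipartite between tree and non-tree edges, $f\in E\setminus T_L$ and $f\notin VC$. The non-membership $f\notin VC$ forces every tree edge $e\in C_f\setminus\{f\}$ with $I_e\cap I_f\neq\emptyset$ to lie in $VC\cap T_L=\{l'_1,\ldots,l'_k\}$ (otherwise the $\bar G$-edge $\{f,e\}$ would be uncovered), while all other tree edges $e$ in $C_f\setminus\{f\}$ satisfy $U_e\le L_f$ by the MST cycle property applied to $T_L$.

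Next, consider the instance $G_i$ obtained by querying $\{l'_1,\ldots,l'_{i-1}\}$. I interpret ``the true value of $l'_j$ is minimal in $X_{l'_j}$'' in the verifiable sense $w_{l'_j}\le L_e$ for every non-trivial $e\in X_{l'_j}$, matching the algorithm's per-query error-detection criterion. Since $l'_j\in C_f$ is equivalent to $f\in X_{l'_j}$, this yields $w_{l'_j}\le L_f$, so after the query the trivial effective interval $\{w_{l'_j}\}$ no longer intersects $I_f$. Hence in $G_i$ the edges of $C_f\setminus\{f\}$ whose intervals still intersect $I_f$ are $l'_i$ together with at most some $l'_j$ for $j>i$, each satisfying $U_{l'_j}\le U_{l'_i}$ by the non-increasing ordering. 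Thus $l'_i$ realizes the largest upper limit in $C_f\setminus\{f\}$ in $G_i$, and \Cref{lemma_mst_witness_set_1} applied to $C_f$ in $G_i$ shows that $\{f,l'_i\}$ is a witness set of $G_i$. Lifting is routine: any $Q$ feasible for the original instance remains feasible after adding $\{l'_1,\ldots,l'_{i-1}\}$, which is equivalent to $Q\setminus\{l'_1,\ldots,l'_{i-1}\}$ being feasible for $G_i$; combining this with the witness set property in $G_i$ and the disjointness $\{f,l'_i\}\cap\{l'_1,\ldots,l'_{i-1}\}=\emptyset$ yields $Q\cap\{f,l'_i\}\neq\emptyset$.

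The non-tree case is entirely symmetric. Setting $e_h:=h(f'_i)$, K\H{o}nig gives $e_h\in T_L\setminus VC$, so every non-tree edge in $X_{e_h}\setminus\{e_h\}$ intersecting $I_{e_h}$ lies in $VC\setminus T_L=\{f'_1,\ldots,f'_g\}$, while other non-tree edges in $X_{e_h}$ satisfy $L_{f'}\ge U_{e_h}$ by the MST cut property. Querying $\{f'_1,\ldots,f'_{i-1}\}$, each verifiably maximal in its cycle (so $w_{f'_j}\ge U_{e_h}$ whenever $e_h\in C_{f'_j}$), removes them from the intersecting set in the reduced cut $X_{e_h}$, and by the non-decreasing ordering of lower limits $f'_i$ becomes the smallest-lower-limit intersecting edge. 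Invoking the symmetric cut analog of \Cref{lemma_mst_witness_set_1}---the edge with smallest lower limit intersecting $I_e$ in $X_e\setminus\{e\}$, together with $e\in T_L$, is a witness set---and the same lifting argument completes the proof. The main technical subtlety will be precisely this verifiable interpretation of the precondition: the weaker $w_{l'_j}\le w_f$ would be insufficient, because extensions of the unqueried $w_f$ could drop below $w_{l'_j}$, whereas $w_{l'_j}\le L_f$ guarantees that $l'_j$ drops out of $C_f$ for \emph{every} admissible $w_f\in I_f$ in $G_i$.
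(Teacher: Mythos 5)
Your proof is correct, and the first half follows essentially the same path as the paper's: pass to a relaxed instance in which the other $l'_j$ are already resolved (the paper contracts $l'_j$ for $j<d,\,j\neq i$; you query and make trivial $l'_1,\ldots,l'_{i-1}$ — the ordering by non-increasing $U$ handles the remaining $j>i$ identically in both versions), observe that the cycle through $h(l'_i)$ contains only $h(l'_i)$, edges of $VC\cap T_L$, and non-intersecting edges, and then invoke \Cref{lemma_mst_witness_set_1} in the relaxed instance. Your explicit ``lifting'' step is correct and is implicitly used by the paper too.

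For the second half you take a genuinely different route. The paper does \emph{not} invoke a cut analog: it argues directly through the intermediate verified tree $T_{i'-1}$ (using the maximality of $f'_1,\ldots,f'_{i-1}$ to force $h(f'_i)\in T_{i'-1}$, then locating $h(f'_i)$ on the cycle closed by $f'_i$ in $T_{i'-1}$) and concludes by the classical \Cref{lemma_mst_2}. You instead mirror the first half by applying a cut-based witness lemma to the relaxed instance obtained by deleting the already-verified $f'_1,\ldots,f'_{i-1}$. The ``symmetric cut analog'' you invoke is precisely \Cref{lemma_mst_witness_set_2} (your phrasing ``smallest lower limit intersecting $I_e$'' is equivalent to its hypothesis ``$l_i\notin C_j$ for all $j<i$'' once one notes that, under the cut property, any non-tree edge in $X_{e_h}$ with lower limit in $(L_{e_h},U_{e_h})$ necessarily intersects $I_{e_h}$, so smallest-intersecting coincides with smallest-overall among non-trivial cut edges). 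Your approach has the advantage of symmetry with the first half and of reusing the paper's new structural lemma rather than re-deriving the argument, at the cost of relying on a reformulation of \Cref{lemma_mst_witness_set_2} that you should state and justify explicitly rather than cite as self-evident. One small imprecision: you attribute $L_{f'}\ge U_{e_h}$ for the ``other'' non-tree edges to ``the MST cut property'' alone; the cut property only gives $L_{f'}>L_{e_h}$, and you additionally need $I_{f'}\cap I_{e_h}=\emptyset$ (a consequence of $VC$ covering $\bar G$ together with $e_h,f'\notin VC$) to push this up to $L_{f'}\ge U_{e_h}$. Your identification of the verifiable (interval-wise) reading of ``minimal/maximal'' as the necessary precondition is correct and is in fact also what the paper's contraction step requires, even though the paper does not spell it out.
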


\begin{proof}
	Here, we show the first statement and refer to Appendix~\ref{app:mst-optimal-tradeoff} for the proof of the second statement.
	Consider an arbitrary $l'_i$ and $h(l'_i)$ with $i \le d$. 
	By definition of $h$, the edge $h(l'_i)$ is not part of the lower limit tree.
	 {Consider $C_{h(l'_i)}$, i.e., the cycle in $T_L \cup \{h(l_i')\}$, then we claim that $C_{h(l'_i)}$ only contains $h(l'_i)$ and edges in $\{l'_{1},\ldots,l'_{k}\}$ (and possibly irrelevant edges that do not intersect $I_{h(l'_i)}$). 
	To see this, recall that $l_i' \in VC$, by definition of $h$, implies $h(l_i') \not\in VC$.
	For $VC$ to be a vertex cover, each $e \in C_{h(l_i')}\setminus\{h(l_i')\} $ must either be in $VC$ or not intersect $h(l_i')$.}
	Consider the relaxed instance where the true values for each $l'_j$ with $j<d$ and $j \not= i$ are already known. 
	By assumption each such $l'_j$ is minimal in its cut $X_{l'_j}$.
	Thus, we can w.l.o.g.\ contract each such edge.
	It follows that in the relaxed instance $l'_i$ has the highest upper limit in $C_{h(l'_i)} \setminus \{h(l'_i)\}$.
	According to Lemma~\ref{lemma_mst_witness_set_1}, $\{l'_i, h(l'_i)\}$ is a witness set.
\end{proof}

%

\begin{algorithm}[tb]
	\KwIn{Prediction mandatory free graph $G=(V,E)$ with unique $T_L = T_U$.} 
	Compute maximum matching $h$ and minimum vertex cover $VC$ for $\bar{G}$\label{line_mst_two_compute_vc}\; 
	Set $W = \emptyset$, and let $f'_1,\ldots,f'_g$ and $l'_1,\ldots,l'_k$ be as described in Lemma~\ref{mst_phase2_2}\label{line_mst_two_l_edges}\;
	\For{$e$ chosen sequentially from the {ordered list} $f'_1,\ldots,f'_g,l'_1,\ldots,l'_k$\label{line_mst_two_main_for}}{
		Query $e$\label{line_mst_two_query_vc} and add $h(e)$ to $W$\; 
		\lIf{$\jo(e)\not=0$}{ query set $W$\label{line_mst_two_rob1} and solve the 
		instance with a $2$-competitive algorithm}
	}
	\caption{$1$-consistent and $2$-robust algorithm for prediction mandatory free instances.}
	\label{ALG_mst_part_2}
\end{algorithm}

\vspace*{-2ex}
\begin{proof}[Proof of Theorem~\ref{thm:predfree}]
	We first show $1$-consistency.
	Assume that all predictions are correct, then $VC$ is an optimal query set and $\jo(e)=0$ holds for all $e\in E$.
	It follows that Line~\ref{line_mst_two_rob1} never executes queries and the algorithm queries exactly $VC$. This implies $1$-consistency.
	
	\nnew{Further,} if the algorithm never queries in Line~\ref{line_mst_two_rob1}, then the consistency analysis implies $1$-robustness.
	\nnew{Suppose} Line~\ref{line_mst_two_rob1} executes queries.
	Let $Q_1$ denote the set of edges that are queried before the queries of Line~\ref{line_mst_two_rob1} and let $Q_2 = \{h(e) \mid e \in Q_1\}$. Then $Q_2$ corresponds to the set $W$ as queried 
	in Line~\ref{line_mst_two_rob1}. 
	By Lemma~\ref{mst_phase2_2}, each $\{e,h(e)\}$ with $e \in Q_1$ is a witness set.
	Further, the sets $\{e,h(e)\}$ are pairwise disjoint.
	Thus, $|Q_1 \cup Q_2| \le 2 \cdot |\OPT \cap (Q_1 \cup Q_2)|$.
	Apart from $Q_1 \cup Q_2$, the algorithm queries a set $Q_3$ in Line~\ref{line_mst_two_rob1} to solve the remaining instance with a $2$-competitive algorithm. 
	So, $|Q_3| \le 2 \cdot |\OPT \setminus (Q_1\cup Q_2)|$ and, adding up the inequalities, $|\ALG| \le 2 \cdot |\OPT|$.
\end{proof}

\jnew{A careful combination of~\Cref{thm:predfree,mst_end_of_phase_one} implies~\Cref{thm:optimal-tradeoff}. 
Full proof in Appendix~\ref{app:mst-optimal-tradeoff}}.

\section{An error-sensitive algorithm}
\label{sec:error-sensitive}
In this section, we extend the algorithm of~\Cref{sec:optimal-tradeoff} to obtain error sensitivity.
First, we note that $k_{\#} = 0$ implies $k_h = 0$, so Theorem~\ref{theo_minimum_combined_lb} implies that no algorithm can simultaneously have competitive ratio better than $1 + \frac{1}{\beta}$ if $k_h = 0$ and $\beta$ for arbitrary~$k_h$.	
In addition, we can give the following lower bound on the competitive ratio
as a function of $k_{h}$.

\begin{restatable}{theorem}{ThmLBAllErrorMeasures}
\label{thm_lb_error_measure}
	Any deterministic algorithm for MST \camera{under explorable uncertainty with predictions} has a competitive ratio $\rho\geq \min\{1+\frac{k_h}{\opt},2\}$, 
	even for edge disjoint prediction mandatory free cycles.
\end{restatable}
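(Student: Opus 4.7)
For every pair of integers $(n,k)$ with $0 \le k \le n$, I would construct a prediction mandatory free instance on which any deterministic algorithm makes at least $n+k$ queries while $\opt=n$ and $k_h=k$. This already yields a per-instance competitive ratio of at least $1 + k/n = 1 + k_h/\opt$, and taking $k=n$ recovers the classical lower bound of $2$, which handles the $k_h \geq \opt$ regime captured by the ``$\min$'' in the statement. The backbone is the standard two-intersecting-intervals obstruction used for the unconditional lower bound, embedded in $n$ pairwise edge-disjoint $2$-cycles (parallel edges in a multigraph) and enriched with predictions so that the whole instance becomes prediction mandatory free.

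\paragraph*{Construction.}
For $i=1,\ldots,n$, take parallel edges $e_i, e'_i$ on a fresh pair of vertices, with intervals $I_{e_i}=(100i,\,100i+10)$ and $I_{e'_i}=(100i+5,\,100i+15)$ and predictions $\pred{w}_{e_i}=100i+3$, $\pred{w}_{e'_i}=100i+13$. In each gadget the unique lower and upper limit trees coincide on $\{e_i\}$, and $e'_i$ plays the role of $f_i$ on the unique cycle. Since $\pred{w}_{e'_i} \ge U_{e_i}$ and $\pred{w}_{e_i} \le L_{e'_i}$, \Cref{mst_pred_free_characterization} certifies the whole instance as prediction mandatory free. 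The gap of $85$ between consecutive gadgets ensures that both the prediction and the eventual true value of any edge lie on the same side of every other gadget's intervals, which will make the hop distance additive across gadgets with no cross-gadget contribution.

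\paragraph*{Adversary and counting.}
Fix any deterministic algorithm $A$ and designate $k$ gadgets as \emph{bad} and the remaining $n-k$ as \emph{good}. In each good gadget set $w_{e_i}=100i+3$ and $w_{e'_i}=100i+13$; the gadget contributes $0$ to $k_h$ and one query is enough for both $A$ and the optimum. In each bad gadget the adversary waits for $A$'s first query in that gadget: if $A$ queries $e_i$ first, reveal $w_{e_i}=100i+7 \in I_{e_i} \cap I_{e'_i}$ and later commit $w_{e'_i}=100i+13$; symmetrically, if $A$ queries $e'_i$ first, reveal $w_{e'_i}=100i+7$ and set $w_{e_i}=100i+3$. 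In either sub-case the unqueried edge's weight could still be either above or below the revealed value, so $A$ is forced to make a second query, whereas $\OPT$ can solve the gadget with the single edge whose true value lies outside the other edge's interval. Summing, $|A| \ge 2k + (n-k) = n+k$ and $\opt = n$. Going through the definition of $k_{e''}(e)$, each bad gadget contributes exactly one unit of hop distance, namely the single predicted side-relation that the in-intersection revelation flips, while all other intra- and inter-gadget terms vanish. Hence $k_h=k$ and $|A|/\opt \ge 1 + k/n = 1 + k_h/\opt$.

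\paragraph*{Main obstacle.}
The technically delicate point is keeping $k_h$ precisely equal to $k$: the large inter-gadget offset prevents any ``far-away'' interval from spuriously witnessing an extra flipped side-relation, and a symmetric case analysis of the two orderings of $A$'s first query inside a bad gadget shows that exactly one of $k_{e'_i}(e_i)$ and $k_{e_i}(e'_i)$ becomes $1$ while the other stays $0$. That $A$ cannot save queries by cross-gadget reasoning follows from edge-disjointness: no query in one gadget carries information about the MST restriction to another, so the per-gadget query lower bound applies independently. Everything else is a direct application of \Cref{mst_pred_free_characterization} together with the definitions of $\OPT$ and $k_h$.
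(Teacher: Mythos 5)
Your proof is correct and follows essentially the same idea as the paper's: an adversary on two intersecting intervals inside a cycle, replicated in edge-disjoint copies. The paper uses triangles (a third trivially-in-the-MST edge $e_3$ plus the intersecting pair $e_1,e_2$) and then says ``taking multiple copies of this instance (connected by a tree structure) gives the same results for larger values of $k_h$,'' whereas you use $2$-cycles formed by parallel edges. Both are valid since $G$ is allowed to be a multigraph and both keep the cycles edge-disjoint and prediction mandatory free. Where you genuinely add value is in spelling out the good/bad decomposition: by designating $k$ gadgets bad and $n-k$ good you explicitly realize every ratio $k_h/\opt = k/n$, not just the boundary case $k_h = \opt$ that the paper's one-line ``multiple copies'' remark most naturally gives. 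Your hop-distance bookkeeping (exactly one of $k_{e'_i}(e_i)$, $k_{e_i}(e'_i)$ flips per bad gadget, no cross-gadget contribution because of the $100$-spacing) is carefully verified and matches the paper's informal ``it is easy to see that $k_h = 1$.''

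One small omission: your $n$ gadgets sit on disjoint vertex pairs, so the graph as described is disconnected and ``spanning tree'' is not yet meaningful. You need to link the gadgets by a backbone of trivial edges (singleton intervals), exactly as the paper does with ``connected by a tree structure''; these contribute nothing to $\opt$ or $k_h$. With that added, the argument is complete.
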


Again, we design an algorithm for prediction mandatory free instances with unique $T_L = T_U$ and use it in combination with 
\nnew{the preprocessing algorithm (\Cref{mst_end_of_phase_one})} to prove the following. 

\begin{restatable}{theorem}{thmErrorSensitive}
	\label{thm:error-sensitive}
	For every integer $\gamma\ge 2$, there exists a $\min\{1+ \frac{1}{\gamma} + \frac{\red{5} \cdot k_h}{|\OPT|}, \gamma+1 \}$-competitive algorithm for the MST problem \camera{under explorable uncertainty with predictions}.
\end{restatable}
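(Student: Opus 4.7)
The plan is to mirror the structure used for~\Cref{thm:optimal-tradeoff}: run the preprocessing algorithm of~\Cref{mst_end_of_phase_one} to reduce to a prediction mandatory free instance with unique $T_L=T_U$, and then run a new error-sensitive sub-algorithm $\mathcal{A}$ on the remaining instance. The goal for $\mathcal{A}$ is to produce a query set $\ALG_2$ satisfying the two guarantees
\[
|\ALG_2| \;\le\; |\OPT_2| + 2\,k_{h,2} \qquad\text{and}\qquad |\ALG_2| \;\le\; 3\,|\OPT_2|,
\]
where $\OPT_2$ is an optimal query set for the remaining prediction mandatory free instance and $k_{h,2}$ is the hop distance restricted to that instance. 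These two bounds together with~\Cref{mst_end_of_phase_one}, applied with the given $\gamma$, yield a competitive ratio $\min\{(1+\tfrac1\gamma)+\tfrac{5 k_h}{|\OPT|},\gamma+1\}$, since the preprocessing bound $(1+\tfrac1\gamma)(|(\ALG_1\cup D)\cap \OPT|+\jo(\ALG_1)+\oj(\ALG_1))$ combined with $|\ALG_2|\le |\OPT_2|+2k_{h,2}$ contributes at most $(1+\tfrac1\gamma)|\OPT|+((1+\tfrac1\gamma)\cdot 2+2)k_h\le (1+\tfrac1\gamma)|\OPT|+5k_h$ for $\gamma\ge 2$, while the robust bound $\gamma |(\ALG_1\cup D)\cap\OPT|+(\gamma-2)+3|\OPT_2|\le (\gamma+1)|\OPT|$ holds whenever $|\OPT_2|\ge 1$ (and the trivial case $|\OPT_2|=0$ reduces to the preprocessing alone).

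For the design of $\mathcal{A}$, I would build on Algorithm~\ref{ALG_mst_part_2} but replace the abort-to-a-$2$-competitive-algorithm step by an adaptive update. Throughout the execution, I maintain the bipartite vertex-cover instance $\bar G$, a maximum matching $h$, a minimum vertex cover $VC$, a set $Q$ of already-queried edges, and a set $H$ of \emph{pending partners}: each already-queried $e\in Q\cap VC$ contributes its partner $h(e)\notin Q$ to $H$, so that $\{e,h(e)\}$ was a witness pair at the moment $e$ was queried (using the ordering of~\Cref{mst_phase2_2}). The algorithm then keeps processing vertices of $VC$ in that ordering. When a query reveals a wrong prediction, the auxiliary graph changes; some edges of $\bar G$ disappear and $T_L$ may change, but by~\Cref{lemma_mst_tree_change,lem:remaining_predfree} the change is local and the instance remains prediction mandatory free. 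Recompute a maximum matching $h'$ and minimum vertex cover $VC'$ from $h$ via a bounded number of augmenting-path updates; whenever such an augmenting path uses a pending partner $v\in H$ as an internal vertex, immediately query $v$ (closing the witness pair $\{e,v\}$ it was reserved for) before continuing. Then resume the main loop with $h'$, $VC'$, an updated $H$, and the new ordering from~\Cref{mst_phase2_2}.

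The analysis partitions $\ALG_2$ into (i) vertex-cover queries that are paired with a still-pending partner when the run terminates, (ii) completed witness pairs $\{e,h(e)\}$, and (iii) partners flushed by the augmenting-path updates. Queries of type (i) are charged one-to-one against $\OPT_2$ since $VC\subseteq \OPT_2$ when no further errors occur, and type (ii) pairs are charged at a factor of two, one per witness pair. The error-sensitive bound then reduces to showing that the total cardinality of type (iii) queries is at most $2k_{h,2}$: each prediction error either removes exactly one edge from $\bar G$ (costing one augmenting path and thus at most one flushed partner) or forces a $T_L$-swap that, by~\Cref{Theo_hop_distance_mandatory_distance}, has hop-distance contribution at least one, so we can amortize constantly many flushes per error. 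The robust bound follows because, even with worst-case prediction error, the cumulative blow-up is at most the witness-pair factor $2$ plus one flushed partner per type-(i) query, yielding $|\ALG_2|\le 3|\OPT_2|$. The main obstacle is the bookkeeping across the dynamically changing $(\bar G,h,VC)$: guaranteeing that every flushed partner can be charged to a \emph{distinct} hop-distance unit, that the same edge of $\bar G$ is not paid for twice when it is added to and later removed from the matching during the augmenting-path surgery, and that the carefully chosen ordering of~\Cref{mst_phase2_2} can be restored after each update without invalidating the witness pairs already accounted for. Making this amortized charging work cleanly against $k_{h,2}$ is the technically delicate part of the proof.
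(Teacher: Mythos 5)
Your high-level plan is exactly the paper's: preprocess via \Cref{mst_end_of_phase_one} into a prediction mandatory free instance with unique $T_L = T_U$, then run a more adaptive variant of Algorithm~\ref{ALG_mst_part_2} that, upon observing an error, recomputes $(\bar G, h, VC)$ via augmenting paths, queries a carefully limited subset of the pending partners, and restarts. The structure of your charging (partition the sub-algorithm's queries into VC queries with live partners, completed witness pairs, and flushed partners; charge the last bucket against the hop distance) is also essentially the paper's.

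However, there is a concrete gap in the quantitative part. You need the sub-algorithm to satisfy $|\ALG_2|\le |\OPT_2| + 2\,k_{h,2}$ for your combination arithmetic to close at a constant of $5$, because you bound the preprocessing error terms by $(1+\tfrac1\gamma)(\jo(\ALG_1)+\oj(\ALG_1))\le (1+\tfrac1\gamma)\cdot 2\,k_h$ and then add $2\,k_{h,2}\le 2\,k_h$. The paper's sub-algorithm only achieves $|\ALG_2|\le |\OPT_2|+5\,k_{h,2}$ (\Cref{thm:predfree-error}), and its proof (\Cref{lem:error-consistency}) incurs a factor $4$ in bounding the augmenting-path flushes: per recomputation, at most half the flushed set $S_j$ consists of newly matched elements, each augmenting path adds two newly matched endpoints, and the number of augmenting paths is bounded by $\jo(Y_j)$, giving $|S_j|\le 4\jo(Y_j)$; the remaining $+1\cdot k_h$ comes from $\oj(Q_2\cup Q_3)$ in \Cref{lem:nonrematchqueries}. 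Your informal argument that ``each prediction error\ldots cost[s] one augmenting path and thus at most one flushed partner'' does not hold: a single wrong realization can trigger a $T_L$-swap that changes many $\bar G$-edges, and a single augmenting path can have multiple pending partners as internal vertices that must all be flushed to preserve the witness-pair structure. So the constant $2$ is not established, and you explicitly defer the ``technically delicate'' amortization.

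The paper avoids needing a $2$ by using a sharper combination: it observes that the errors contributing to $\max\{\jo(\ALG_1),\oj(\ALG_1)\}$ and those contributing to $k_h'$ (the hop distance of the residual instance after preprocessing, whose edges are disjoint from $\ALG_1\cup D$) are counted by disjoint pairs, so $\max\{\jo(\ALG_1),\oj(\ALG_1)\}+k_h'\le k_h$. With this, a $+5\,k_h'$ sub-bound combined with $(1+\tfrac1\gamma)\cdot 2\le 5$ on the preprocessing side already yields the claimed $+5\,k_h/|\OPT|$. If you keep your looser combination ($\jo+\oj\le 2k_h$ and $k_{h,2}\le k_h$) and only the achievable $+5\,k_{h,2}$ sub-bound, you get a constant around $7+\tfrac2\gamma$, which is weaker than the theorem. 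To close the gap, either prove the $+2\,k_{h,2}$ bound (which would require a genuinely different accounting than the augmenting-path argument) or replace your combination step by the disjointness argument the paper uses.
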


We actually show a robustness of $\max\{3,\gamma + \frac{1}{|\OPT|}\}$ which might be smaller than $\gamma +1$.
Our algorithm for prediction mandatory free instances asymptotically matches the error-dependent guarantee of~\Cref{thm_lb_error_measure} at the cost of a slightly worse robustness.

\begin{theorem}
	\label{thm:predfree-error}
	There exists a $\min\{1+ \frac{5 \cdot k_h}{|\OPT|}, 3\}$-competitive algorithm for prediction mandatory free instances with unique $T_L = T_U$ of the MST problem \camera{under explorable uncertainty with predictions}.
\end{theorem}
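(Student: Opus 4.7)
The plan is to extend Algorithm~\ref{ALG_mst_part_2} to an \emph{adaptive} procedure that maintains the vertex-cover/matching structure via augmenting-path updates whenever a query reveals a prediction error, rather than falling back to a $2$-competitive prediction-oblivious subroutine. The fallback used in Algorithm~\ref{ALG_mst_part_2} is incompatible with an additive $O(k_h)$ guarantee, since a single error would then trigger $\Theta(|\OPT|)$ additional queries.

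\textbf{Algorithm design.} Starting from the initial vertex cover $VC$ and matching $h$ of $\bar{G}$ (obtained via K\"onig--Egerv\'ary), I would process elements in the order prescribed by Lemma~\ref{mst_phase2_2}. When a queried $e \in VC$ satisfies $\jo(e)=0$, the witness set $\{e,h(e)\}$ is valid by Lemma~\ref{mst_phase2_2} and no further action is needed. When $\jo(e)\ge 1$, I would update $\bar{G}$ to reflect the revealed $w_e$, query $h(e)$ immediately, and repair the matching and vertex cover restricted to the unqueried elements via augmenting paths rooted at the vacated matched vertex. Following the third bullet of the technique overview, whenever a loose partner $h(e')$ of a previously queried $e'$ becomes an endpoint of a newly matched edge, I would proactively query both $h(e')$ and its new matching partner, so that the partially resolved witness sets of already-queried edges remain independent of the current matching. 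After each such update, I re-sort the remaining unqueried $VC$-elements as in Lemma~\ref{mst_phase2_2} and continue.

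\textbf{Error-sensitive analysis.} I would partition $\ALG$ into three groups: (i) queries to $VC$-elements whose prediction was correct; (ii) queries to matching partners triggered by errors; (iii) queries to edges touched by augmenting-path updates. Group (i) is charged to $\OPT$ through the disjoint witness pairs $\{e,h(e)\}$ (whose partners are never queried), yielding at most $|\OPT|$ contributions. For groups (ii) and (iii), I would show that each unit contributing to $k_h$, i.e., a single flipped relational bit $k_{e'}(e)=1$, is responsible for at most $5$ queries in these two groups: the triggering $e$ itself, its old partner $h(e)$, a possible new partner for $h(e)$, and at most two further edges along the short alternating path used by the augmentation. Summing over all errors gives $|\ALG|\le |\OPT|+5\,k_h$, hence the error-dependent bound $1+\tfrac{5 k_h}{|\OPT|}$.

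\textbf{Robustness and main obstacle.} The bound of $3$ follows by a direct witness-set argument: even when errors dominate, every queried element can be placed in a witness set of size at most $3$ obtained by combining the matching pairs from Lemma~\ref{mst_phase2_2} with the cycle/cut witness pairs from Lemma~\ref{lemma_mst_witness_set_1} around edges queried reactively after errors, so $|\ALG|\le 3|\OPT|$. Taking the minimum of the two bounds yields the theorem. The main obstacle is justifying the constant $5$: augmenting paths can in principle be long, and a single element of $\bar{G}$ could be implicated in several witness sets across different rounds, which would spoil the charging. This is resolved exactly by the proactive partner-querying rule above, together with the key observation that a length-$(2\ell{+}1)$ augmenting path can only arise from $\Omega(\ell)$ units of prediction error, because each alternation of the path corresponds to at least one interval whose relative order with respect to a predicted value must have changed. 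A careful case analysis of the possible structural updates on $\bar{G}$ then shows that the amortized per-error charge is at most $5$, completing the proof.
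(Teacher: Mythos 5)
Your high-level plan matches the paper's: maintain the bipartite matching/vertex-cover structure adaptively, repair it with augmenting paths when the vertex-cover instance changes, and proactively query the loose partners of previously queried elements (this corresponds to Algorithm~\ref{ALG_mst_part_2_error}'s Lines~\ref{line_mst_two_error_rematch1}--\ref{line_mst_two_error_rob1}). However, the charging argument that is supposed to deliver the constant $5$ is based on a claim that is false, and this leaves the central step of the proof unjustified.

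You argue that ``a length-$(2\ell{+}1)$ augmenting path can only arise from $\Omega(\ell)$ units of prediction error.'' That is not true and not how the accounting works: after a change of $\bar{G}$, the residual matching $\bar{h}_{j-1}$ may have deficiency $1$, yet the unique augmenting path that restores maximality can be arbitrarily long without any additional error being incurred along it (its interior alternations traverse already-matched pairs, which need no new error to exist). What \emph{is} bounded by the error is not the \emph{length} of augmenting paths but their \emph{number}, i.e.\ the increase $|h_j|-|\bar{h}_{j-1}|$ in matching size. The paper shows $|h_j|-|\bar{h}_{j-1}| = |VC_j|-|\bar{h}_{j-1}| \le \jo(Y_j)$ by exhibiting a vertex cover of $\bar{G}_j$ consisting of $\overline{VC}_{j-1}$ plus the set $U$ of endpoints of changed edges of $\bar{G}$, and bounding $|U|\le\jo(Y_j)$ via \Cref{lem:oldedgeerror}. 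Each augmentation then contributes only two newly matched elements, and Line~\ref{line_mst_two_error_rob1} queries at most two edges per such element, giving $|S_j|\le 4\jo(Y_j)$ and hence $|Q_4|\le 4k_h$. Your accounting is also off in where the fifth unit comes from: it is not a fifth edge per augmenting step but the separate bound $\oj(Q_2\cup Q_3)\le k_h$ from \Cref{lem:nonrematchqueries} applied to the mandatory queries made outside Line~\ref{line_mst_two_error_rob1}; the query to the triggering $e\in VC$ itself is charged to $\OPT$ through its disjoint witness pair, not to $k_h$. Finally, your robustness sketch (``every queried element can be placed in a witness set of size at most $3$'') does not obviously hold edge by edge; the paper instead splits the queries into $Q_1\cup Q_3\cup h^*(Q_1)$ (paid $2|\OPT|$ via disjoint witness pairs) and $Q'_4$ (paid $|W|\le|\OPT|$), which is a different and sharper decomposition. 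You would need to replace the augmenting-path-length claim by a bound on the number of augmentations in terms of a vertex-cover comparison, as in the paper, for the argument to go through.
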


We follow the same strategy as before.
However, Algorithm~\ref{ALG_mst_part_2} just executes a 
$2$-competitive algorithm once it detects an error. 
This is sufficient to achieve the optimal trade-off as we, if an error occurs, only have to guarantee $2$-competitiveness.
To obtain an error-sensitive guarantee however, we have to ensure both, $|\ALG| \le 3 \cdot |\OPT|$ \emph{and} $|\ALG| \le \OPT + 5 \cdot k_h$ even if errors occur. 
Further, we might not be able to afford queries to the complete set $W$ (Algorithm~\ref{ALG_mst_part_2}, Line~\ref{line_mst_two_rob1}) in the case of an error as this might violate $|\ALG| \le \OPT + 5 \cdot k_h$.

We adjust the algorithm to query elements of $f'_1,\ldots,f'_g$ and $l'_1,\ldots,l'_k$ as described in Lemma~\ref{mst_phase2_2} not only until an error occurs but until the vertex cover instance changes. 
That is, until some $f_i$ that at the beginning of the iteration is not part of $T_L$ becomes part of the lower limit tree, or some $l_i$ that at the beginning of the iteration is part of $T_L$ is not part of the lower limit tree anymore.
Once the instance changes, we recompute both, the bipartite graph $\bar{G}$ as well as the matching $h$ and minimum vertex cover $VC$ for $\bar{G}$.
Instead of querying the complete set $W$, we only query the elements of $W$ that occur in the recomputed matching, as well as the new matching partners of those elements.
And instead of switching to a $2$-competitive algorithm, we restart the algorithm with the recomputed matching and vertex cover.
Algorithm~\ref{ALG_mst_part_2_error} formalizes this approach.
In the algorithm, $h$ denotes a matching that matches each $e \in VC$ to a distinct $h(e) \not\in VC$; we use the notation $\{e,e'\} \in h$ to indicate that $h$ matches $e$ and $e'$. 
For a subset $U \subseteq VC$ let $h(U) = \{h(e) \mid e \in U\}$.
For technical reasons, the algorithm does not recompute an arbitrary matching $h$ but follows the approach of Lines~\ref{line_mst_two_error_rematch1} and~\ref{line_mst_two_error_rematch2}.
Intuitively, an arbitrary maximum matching $h$ 
might contain too many elements of $W$, which would lead to too many additional queries. 

\begin{algorithm}[tb]
	\KwIn{Prediction mandatory free graph $G=(V,E)$ with unique $T_L = T_U$. 
	}
	Compute maximum matching $h$ and minimum vertex cover $VC$ for $\bar{G}$ and set $W= \emptyset$\label{line_mst_two_error_compute_vc}\; 
	Let $f'_1,\ldots,f'_g$ and $l'_1,\ldots,l'_k$ be as described in Lemma~\ref{mst_phase2_2}\label{line_mst_two_error_l_edges} for $VC$ and $h$\;
	$L \gets T_L$, $N \gets E \setminus T_L$ \tcc*{\camera{recompute actual $T_L$ (and $T_U$) after each query}}
	\For{$e$ chosen sequentially from the {ordered list} $f'_1,\ldots,f'_g,l'_1,\ldots,l'_k$\label{line_mst_two_error_main_for}}{
		If $e$ is non-trivial\camera{, i.e., has not been queried yet}, query $e$\label{line_mst_two_error_query_vc} and add $h(e)$ to $W$\; 
		\camera{Apply Lemma~\ref{mst_preprocessing} to} ensure unique $T_L = T_U$. For each query $e'$, if $\exists a \text{ s.t.\ } \{e',a\} \in h$, query $a$ \label{line_mst_two_error_man1}\;	
		Let $\bar{G}'=(\bar{V}',\bar{E}')$ be the vertex cover instance for the current instance\;
		\If{some $e'\in L$ is not in $T_L$ or some $e' \in N$ is in $T_L$}{
			\Repeat{$R = \emptyset$}{
				Let $\bar{G} = \bar{G}'$ and $\bar{h} = \{\{e',e''\} \in h \mid \{e',e''\} \in \bar{E}'\}$\label{line_mst_two_error_rematch1}\;
				Compute $h$ and $VC$ by completing $\bar{h}$ with an augmenting path algorithm\label{line_mst_two_error_rematch2}\;
				Query $R = (VC \cup h(VC)) \cap (W \cup \{e' \mid \exists e \in W \text{ with } \{e,e'\} \in h \})$ \label{line_mst_two_error_rob1}\;
				Ensure unique $T_L = T_U$. For each query $e'$, if $\exists a \text{ s.t.\ } \{e',a\} \in h$, query $a$\label{line_mst_two_error_man2}\;
				Let $\bar{G}'=(\bar{V}',\bar{E}')$ be the vertex cover instance for the current instance\label{line_mst_two_previous_vc}\;
			}
			Restart at Line~\ref{line_mst_two_error_l_edges}. \camera{In particular, do \emph{not} reset $W$}\;
		}
	}
	\caption{Error-sensitive algorithm for prediction mandatory free instances.
	}
	\label{ALG_mst_part_2_error}
\end{algorithm}

Let $\ALG$ denote the queries of Algorithm~\ref{ALG_mst_part_2_error} on a prediction mandatory free instance with unique $T_L = T_U$.
We show~\Cref{thm:predfree-error} by proving $\ALG \le \OPT + 5 \cdot k_h$ and $\ALG \le 3 \cdot \OPT$.

Before proving the two inequalities, we state some key observations about the algorithm.
We argue that an element $e'$ can never be part of a partial matching $\bar{h}$ in an execution of Line~\ref{line_mst_two_error_rematch1} \emph{after} it is added to set $W$. 
Recall that the vertex cover instances only contain non-trivial elements.
Thus, if an element $e$ is queried in Line~\ref{line_mst_two_error_query_vc} and the current partner $e' = h(e)$ is added to set $W$, then the vertex cover instance at the next execution of Line~\ref{line_mst_two_error_rematch1} does not contain the edge $\{e,e'\}$ and, therefore, $e'$ is not part of the partial matching $\bar{h}$ of that line.
As long as $e'$ is not added to the matching by Line~\ref{line_mst_two_error_rematch2}, it, by definition, can never be part of a partial matching $\bar{h}$ in an execution of Line~\ref{line_mst_two_error_rematch1}.
As soon as the element $e'$ is added to the matching in some execution of Line~\ref{line_mst_two_error_rematch2}, it is queried in the following execution of Line~\ref{line_mst_two_error_rob1}.
Therefore, $e'$ can also not be part of a partial matching $\bar{h}$ in an execution of Line~\ref{line_mst_two_error_rematch1} after it is added to the matching again.
This leads to the following observation.

\begin{obs}
\label{obs:hstar}
An element $e'$ can never be part of a partial matching $\bar{h}$ in an execution of Line~\ref{line_mst_two_error_rematch1} \emph{after} it is added to set $W$.
Once $e'$ is added to the matching again in an execution of Line~\ref{line_mst_two_error_rematch2}, it is queried directly afterwards in Line~\ref{line_mst_two_error_rob1}, and cannot occur in Line~\ref{line_mst_two_error_query_vc} anymore.
\end{obs}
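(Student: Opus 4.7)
The plan is to prove both parts of the observation by carefully tracking which edges of $G$ are non-trivial (not yet queried) across iterations of the main \texttt{for}-loop and the inner \texttt{repeat}-loop. The crucial invariant I will exploit is that the vertex cover instance $\bar{G} = (\bar{V}, \bar{E})$ is always defined on the set of currently non-trivial edges as vertex set, so as soon as an edge is queried and becomes trivial, it disappears from $\bar{V}$ and every incident edge disappears from $\bar{E}$; this in turn removes it from any partial matching derived from $\bar{G}'$ via the intersection filter in Line~\ref{line_mst_two_error_rematch1}.

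For the first statement, I would reason as follows. An element $e'$ is added to $W$ only in the body of the \texttt{for}-loop, namely when some $e$ with $\{e,e'\} \in h$ is queried in Line~\ref{line_mst_two_error_query_vc}. After that query, $e$ is trivial, so the pair $\{e,e'\}$ is no longer in the subsequent $\bar{E}'$ and gets filtered out of $\bar{h}$ at the next execution of Line~\ref{line_mst_two_error_rematch1}. Since the prior $h$ is a matching, $e'$ was incident to exactly one matching edge in $h$, so $e'$ is unmatched in $\bar{h}$. To see that this remains the case in every later execution of Line~\ref{line_mst_two_error_rematch1}, I rule out that $e'$ re-enters some later $\bar{h}$: if $e'$ is never re-added to the matching by Line~\ref{line_mst_two_error_rematch2}, then trivially it cannot appear in any $\bar{h}$; otherwise, I invoke the second statement to conclude that $e'$ is queried in the immediately following Line~\ref{line_mst_two_error_rob1}, hence becomes trivial before the next execution of Line~\ref{line_mst_two_error_rematch1}, and the same filtering argument kills every edge incident to $e'$ in all later $\bar{h}$'s.

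For the second statement, the decisive ingredient is the explicit form of $R$ in Line~\ref{line_mst_two_error_rob1}, namely $R = (VC \cup h(VC)) \cap (W \cup \{e'' \mid \exists e \in W \text{ with } \{e,e''\} \in h\})$. If $e'$ has just been re-added to $h$ by the augmenting-path completion of Line~\ref{line_mst_two_error_rematch2}, then $e' \in VC \cup h(VC)$; and by hypothesis $e' \in W$, so $e'$ lies in both factors of the intersection. Thus $e' \in R$ and is queried. After this query $e'$ is trivial, so it cannot be a vertex of any future vertex cover instance, hence not in any future $VC$, and therefore not in any future list $f'_1,\ldots,f'_g,l'_1,\ldots,l'_k$ constructed in Line~\ref{line_mst_two_error_l_edges}; hence $e'$ cannot occur in Line~\ref{line_mst_two_error_query_vc} again.

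The main obstacle will be making precise the temporal ordering between ``$e'$ is added to $W$'', ``$e'$ is re-added to $h$'', and ``$e'$ is queried'', because the two halves of the observation are mutually referring: the first depends on the second to handle the re-matching case, while the second uses that $e' \in W$ at the re-matching time (which comes from the setup of the first). I would resolve this potential circularity by phrasing both halves as a single joint invariant maintained across all iterations of the main \texttt{for}-loop and the inner \texttt{repeat}-loop, something along the lines of: \emph{at every point in time between $e'$ being added to $W$ and $e'$ being queried, $e'$ is unmatched in the current matching $h$}, and then proving it by induction on the number of operations performed by the algorithm.
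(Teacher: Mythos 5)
Your proof is correct and essentially mirrors the paper's own argument (given inline just before the observation): once $e$ becomes trivial, $\{e,e'\}$ leaves $\bar{E}'$ and hence $e'$ drops out of $\bar{h}$ in the next execution of Line~\ref{line_mst_two_error_rematch1}; it stays out as long as it is not re-matched by Line~\ref{line_mst_two_error_rematch2}, and if it \emph{is} re-matched then $e'\in(VC\cup h(VC))\cap W\subseteq R$, so it is queried at once in Line~\ref{line_mst_two_error_rob1} and becomes permanently trivial. Two small remarks on the final paragraph. First, the ``circularity'' you flag is not actually present: the second statement relies only on the observation's shared premise ($e'\in W$, which persists since $W$ is never reset), not on the first statement's conclusion, so the first statement may simply invoke the second as a lemma in a plain case analysis. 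Second, your proposed joint invariant — ``$e'$ is unmatched in $h$ between the time it enters $W$ and the time it is queried'' — is slightly too strong: in the window between Line~\ref{line_mst_two_error_rematch2} re-matching $e'$ and Line~\ref{line_mst_two_error_rob1} querying it, $e'$ \emph{is} matched in $h$. The direct case analysis you already wrote does not need this invariant and is complete as stated.
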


We first analyze the queries that are \emph{not} executed in Line~\ref{line_mst_two_error_rob1}. 
Let $Q_1 \subseteq \ALG$ denote the queries of Line~\ref{line_mst_two_error_query_vc}.
For each $e \in Q_1$ let $h^*(e)$ be the matching partner of $e$ at the time it was queried, and let $h^*(Q_1) = \bigcup_{e \in Q_1} \{h^*(e)\}$.
Finally, let $Q_2$ denote the queries of Lines~\ref{line_mst_two_error_man1} and~\ref{line_mst_two_error_man2} to elements of $h^*(Q_1)$, and let $Q_3$ denote the remaining queries of those lines.

\begin{lemma}
\label{lem:nonrematchqueries}
$|Q_1 \cup Q_3 \cup h^*(Q_1)| \le 2 \cdot |\OPT \cap (Q_1 \cup Q_3 \cup h^*(Q_1))|$ and $|Q_1 \cup Q_2 \cup Q_3| \le  |\OPT \cap (Q_1 \cup Q_3 \cup h^*(Q_1))| + \oj(Q_2 \cup Q_3)$.  
\end{lemma}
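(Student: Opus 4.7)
My plan is to prove both inequalities by a charging argument that exploits the witness pairs of Lemma~\ref{mst_phase2_2} together with the fact that, in a prediction mandatory free instance, every mandatory edge satisfies $\oj(e) \ge 1$: this follows from Lemmas~\ref{mst_pred_free_characterization} and~\ref{Theo_hop_distance_mandatory_distance} since $E_P = \emptyset$ is maintained throughout the algorithm.

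I will first establish three structural facts. (A)~The sets $Q_1$, $Q_3$, and $h^*(Q_1)$ are pairwise disjoint: each edge is queried at most once, $Q_3$ is defined to be disjoint from $h^*(Q_1)$, and Observation~\ref{obs:hstar} prevents a $W$-element from reappearing in Line~\ref{line_mst_two_error_query_vc}. (B)~For every $e \in Q_1$, the pair $\{e, h^*(e)\}$ is a witness set, and these pairs are pairwise disjoint. The witness property is a direct consequence of Lemma~\ref{mst_phase2_2} applied to the phase between recomputations in which $e$ was queried; pairwise disjointness follows from the matching being one-to-one within such a phase together with Observation~\ref{obs:hstar}, which ensures that a $W$-element re-added to the matching in Line~\ref{line_mst_two_error_rematch2} is queried immediately in Line~\ref{line_mst_two_error_rob1} and therefore becomes trivial before it could be matched again.

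(C)~The third and most delicate fact, which I expect to be the main obstacle of the proof, is that every element of $Q_2$ is a \emph{mandatory} query (type~(i)) originating from an application of Lemma~\ref{mst_preprocessing} in Lines~\ref{line_mst_two_error_man1}/\ref{line_mst_two_error_man2}, and never a partner query (type~(ii)) triggered by such a mandatory edge. The supporting invariant is that, whenever a partner query is about to be issued, no non-trivial element of $W$ lies in the current matching except possibly the just-added $h(e)$ of a just-queried $e \in Q_1$ (which cannot be the matching partner of any mandatory $e' \neq e$ since $e$ itself is trivial). Inside the outer for loop the matching is fixed and one-to-one, so any hypothetical $h(e') \in W$ for a mandatory $e'$ would force $e'$ to coincide with a previously queried $Q_1$-element, a contradiction. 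Inside the Repeat loop, the exit condition $R = \emptyset$ combined with $R \supseteq (VC \cup h(VC)) \cap W$ forces every $W$-element that re-enters the matching to be queried in Line~\ref{line_mst_two_error_rob1} before the next Line~\ref{line_mst_two_error_man2} executes.

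Given (A)--(C), the first inequality follows by combining two witness-pair bounds. Fact (B) yields $|Q_1 \cup h^*(Q_1)| \le 2\,|\OPT \cap (Q_1 \cup h^*(Q_1))|$. Writing $M_3$ and $P$ for the type~(i) and type~(ii) queries contained in $Q_3$, fact (C) implies that every $a \in P$ is the matching partner of a unique triggering $e' \in M_3$ (the trigger cannot lie in $Q_2$ by the invariant), so the pairs $\{e', h(e')\} \subseteq Q_3$ are pairwise disjoint and each contains the mandatory edge $e' \in \OPT$; therefore $|Q_3| = |M_3| + |P| \le 2|M_3| \le 2\,|\OPT \cap Q_3|$. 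Adding the two bounds and using the disjointness in (A) proves the first inequality. For the second inequality, $\oj(e) \ge 1$ for every mandatory $e$ together with (C) gives $|Q_2| \le \oj(Q_2)$ and $|M_3| \le \oj(Q_3)$; substituting these, together with $|Q_1| \le |\OPT \cap (Q_1 \cup h^*(Q_1))|$ from (B) and $|M_3| \le |\OPT \cap Q_3|$, into the identity $|Q_1 \cup Q_2 \cup Q_3| = |Q_1| + |Q_2| + |M_3| + |P|$ yields the claim.
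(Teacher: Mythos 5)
Your plan follows the same overall decomposition as the paper's proof: (i) show the pairs $\{e,h^*(e)\}$ for $e\in Q_1$ are pairwise disjoint witness sets via Lemma~\ref{mst_phase2_2} and Observation~\ref{obs:hstar}, (ii) argue every element of $Q_2$ is mandatory (hence $\oj\ge 1$ and in $\OPT$), and (iii) argue $Q_3$ decomposes into mandatory queries and their partners, so at least half of $Q_3$ is mandatory. Your explicit split of $Q_3$ into $M_3$ and $P$ is a nice way to make the paper's ``at least half'' statement concrete, and the final accounting for both inequalities is correct.

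There is, however, a gap in your justification of fact~(C) and, correspondingly, in the claim ``the trigger cannot lie in $Q_2$.'' Your invariant (``whenever a partner query is about to be issued, no non-trivial element of $W$ lies in the current matching\ldots'') only rules out a partner query to an element $a$ that is \emph{already} in $W$ at the moment of that query. But $Q_2$ is defined post hoc as the set of Line~\ref{line_mst_two_error_man1}/\ref{line_mst_two_error_man2} queries to $h^*(Q_1)$, and membership in $h^*(Q_1)$ can in principle be established \emph{after} an edge has been queried: a priori, an edge $a$ could be queried as a partner query, become trivial, and only later be designated $h^*(e')$ for some $e'\in Q_1$ if the matching $h$ has not been recomputed in between. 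Your invariant says nothing about this temporal scenario, so it does not by itself show that every $Q_2$-element is a mandatory (type~(i)) query. The paper closes exactly this gap first: it proves that each $e\in Q_2$ must have been queried \emph{after} the $e'\in Q_1$ with $h^*(e')=e$ (by showing that the alternative forces $e'$ out of $Q_1$ or $h^*(e')\ne e$), which establishes that $e$ is already in $W$ when queried, at which point Observation~\ref{obs:hstar} gives that $e$'s only possible current matching partner is the already-trivial $e'$, so $e$ cannot be a partner query. You need this temporal-order step (or an equivalent argument that the scenario above is impossible) before your invariant can be applied; once it is added, the rest of your derivation goes through.
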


\begin{proof}
	First, consider $Q_1$ and $h^*(Q_1)$.
	By~\Cref{lem:remaining_predfree}, the instance is prediction mandatory free at the beginning of each restart of the algorithm. 
	\nnew{By}~\Cref{mst_phase2_2}, 
	each $\{e,h^*(e)\}$ with $e \in Q_1$ is a witness set. 
	We claim that all such $\{e,h^*(e)\}$ are pairwise disjoint, which 
	implies $|Q_1 \cup h^*(Q_1)| \le 2 \cdot |\OPT \cap (Q_1 \cup h^*(Q_1))|$.
	Otherwise, an element of $\{e,h^*(e)\}$ must occur a second time in Line~\ref{line_mst_two_error_query_vc} after $e$ is queried and $h^*(e)$ is added to $W$.
	Thus, either $e$ or $h^*(e)$ must become part of a recomputed matching in line~\ref{line_mst_two_error_rematch1}.
	By \Cref{obs:hstar} and since $e$ becomes trivial, this cannot happen.

	Consider an $e \in Q_2 \subseteq h^*(Q_1)$ and let $e' \in Q_1$ with $h^*(e')  = e$.
	Since $e' \in Q_1$, it was queried in Line~\ref{line_mst_two_error_query_vc}.
	Observe that $e$ must have been queried after $e'$, as otherwise either $e'$ would not have been queried in Line~\ref{line_mst_two_error_query_vc} (but together with $e$ in Line~\ref{line_mst_two_error_man1} or~\ref{line_mst_two_error_man2}), or $e$ would not have been the matching partner of $e'$ when it was queried by \Cref{obs:hstar}; both contradict $e' \in Q_1$ and $h^*(e')  = e$.
	This and \Cref{obs:hstar} imply that, at the time $e$ is queried, its current matching partner is either the trivial $e'$ or it has no partner. 
	So, $e$ must have been queried because it was mandatory and not as the matching partner of a mandatory element.
	Thus, each query of $Q_2$ is mandatory but, by~\Cref{lem:remaining_predfree},  not prediction mandatory at the beginning of the iteration in which it is queried. 
	Therefore, \Cref{Theo_hop_distance_mandatory_distance} implies that all mandatory elements $e$ of $Q_2$ have $\oj(e)\ge 1$. 
	It follows $|Q_1 \cup Q_2| \le |\OPT \cap (Q_1 \cap h^*(Q_1))| + \oj(Q_2)$.

	By the argument above, no element of $Q_3$ was queried as the matching partner to an element of $Q_2 \cup Q_1$. 
	Thus, by~\Cref{mst_preprocessing} and the definition of the algorithm, at least half the elements of $Q_3$ are mandatory, and we have $|Q_3| \le 2 \cdot |\OPT \cap Q_3|$ (and $\frac{1}{2}|Q_3| \le |\OPT \cap Q_3|$), which implies $|Q_1 \cup Q_3 \cup h^*(Q_1)| \le 2 \cdot |\OPT \cap (Q_1 \cup Q_3 \cup h^*(Q_1))|$.

	By the same argument as for 
	$Q_2$,
	all mandatory elements $e$ of $Q_3$ have $\oj(e)\ge 1$.
	Thus, $\oj(Q_3) \ge \frac{1}{2} \cdot |Q_3|$. 
	Combining $\oj(Q_3) \ge \frac{1}{2} \cdot |Q_3|$ and $\frac{1}{2}|Q_3| \le |\OPT \cap Q_3|$ implies $|Q_3| \le |\OPT \cap Q_3| + \oj(Q_3)$.
	So, $|Q_1 \cup Q_2 \cup Q_3| \le |\OPT \cap (Q_1 \cup Q_3 \cup h^*(Q_1))|+ \oj(Q_2 \cup Q_3)$.
\end{proof}

The first part of \Cref{lem:nonrematchqueries} 
captures all queries 
outside of
Line~\ref{line_mst_two_error_rob1} and all queries of Line~\ref{line_mst_two_error_rob1} to elements of $W = h^*(Q_1)$. 
Let $Q'_4$ be the remaining queries of Line~\ref{line_mst_two_error_rob1}.
By definition of the algorithm, $|Q'_4| \le |W|$.
Since $|W| \le |\OPT|$, we can conclude the next lemma.

\begin{lemma}
	\label{lem:error_rob}
	$|\ALG| \le 3 \cdot |\OPT|$.
\end{lemma}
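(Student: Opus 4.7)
The plan is a straightforward assembly of the already-established pieces. I would partition $\ALG$ into two groups: (i) all queries made outside of Line~\ref{line_mst_two_error_rob1}, together with those queries of Line~\ref{line_mst_two_error_rob1} that land in $W = h^*(Q_1)$; and (ii) the remaining Line~\ref{line_mst_two_error_rob1} queries, which is exactly $Q'_4$ as defined in the paragraph preceding the lemma. The goal is to bound group (i) by $2|\OPT|$ via Lemma~\ref{lem:nonrematchqueries} and group (ii) by $|W|\le|\OPT|$ via the witness-pair structure, then add the two bounds.

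For group (i), the key observation is containment: $Q_2 \subseteq h^*(Q_1)$ by the definition of $Q_2$, and the Line~\ref{line_mst_two_error_rob1} queries to $W$ are trivially contained in $W = h^*(Q_1)$. Hence the entire first group is a subset of $Q_1 \cup Q_3 \cup h^*(Q_1)$, and the first inequality of Lemma~\ref{lem:nonrematchqueries} yields the desired bound $2 \cdot |\OPT \cap (Q_1 \cup Q_3 \cup h^*(Q_1))| \le 2|\OPT|$.

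For group (ii), I would unpack Line~\ref{line_mst_two_error_rob1}: it queries $R = (VC \cup h(VC)) \cap (W \cup \{e' : \exists e \in W \text{ with } \{e,e'\} \in h\})$. Every element of $R$ that is not in $W$ must be a matching partner of some element of $W$ under the current matching $h$, and since $h$ is a matching, each element of $W$ contributes at most one such partner. Thus $|Q'_4| \le |W|$. To bound $|W|$ in turn, I would invoke Lemma~\ref{mst_phase2_2} together with the observation used in the proof of Lemma~\ref{lem:nonrematchqueries} that the pairs $\{e, h^*(e)\}$ for $e \in Q_1$ are pairwise disjoint witness sets (which in turn relies on Observation~\ref{obs:hstar}); consequently $|W| = |Q_1| \le |\OPT \cap (Q_1 \cup h^*(Q_1))| \le |\OPT|$.

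Summing the two bounds gives $|\ALG| \le 2|\OPT| + |\OPT| = 3|\OPT|$, which is the claim. The only thing left to check is that groups (i) and (ii) cover $\ALG$ without double counting: the categories $Q_1, Q_2, Q_3$ are pairwise disjoint by the line in which the query occurs, they are disjoint from any Line~\ref{line_mst_two_error_rob1} query, and $Q'_4$ is disjoint from $W$ by its definition. I do not expect any genuine obstacle here — the nontrivial witness-set, matching, and error-sensitive arguments are all encapsulated inside Lemma~\ref{lem:nonrematchqueries}, and the present lemma is essentially a bookkeeping step that pairs that bound with the easy $|Q'_4| \le |W| \le |\OPT|$ estimate.
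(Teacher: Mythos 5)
Your proposal is correct and follows exactly the same decomposition as the paper: bound $Q_1 \cup Q_2 \cup Q_3$ together with the Line~\ref{line_mst_two_error_rob1} queries that land in $W$ by $2|\OPT|$ via the first inequality of Lemma~\ref{lem:nonrematchqueries}, and bound $Q'_4$ by $|W| \le |\OPT|$. The paper asserts the two facts $|Q'_4|\le|W|$ and $|W|\le|\OPT|$ without elaboration, whereas you supply the short justification (one matching partner per $W$-element, and the disjoint-witness-set bound $|W|=|Q_1|\le|\OPT|$); this is just filling in detail the paper leaves implicit, not a different route.
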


Next, we show $|\ALG| \le |\OPT| + 5 \cdot k_h$.
\Cref{lem:nonrematchqueries} implies
$|Q_1 \cup Q_2 \cup Q_3| \le  |\OPT \cap (Q_1 \cup Q_3 \cup h^*(Q_1))| + \oj(Q_2 \cup Q_3)$.
Hence, it remains to upper bound $|Q_4|$ with $Q_4 = \ALG \setminus (Q_1 \cup Q_2 \cup Q_3)$ by $4 \cdot k_h$.
By definition, $Q_4$ only contains edges that are queried in Line~\ref{line_mst_two_error_rob1}.
Thus, at least half the queries of $Q_4$ are elements of $W$ that are part of the matching $h$.
By~\Cref{obs:hstar}, no element of $W$ is part of the partial matching $\bar{h}$ in Line~\ref{line_mst_two_error_rematch1}.
Hence, in each execution of Line~\ref{line_mst_two_error_rob1}, at least half the queries are not part of $\bar{h}$ in  Line~\ref{line_mst_two_error_rematch1} but added to $h$ in Line~\ref{line_mst_two_error_rematch2}. 
Our goal is to bound the number of such elements.  

We start with some definitions.
Define $G_j$ as the problem instance at the $j$'th execution of Line~\ref{line_mst_two_error_rematch2}, and let $G_0$ denote the initial problem instance.
For each $G_j$, let $\bar{G}_j = (\bar{V}_j, \bar{E}_j)$, $T_L^j$ and $T_U^j$ denote the corresponding vertex cover instance and lower and upper limit trees.
Observe that each $G_j$ has unique $T^j_L = T^j_U$, and, by~\Cref{lem:remaining_predfree}, is prediction mandatory free.
Let $Y_j$ denote the set of queries made by the algorithm  to transform instance $G_{j-1}$ into instance $G_j$.
We partition $Q_{4}$ into subsets $S_j$, where $S_j$ contains the edges of $Q_{4}$ that are queried in the $j$'th execution of Line~\ref{line_mst_two_error_rob1}.
We claim 
$|S_j| \le 4 \cdot \jo(Y_j)$
for each $j$, which implies $|Q_{4}| \le \sum_j |S_j| \le  4 \cdot \sum_j \jo(Y_j) \le 4 \cdot k_h$.
To show the claim, we rely on the following
 lemma.

\begin{restatable}{lemma}{remEdgeError}
	\label{lem:oldedgeerror}
		Let $l,f$ be non-trivial \camera{edges} in $G_j$ such that  $\{l,f\} \in \bar{E}_{j-1} \sym \bar{E}_{j}$, then $\oj(l),\oj(f)\! \ge \! 1$.
		\jnew{Furthermore, $\jo(Y_j) \ge |U|$ for the set $U$ of all endpoints of such edges $\{l,f\}$.}
\end{restatable}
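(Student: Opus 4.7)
The plan is to normalize the setup, then identify a queried ``bridge'' edge $e^*$ whose fundamental cycle in $T_L^{j-1}$ contains $l$, and finally use prediction mandatory freeness of both $G_{j-1}$ and $G_j$ to extract a prediction error at both $l$ and $f$ from the single edge $e^*$. Since $l, f$ are non-trivial in $G_j$, they are non-trivial in $G_{j-1}$ too, and Lemma~\ref{lemma_mst_tree_change} keeps their $T_L$-memberships unchanged between $G_{j-1}$ and $G_j$. By the structure of $\bar E$, we may take $l \in T_L$ and $f \notin T_L$ in both instances, with $I_l \cap I_f \neq \emptyset$. The change of $\{l,f\}$ in $\bar E_{j-1} \sym \bar E_j$ is therefore a change in whether $l \in C_f$. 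Without loss of generality, $l \in C_f^{j-1}$ and $l \notin C_f^j$ (the other case is symmetric upon swapping the roles of $j-1$ and $j$). Consequently, the unique $T_L^j$-path $P_f^j$ between $f$'s endpoints avoids $l$ and cannot lie entirely in $T_L^{j-1}$ (else path uniqueness would force $P_f^j = P_f^{j-1} \ni l$). Thus $P_f^j$ contains an edge $e^* \in T_L^j \setminus T_L^{j-1}$, and by Lemma~\ref{lemma_mst_tree_change} we have $e^* \in Y_j$.

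The main obstacle is to choose such an $e^*$ so that in addition $l \in C_{e^*}^{j-1}$, the fundamental cycle of $e^*$ in $T_L^{j-1}$. My plan is to work with the Eulerian subgraph $P_f^{j-1} \oplus P_f^j \subseteq T_L^{j-1} \cup T_L^j$, which decomposes into edge-disjoint simple cycles; let $D_l$ be the cycle containing $l$. Since $D_l$ cannot lie in the tree $T_L^{j-1}$, it contains at least one queried edge $e^* \in D_l \cap (T_L^j \setminus T_L^{j-1})$. If $l \in C_{e^*}^{j-1}$ for this choice, we are done; otherwise, replace $D_l$ by the simple cycle through $l$ in the Eulerian subgraph $D_l \oplus C_{e^*}^{j-1}$ and iterate. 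Because $C_{e^*}^{j-1}$ contributes exactly one edge to $T_L^j \setminus T_L^{j-1}$ (namely $e^*$, which is cancelled by the symmetric difference), the number of edges from $T_L^j \setminus T_L^{j-1}$ in the current cycle strictly decreases each round, so the procedure terminates with the desired $e^*$.

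With $e^*$ in hand, the bounds on $\pred w_{e^*}$ and $w_{e^*}$ follow from two applications of Lemma~\ref{mst_pred_free_characterization} (both $G_{j-1}$ and $G_j$ are prediction mandatory free by Lemma~\ref{lem:remaining_predfree}). In $G_{j-1}$, $e^*$ is non-trivial with $e^* \notin T_L^{j-1}$ and $l \in C_{e^*}^{j-1} \setminus \{e^*\}$, which gives $\pred w_{e^*} \ge U_l$. In $G_j$, $e^*$ is now trivial (queried) and lies in $T_L^j$ on the cycle $C_f^j$ with $f \notin T_L^j$; since $e^*$'s current interval has collapsed to $\{w_{e^*}\}$, the prediction mandatory free condition for tree edges in this cycle reads $w_{e^*} \le L_f$. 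Combined with $L_f < U_l$ from $I_l \cap I_f \neq \emptyset$, we obtain $w_{e^*} \le L_f < U_l \le \pred w_{e^*}$, so $\pred w_{e^*}$ places $w_{e^*}$ to the right of $I_l$ while the truth is not, yielding $k_l(e^*) = 1$ and $\oj(l) \ge 1$. Symmetrically, $\pred w_{e^*} > L_f$ is not ``left of $I_f$'' while $w_{e^*} \le L_f$ is, so $k_f(e^*) = 1$ and $\oj(f) \ge 1$. The counting statement follows at once: the procedure assigns to each $v \in U$ a queried edge $e^*_v \in Y_j$ with $k_v(e^*_v) = 1$, contributing $|U|$ distinct pairs $(e^*_v, v)$ to $\jo(Y_j) = \sum_{e^* \in Y_j} \sum_{e' \neq e^*} k_{e'}(e^*)$, hence $\jo(Y_j) \ge |U|$.
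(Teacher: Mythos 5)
Your core idea is the right one and matches the paper's: locate a queried "bridge" edge $e^*\in Y_j$ whose predicted value is forced to one side of $I_l$ and $I_f$ by prediction-mandatory-freeness of the \emph{non-trivial} instance, while its true value is forced to the other side by $e^*$ being trivial in the other instance and $T_L=T_U$ being unique. The counting step at the end is also fine. Two concrete issues remain.

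First, the claim that Case 2 ($\{l,f\}\in\bar E_j\setminus\bar E_{j-1}$) is \emph{``symmetric upon swapping the roles of $j-1$ and $j$''} is not correct as a reduction, because the relationship between $G_{j-1}$ and $G_j$ is intrinsically asymmetric: $G_j$ is obtained from $G_{j-1}$ by querying $Y_j$, so the bridge edge $e^*$ is non-trivial in $G_{j-1}$ and trivial in $G_j$, never the other way around. A literal swap of indices in your Case~1 derivation would assert ``in $G_j$, $e^*$ is non-trivial with $e^*\notin T_L^j$'' and ``in $G_{j-1}$, $e^*$ is now trivial'', both of which are false. The argument that actually closes Case~2 has the same flavor but different content: there $e^*\in T_L^{j-1}\setminus T_L^j$, and one applies the prediction-mandatory-free characterization to $G_{j-1}$ with $e^*$ as a \emph{tree} edge on $C_f^{j-1}$ to get $\w_{e^*}\le L_f$, while the fact that the now-trivial $e^*$ lies \emph{outside} $T_L^j=T_U^j$ and $l\in C_{e^*}^j$ gives $w_{e^*}\ge U_l$. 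So the roles of ``in-tree'' versus ``out-of-tree'' for $e^*$ flip, and the two $\le$/$\ge$ bounds swap sources; this needs to be spelled out rather than waved at.

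Second, the Eulerian-cycle iteration you use to force $l\in C_{e^*}^{j-1}$ is more machinery than the statement needs. The paper obtains the bridge edge in one step from a cut-parity observation: $f\in X_l^{j-1}$ (since $l\in C_f^{j-1}$), and any cycle — in particular $C_f^j$ — crosses the cut $X_l^{j-1}$ an even number of times, so $C_f^j$ must contain a second edge $l'\in X_l^{j-1}\setminus\{l,f\}$. That $l'$ automatically satisfies $l'\in T_L^j\setminus T_L^{j-1}$ (hence $l'\in Y_j$ by Lemma~\ref{lemma_mst_tree_change}), $l\in C_{l'}^{j-1}$ (since $l'\in X_l^{j-1}$), and $l'\in C_f^j$; no augmenting iteration is required. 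Your iterative construction appears to terminate correctly, but it makes the proof considerably harder to check and is not needed.
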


\begin{lemma}\label{lem:error-consistency}
	$|\ALG| \le |\OPT| + 5 \cdot k_h$.
\end{lemma}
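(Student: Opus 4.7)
The plan is to decompose $\ALG$ as $\ALG = Q_1 \cup Q_2 \cup Q_3 \cup Q_4$ (where $Q_1,Q_2,Q_3$ are as in Lemma~\ref{lem:nonrematchqueries} and $Q_4 = \ALG \setminus (Q_1 \cup Q_2 \cup Q_3)$) and bound each part separately, combining an ``optimum'' part and an ``error'' part.
First, the second inequality of Lemma~\ref{lem:nonrematchqueries} directly gives
\[
|Q_1 \cup Q_2 \cup Q_3| \;\le\; |\OPT \cap (Q_1 \cup Q_3 \cup h^*(Q_1))| + \oj(Q_2 \cup Q_3) \;\le\; |\OPT| + k_h,
\]
where the last step uses $\oj(Q_2 \cup Q_3) \le \oj(E) = k_h$. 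It therefore suffices to prove $|Q_4| \le 4\,k_h$, which together with the bound above yields the claim.

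For $Q_4$, I would partition it along iterations of the algorithm: set $Q_4 = \bigcup_j S_j$ where $S_j$ collects those elements of $Q_4$ queried during the $j$-th execution of Line~\ref{line_mst_two_error_rob1}. Because the edge sets $Y_j$ (queries turning $G_{j-1}$ into $G_j$) are pairwise disjoint, one has $\sum_j \jo(Y_j) \le \jo(E) = k_h$. Hence the task reduces to the key inequality
\[
|S_j| \;\le\; 4 \cdot \jo(Y_j) \qquad \text{for every } j,
\]
so that $|Q_4| \le \sum_j |S_j| \le 4 \sum_j \jo(Y_j) \le 4\,k_h$.

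The last step is the main obstacle and is where Lemma~\ref{lem:oldedgeerror} enters. By Observation~\ref{obs:hstar}, no element that has ever been placed in $W$ can belong to the partial matching $\bar h$ of Line~\ref{line_mst_two_error_rematch1}; so whenever such an element reappears in $VC \cup h(VC)$ (and is then queried in Line~\ref{line_mst_two_error_rob1}), it must have been \emph{newly} matched in the augmenting-path completion of Line~\ref{line_mst_two_error_rematch2}. Every $e \in S_j$ is therefore an endpoint of some edge of $M_j := h \setminus \bar h$ having at least one endpoint in $W$, giving $|S_j| \le 2\,|M_j^W|$ for this subset $M_j^W \subseteq M_j$. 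I would then charge each edge $\{l,f\}\in M_j^W$ to an endpoint of a changed edge in $\bar E_{j-1}\sym \bar E_j$: an edge in $\bar E_j \setminus \bar E_{j-1}$ charges itself, while an edge already present in $\bar E_{j-1}$ that is only now matched must lie on an augmenting path that, by maximality of the previous matching $h^{j-1}$, has to traverse some edge of $\bar E_{j-1}\sym \bar E_j$ (otherwise the same augmentation would have been available previously). A careful accounting of at most two edges of $M_j^W$ per such changed-edge endpoint yields $|M_j^W| \le 2\,\jo(Y_j)$, by Lemma~\ref{lem:oldedgeerror}. This gives $|S_j| \le 4\,\jo(Y_j)$ and completes the proof.
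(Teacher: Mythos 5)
Your overall plan follows the paper's proof exactly: the same split $\ALG = Q_1 \cup Q_2 \cup Q_3 \cup Q_4$, the same use of Lemma~\ref{lem:nonrematchqueries} to bound $|Q_1\cup Q_2\cup Q_3|$, the same partition of $Q_4$ into per-iteration sets $S_j$, and the same target inequality $|S_j|\le 4\jo(Y_j)$. Your observation that $|S_j|\le 2|M_j^W|$ (every element queried in Line~\ref{line_mst_two_error_rob1} is an endpoint of a newly-matched edge with at least one endpoint in $W$, by Observation~\ref{obs:hstar}) is correct and is essentially the paper's ``at least half of $S_j$ are newly matched'' step, counted by edges rather than by elements.

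Where you diverge---and where a genuine gap appears---is the final step $|M_j^W|\le 2\jo(Y_j)$. The paper does \emph{not} charge newly matched edges to changed edges along augmenting paths. Instead it reduces the matching-growth quantity to a vertex-cover count: it shows $|h_j|-|\bar h_{j-1}|\le \jo(Y_j)$ by constructing an explicit vertex cover $\overline{VC}_{j-1}\cup U$ for $\bar G_j$ (where $U$ is the set of non-trivial endpoints of edges in $\bar E_{j-1}\sym\bar E_j$) and invoking K\H{o}nig--Egerv\'ary's duality; Lemma~\ref{lem:oldedgeerror} bounds $|U|\le \jo(Y_j)$. Each of the $|h_j|-|\bar h_{j-1}|$ augmentations newly matches exactly two previously unmatched elements, so the number of newly matched $W$-elements (and hence $|M_j^W|$) is at most $2(|h_j|-|\bar h_{j-1}|)\le 2\jo(Y_j)$.

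Your proposed direct charging is not sound as written. The claim that an augmenting path of $\bar h_{j-1}$ consisting only of edges in $\bar E_{j-1}\cap \bar E_j$ ``would have been available previously'' does not follow: its endpoints are free with respect to $\bar h_{j-1}$ but may well have been matched by $h_{j-1}$ through deleted edges in $\bar E_{j-1}\setminus \bar E_j$, so the path need not be augmenting for $h_{j-1}$. (The correct dichotomy has a third case: the path touches no changed edge but one of its endpoints is incident to a deleted edge.) Moreover, ``a careful accounting of at most two edges of $M_j^W$ per changed-edge endpoint'' is precisely the technical heart of the step and is left as a handwave; since several augmenting paths, and several new edges of a single path, can interact with the same changed edge, it is not evident that such an injective charging exists. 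The vertex-cover route sidesteps all of this: you should replace your charging argument by the paper's construction of $\overline{VC}_{j-1}\cup U$ and the König--Egerváry step, and then recover $|M_j^W|\le 2(|h_j|-|\bar h_{j-1}|)$ from the augmenting-path endpoint count.
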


\begin{proof}
	We show $|S_j| \le 4 \cdot \jo(Y_j)$ for each $j$, which, in combination with \Cref{lem:nonrematchqueries}, implies the lemma.
	Consider an arbitrary $S_j$ and the corresponding set $Y_j$.
	Further, let $h_{j-1}$ denote the maximum matching computed and used by the algorithm for vertex cover instance $\bar{G}_{j-1}$, and let  $\bar{h}_{j-1} = \{\{e,e'\} \in h_{j-1} \mid \{e,e'\} \in \bar{E}_j\}$.
	Finally, let $h_j$ denote the matching that the algorithm uses for vertex cover instance $\bar{G}_j$.
	That is, $h_j$ is computed by completing $\bar{h}_{j-1}$ with a standard augmenting path algorithm.
	As already argued, at least half the elements of $S_j$  are not matched by $\bar{h}_{j-1}$ but are matched by $h_j$ (cf.~\Cref{obs:hstar}).
	
	We bound the number of such elements by exploiting that $h_j$ is constructed from $\bar{h}_{j-1}$ via a standard augmenting path algorithm.
	By definition, each iteration of the augmenting path algorithm increases the size of the current matching (starting with $\bar{h}_{j-1}$) by one and, in doing so, matches two new elements.
	In total, at most $2 \cdot (|h_j|-|\bar{h}_{j-1}|)$ previously unmatched elements become part of the matching.
	Thus, $|S_j| \le 4 \cdot (|h_j|-|\bar{h}_{j-1}|)$.
	
	We show $(|h_j|-|\bar{h}_{j-1}|) \le  \jo(Y_j)$.
	According to \emph{K\H{o}nig-Egerv\'ary}'s Theorem (e.g.,~\cite{Biggs1986}), the size of $h_j$ is equal to the size $|VC_j|$ of the minimum vertex cover for~$\bar{G}_j$. 
	We show $|VC_j| \le |\bar{h}_{j-1}| + \jo(Y_j)$, which implies $(|h_j|-|\bar{h}_{j-1}|) {=}  |VC_j| - |\bar{h}_{j-1}| \le \jo(Y_j)$, and, thus, the claim. 	
	Let $\overline{VC}_{j-1} = \{e \in VC_{j-1} \mid \exists e' \text{ s.t.\ } \{e,e'\} \in \overline{h}_{j-1}\}$, then $|\overline{VC}_{j-1}| = |\overline{h}_{j-1}|$.

	We prove that we can construct a vertex cover for $\bar{G}_j$ by adding at most $\jo(Y_j)$ elements to $\overline{VC}_{j-1}$, which implies $|VC_j| \le |\overline{h}_{j-1}| + \jo(Y_j)$.
	Consider vertex cover instance $\bar{G}_j$ and set $\overline{VC}_{j-1}$.
	By definition, $\overline{VC}_{j-1}$ covers all edges that are part of partial matching $\bar{h}_{j-1}$.
	
	Consider the elements of $\bar{V}_j$ that are an endpoint of an edge in $\{e,f\} \in \bar{E}_j \sym \bar{E}_{j-1}$ with $e,f$ non-trivial in $G_j$.
	By \Cref{lem:oldedgeerror}, $\oj(e) \ge 1$ for each such $e$ \jnew{and $\jo(Y_j) \ge |U|$ for the set $U$ of all such elements. Thus, we can afford to add $U$ to the vertex cover.}
	
	Next, consider an edge $\{e,f\} \in \bar{E}_j$ that is not covered by $\overline{VC}_{j-1} \cup U$.	
	Since $\{e,f\}$ is not covered by $U$, it must hold that $\{e,f\} \in \bar{E}_j \cap \bar{E}_{j-1}$.
	Thus, $\{e,f\}$ was covered by $VC_{j-1}$ but is not covered by $\overline{VC}_{j-1}$.
	This implies $\{e,f\} \cap VC_{j-1} \not= \emptyset$ but $\{e,f\} \cap \overline{VC}_{j-1} = \emptyset$.
	Assume w.l.o.g.~that $e \in VC_{j-1}$.
	Then, there must be an $e'$ such that $\{e,e'\} \in h_{j-1}$ but $\{e,e'\} \not\in \bar{h}_{j-1}$.
	It follows that $\{e,e'\} \not \in \bar{E}_j$.
	As $\{e,f\}$ is not covered by $U$, the endpoint $e'$ must be trivial in $G_j$ but non-trivial in $G_{j-1}$.
	Thus, $e'$ must have been queried (i) as a mandatory element (or a matching partner) in Line~\ref{line_mst_two_error_man1} or~\ref{line_mst_two_error_man2}, (ii) as part of $VC_{j-1}$ in Line~\ref{line_mst_two_error_query_vc} or (iii) in Line~\ref{line_mst_two_error_rob1}.
	Case (ii) implies $e' \in VC_{j-1}$, contradicting $e \in VC_{j-1}$.
	Cases (i) or (iii) imply a query to the matching partner $e$ of $e'$, which contradicts $\{e,f\} \in \bar{E}_j$ (as $e$ would be trivial).
	Thus, $\{e,f\}$ is covered by $\overline{VC}_{j-1} \cup U$, which implies that $\overline{VC}_{j-1} \cup U$ is a vertex cover for $\bar{G}_j$.
	\jnew{\Cref{lem:oldedgeerror} implies $|U| \le \jo(Y_j)$. So, $|VC_j| \le |\bar{h}_{j-1}| + \jo(Y_j)$ which concludes~the~proof.}
\end{proof}

\Cref{lem:error-consistency,lem:error_rob} imply~\Cref{thm:predfree-error}.
Combining~\Cref{thm:predfree-error,mst_end_of_phase_one}, we show Theorem~\ref{thm:error-sensitive}. 

\section{Further research directions}

%

\smallskip \noindent Plenty other (optimization) problems seem natural in the context of explorable uncertainty with untrusted predictions. For our problem, it would be nice to close the gap in the robustness.
We expect that our results extend 
\nnew{to all matroids as it does in the classical setting.} 
{While we ask 
for the minimum number of queries to solve a problem {\em exactly}, it is natural to ask for approximate solutions.  The bad news is that for 
the MST problem 
there is no improvement over the robustness guarantee of $2$ possible even when allowing an arbitrarily large approximation of the exact solution~\cite[Section 10]{megow17mst}. However, it remains open whether an improved consistency or an error-dependent competitive ratio are possible.}


\bibliography{queries.bib,ml.bib}

\appendix
\section{Missing proofs and additional material for Section~\ref{sec:overview}}

\paragraph*{Lower bound for error measure $k_{\#} = 1$}
\label{app:prelim}
In this appendix we prove the following lemma, which shows
that even if a single prediction is wrong ($k_{\#} = 1$), the competitive ratio cannot be better than the known lower bound of~$2$.
Hence, it is impossible to achieve a smoothly degrading performance ratio with respect
to the error measure $k_{\#}$.

\begin{figure}[b]
  \centering
  \begin{subfigure}[t]{0.3\textwidth}
  \centering
  \scalebox{0.3333}{\input{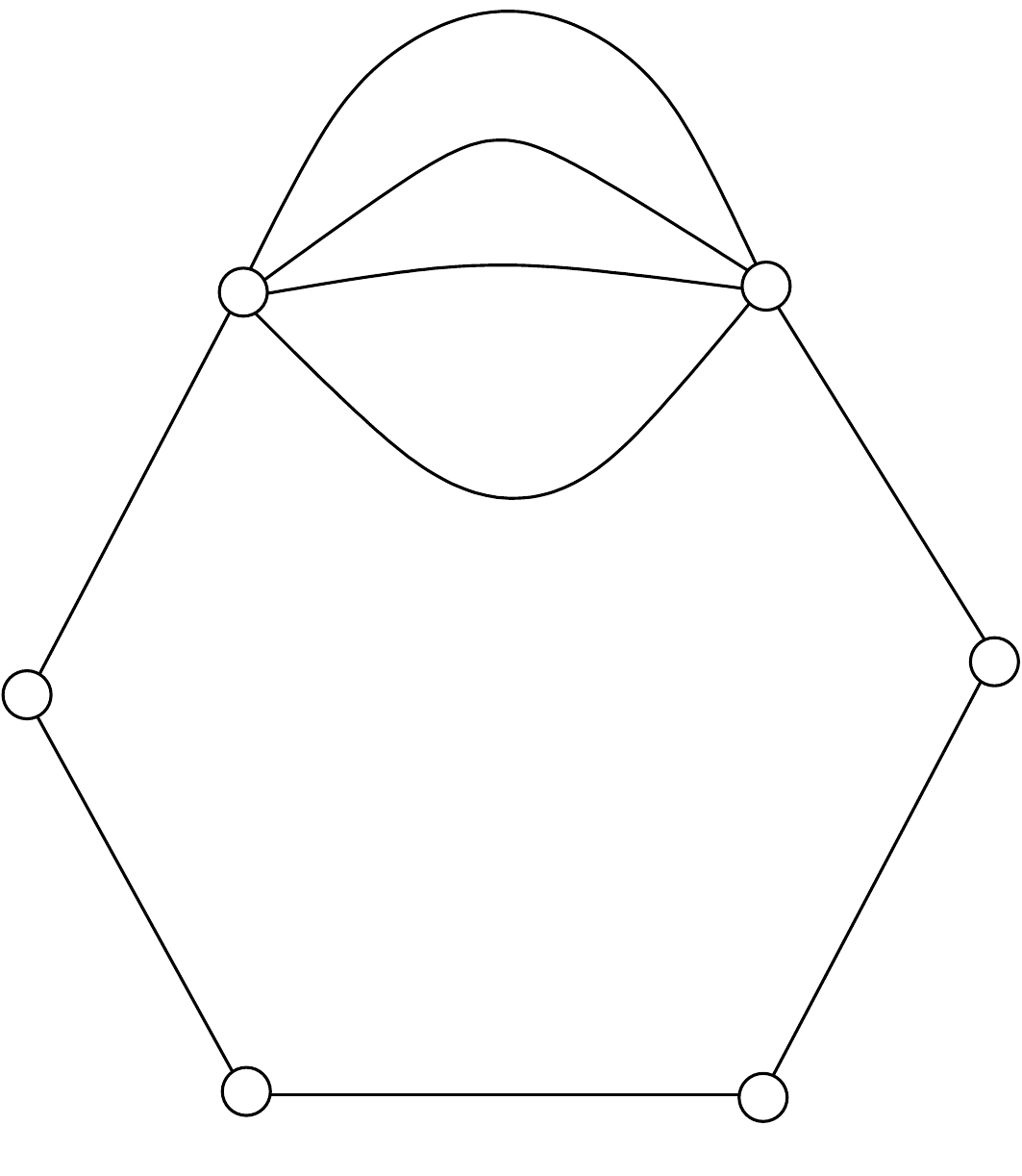_t}}
  \caption{Input graph}
  \label{fig_lb_wrong_predictions_graph}
  \end{subfigure}
  \begin{subfigure}[t]{0.3\textwidth}
  \centering
  \begin{tikzpicture}[line width = 0.3mm, scale=0.8]
    \intervalpr{$I_1$}{0}{2}{0}{0.5}{0.5}
    \intervalpr{$I_2$}{0}{2}{0.5}{0.5}{0.5}
    \intervalpr{$I_n$}{0}{2}{1.5}{0.5}{1.5}
    \path (1, 0.5) -- (1, 1.6) node[font=\normalsize, midway, sloped]{$\dots$};

    \intervalpr{$I_{n+1}$}{1}{3}{2}{2.5}{2.5}
    \intervalpr{$I_{n+2}$}{1}{3}{2.5}{2.5}{2.5}
    \intervalpr{$I_{2n}$}{1}{3}{3.5}{2.5}{2.5}
    \path (2, 2.5) -- (2, 3.6) node[font=\normalsize, midway, sloped]{$\dots$};
  \end{tikzpicture}
  \caption{Interval structure}
  \label{fig_lb_wrong_predictions}
  \end{subfigure}
   \caption{Instance for a lower bound based on the number of inaccurate predictions.
  (\subref{fig_lb_wrong_predictions_graph})~Input graph for MST with uncertainty.
  (\subref{fig_lb_wrong_predictions})~Uncertainty intervals of the $2n$ edges.
  Red crosses indicate predicted values, and green circles show true values.
  \label{fig_lowerbounds}
  }
\end{figure}

\begin{lemma}\label{lem:lb_wrong_predictions}
  Even if $k_{\#} = 1$, then any deterministic algorithm for the MST problem under uncertainty has competitive ratio $\rho\geq 2$.
\end{lemma}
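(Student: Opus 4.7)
The plan is to reuse the classical two-interval adversary for MST under explorable uncertainty~\cite{erlebach08steiner_uncertainty}, rearranged so that its single ``trap'' falls on an edge whose prediction the adversary then declares wrong. The key observation is that if the two predicted values both lie inside the intersection of two overlapping uncertainty intervals, then the predictions provide no discriminating information, and the standard adversary retains enough freedom to force a second query while keeping $k_\#$ equal to exactly $1$.

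I will work with a graph containing a length-two cycle (in the multigraph sense, which the paper explicitly permits) on two non-trivial edges $e_1, e_2$ with overlapping intervals $I_{e_1} = (0,2)$ and $I_{e_2} = (1,3)$, and set the predictions $\pred{w}_{e_1} = \pred{w}_{e_2} = 1.5$, both inside the overlap. To phrase this inside the $2n$-edge instance of Figure~\ref{fig_lowerbounds}, the remaining edges may be taken with predictions that match their true values, so they contribute $0$ to $k_\#$ and identical counts to $|\OPT|$ and $|\ALG|$; the ratio is then decided entirely on the trap pair.

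Fix any deterministic algorithm. Its first query into $\{e_1, e_2\}$ is a deterministic function of the input; without loss of generality assume it queries $e_1$. The adversary answers $w_{e_1} = 1.5$, matching $\pred{w}_{e_1}$, so no prediction error has yet been charged. Since $w_{e_1} = 1.5 \in I_{e_2}$, the algorithm still cannot decide which of the two edges is the cycle maximum, so it must query $e_2$ as well. The adversary now reveals $w_{e_2} = 2.5$, which differs from $\pred{w}_{e_2} = 1.5$ and brings $k_\#$ to exactly $1$. The algorithm has performed two queries. On the realised instance, however, the singleton $\{e_2\}$ is already feasible: the query returns $w_{e_2} = 2.5 > U_{e_1}$, which exceeds every possible value of $w_{e_1} \in I_{e_1}$, so $e_2$ is the strict cycle maximum and the MST is identified. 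Hence $|\OPT| = 1$. The symmetric case in which the algorithm first queries $e_2$ is handled by the mirrored schedule: reveal $w_{e_2} = 1.5$, force the query of $e_1$, then reveal $w_{e_1} = 0.5$, again producing $k_\# = 1$, $|\ALG| = 2$, and $|\OPT| = 1$. Either way $|\ALG| \ge 2\,|\OPT|$, giving the claimed $\rho \ge 2$. The only subtlety is that the adversary has to commit to which edge carries the wrong prediction after seeing the algorithm's first query; this is legitimate because the lemma only requires one bad instance per algorithm, and no deeper obstacle arises.
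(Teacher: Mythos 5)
Your proof is correct. The two-parallel-edge multigraph with $I_{e_1}=(0,2)$, $I_{e_2}=(1,3)$ and both predictions placed at $1.5$ gives a legitimate adaptive adversary: whichever edge the algorithm opens first is answered with the predicted value $1.5$, which lies in the other edge's interval, forcing a second query; the second query is then answered with a value outside the overlap, so that the second edge alone is a feasible query set. This yields $|\ALG|=2$, $|\OPT|=1$, $k_\#=1$, hence $\rho\geq 2$ under the paper's strict multiplicative definition of competitiveness. The paper's own construction is different: it uses a length-$n$ path plus $n$ parallel edges between the endpoints, obtaining a family of instances with $|\OPT|=n$ and $|\ALG|=2n$ for every $n$. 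That family is more robust as a lower-bound technique --- a single instance with $|\OPT|=1$ would not rule out a hypothetical guarantee of the form $|\ALG|\le\rho\,|\OPT|+c$ --- but since the paper's definition has no additive slack, your minimal instance is formally sufficient for the lemma as stated. One concrete flaw in your write-up: the remark that the trap pair can be embedded inside the $2n$-edge instance of Figure~\ref{fig_lowerbounds} with the other edges contributing ``identical counts to $|\OPT|$ and $|\ALG|$'' so that ``the ratio is then decided entirely on the trap pair'' is incorrect. If the remaining edges add $m>0$ queries to both sides, the ratio becomes $(m+2)/(m+1)<2$, so such an embedding dilutes rather than preserves the bound. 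Drop that aside; the two-edge multigraph argument stands on its own.
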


\begin{proof}
Consider an input graph that consists of a path $P$ with $n$ edges $e_1,e_2,\ldots,e_n$ and
$n$ parallel edges $e_{n+1},e_{n+2},\ldots,e_{2n}$ between the endpoints of the path,
see Fig.~\ref{fig_lb_wrong_predictions_graph}. Each edge $e_i$ is given with
an uncertainty interval $I_i$ and a predicted value $\w_{e_i}$. The structure of the
intervals and their predicted values is shown in Fig.~\ref{fig_lb_wrong_predictions}.
If all the predictions are correct, the MST consists of the path $P$, and this
can be verified either by querying the $n$ edges of $P$, or by querying the
$n$ parallel edges between the endpoints of~$P$.
Assume w.l.o.g.\ that an algorithm queries the left-side intervals in the order $I_1, I_2, \ldots, I_n$ and the right-side intervals in the order $I_{n+1}, I_{n+2}, \ldots, I_{2n}$.
Before the algorithm queries~$I_n$ or~$I_{2n}$, the adversary sets all predictions as correct, so the algorithm will eventually have to query~$I_n$ or~$I_{2n}$.
If the algorithm queries~$I_n$ before~$I_{2n}$, then the adversary chooses a value for $I_n$ that forces a query in $I_{n+1}, \ldots, I_{2n}$ (illustrated in Fig.~\ref{fig_lb_wrong_predictions}), and the {predicted} values for the remaining right-side intervals as correct, so the optimum solution only queries $I_{n+1}, \ldots, I_{2n}$.
A symmetric argument holds if the algorithm queries~$I_{2n}$ before~$I_n$.
In either case, the MST is $P$ and the optimal query set consists of $n$ queries, while the algorithm is forced to make $2n$ queries.
Furthermore, a single prediction is incorrect, so $k_{\#}=1$.
\end{proof}

\paragraph*{Learnability of predictions}
\label{app:learning}
In this section, we argue about the learnability of our predictions with regard to the error measure $k_h$. 
%
\nnew{We show the following result.
\begin{theorem}
\label{thm:learning}
{The predicted values $\pred{w}$ for a given instance of MST under explorable uncertainty are PAC learnable w.r.t.~$k_h$ with polynomial time and sample complexity.}
\end{theorem}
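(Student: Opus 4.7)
\textbf{Proof proposal for Theorem~\ref{thm:learning}.}
The plan is to reduce the learning problem to empirical risk minimization over a hypothesis class of polynomial size, exploiting the fact that the hop distance $k_h$ decomposes additively over edges and depends on each predicted value $\pred{w}_e$ only through a small amount of combinatorial information. Formally, I will model the learning task as follows: there is an unknown distribution $\mathcal{D}$ over weight vectors $w \in \prod_{e \in E} I_e$; the learner receives i.i.d.\ samples $w^{(1)}, \dots, w^{(m)} \sim \mathcal{D}$ and must output a prediction vector $\pred{w} \in \prod_{e \in E} I_e$ whose expected hop distance $\EX_{w \sim \mathcal{D}}[k_h(\pred{w},w)]$ is within $\epsilon$ of the best achievable by any fixed $\pred{w}^*$, with probability at least $1-\delta$.

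The first step is to observe that $k_h(\pred{w},w) = \sum_{e \in E} \jo(e)$ where $\jo(e)$ depends only on the coordinate $\pred{w}_e$ and on $w$, never on $\pred{w}_{e'}$ for $e' \neq e$. Thus the optimization decouples: for each edge $e$ we may learn $\pred{w}_e$ independently. The second key step is a discretization: for a fixed $e$, the function $\pred{w}_e \mapsto \jo(e)(\pred{w}_e, w)$ is piecewise constant in $\pred{w}_e$, changing only at the interval endpoints $L_{e'}, U_{e'}$ of the other edges $e' \neq e$ that fall inside $I_e$. Consequently, there are at most $2|E|+1$ equivalence classes of candidate predictions in $I_e$, and choosing one representative from each yields a finite hypothesis class $\mathcal{H}_e$ with $|\mathcal{H}_e| = O(|E|)$ that contains an optimum prediction for edge $e$.

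The third step is a standard uniform convergence argument. For each fixed $\pred{w}_e \in \mathcal{H}_e$, the random variable $\jo(e)(\pred{w}_e, w)$ lies in $[0,|E|{-}1]$, so Hoeffding's inequality gives that $m = O\bigl(|E|^2 \log(|E|/\delta)/\epsilon^2\bigr)$ samples suffice to uniformly estimate the expected per-edge error over $\mathcal{H}_e$ up to additive $\epsilon/|E|$ with probability $1-\delta/|E|$. Taking a union bound over the $|E|$ edges and selecting, for each edge $e$, the empirically best prediction $\widehat{w}_e \in \mathcal{H}_e$ yields a total expected hop distance at most $\epsilon$ above optimum, establishing PAC-learnability; both the number of samples and the ERM computation are polynomial in $|E|$, $1/\epsilon$ and $\log(1/\delta)$.

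The main conceptual obstacle is the discretization argument: one must verify that restricting $\pred{w}_e$ to a single representative per equivalence class does not lose optimality for \emph{any} realization of $w$ in the support, which is what makes the per-edge decoupling and the union bound sound. Once that piecewise-constant structure is pinned down, the rest is a routine application of Occam/Hoeffding bounds, and the algorithmic side reduces to sorting the $O(|E|)$ breakpoints of each $I_e$ and computing a majority-type statistic from the $m$ samples for each candidate region.
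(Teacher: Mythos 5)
Your proposal matches the paper's approach: decompose $k_h$ edge by edge, discretize each $I_e$ to an $O(|E|)$-size hypothesis set via the breakpoints $L_{e'},U_{e'}$ inside $I_e$, run per-edge ERM, and combine Hoeffding with a union bound over edges. Two small technical points to tighten: (i) the breakpoints themselves must be retained as distinct singleton hypotheses (not just one representative per open region between them), since by the paper's convention $\pred{w}_e=L_{e'}$ yields a different predicted relation than a nearby interior point, so the paper's $\hs_e$ has size roughly $2l{+}1$ where $l\le 2(|E|{-}1)$; and (ii) your sample bound drops a factor: with the loss bounded by $|E|$ and target per-edge accuracy $\epsilon/|E|$, Hoeffding plus the union bound gives $m = O\bigl(|E|^4\log(|E|/\delta)/\epsilon^2\bigr)$, matching the paper, not $O(|E|^2\log(|E|/\delta)/\epsilon^2)$ — still polynomial, so PAC-learnability is unaffected.
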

}

\nnew{There is given an instance of the MST problem under uncertainty with graph $G=(V,E)$ and a set of uncertainty intervals $\mathcal{I}$ for the edge weights.
}
We assume that the realization $w$ (here, $w$ is a vector of all true values $w_e$ with $e\in E$) of true values for $\mathcal{I}$ is i.i.d.~drawn from an unknown distribution $\ud$, and 
that we can i.i.d.~sample realizations from $\ud$ 
to obtain a training set.
Let $\hs$ denote the set of all possible 
prediction vectors $\pred{w}$, 
with $\w_e \in I_e$ for each $I_e \in \mathcal{I}$.
Let $k_h(w,\pred{w})$ denote the hop distance of the prediction $\pred{w}$ for the realization with the real values $w$.
Since $w$ is drawn from $\ud$, the value $k_h(w,\pred{w})$ is a random variable. 
Our goal is to learn predictions $\pred{w}$ that (approximately) minimize the expected error $\EX_{w \sim \ud}[k_h(w,\pred{w})]$.
As a main result of this section, we show the following theorem, i.e., that the predictions are PAC learnable with regard to $k_h$.
The theorem directly implies \Cref{thm:learning}.

\begin{restatable}{theorem}{TheoLearningHop}
	\label{theo_learnability_hop}
	For any~$\eps, \delta \in (0,1)$, there exists a learning algorithm that, using a training set of size~$m$, returns predictions $\pred{w} \in \hs$, such that
		$\EX_{w \sim \ud}[k_h(w,\pred{w})] \le \EX_{w \sim \ud}[k_h(w,\pred{w}^*)] + \eps$ holds with probability at least~$(1-\delta)$, where~$\pred{w}^* = \arg\min_{\pred{w}' \in \hs} \EX_{w \sim \ud}[k(w,\pred{w}')]$.
		The sample complexity is $m \in \mathcal{O}\left(\frac{(\log(|E|) - \log(\delta/|E|))\cdot |E|^2}{(\eps/|E|)^2}\right)$ and the running time is polynomial in $m$ and $|E|$.
\end{restatable}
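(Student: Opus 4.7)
}

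The plan is to exploit a strong decomposition property of the hop distance loss together with a finite effective hypothesis class per edge, so that a straightforward empirical risk minimization with Hoeffding plus union bound yields the desired sample complexity. The key observation is that, by definition,
\[
k_h(w,\pred{w}) \;=\; \sum_{e \in E} \jo(e) \;=\; \sum_{e \in E}\sum_{e'\in E\setminus\{e\}} k_{e'}(e),
\]
and the indicator $k_{e'}(e)$ depends only on $w_e$ (via its true relation to $I_{e'}$) and on $\pred{w}_e$ (via the predicted relation). Hence $\jo(e)$ is a function of $w$ and of $\pred{w}_e$ \emph{alone}, so the expected total loss decomposes as $\EX_{w\sim \ud}[k_h(w,\pred{w})] = \sum_{e\in E} \EX_{w\sim \ud}[\jo(e)]$ and can be minimized edge by edge.

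For a fixed edge $e$, the function $\pred{w}_e \mapsto \jo(e)$ depends on $\pred{w}_e$ only through the three-way classification $\pred{w}_e \le L_{e'}$, $L_{e'} < \pred{w}_e < U_{e'}$, or $\pred{w}_e \ge U_{e'}$ for each $e'\neq e$. Therefore, if we sort the at most $2(|E|-1)$ endpoints $\{L_{e'},U_{e'}\}_{e'\neq e}$ intersected with $I_e$, they partition $I_e$ into at most $2|E|-1$ sub-intervals, and any two prediction values lying in the same sub-interval induce the same loss for every realization $w$. Thus the effective hypothesis space for edge $e$ has size $N_e \le 2|E|-1$, and we may take $\hs_e$ to be one representative per sub-interval; the overall effective hypothesis class has size at most $|E|(2|E|-1)$ when viewed edge-wise.

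The learning algorithm is then ERM on this finite class: draw an i.i.d.\ training set $w^{(1)},\dots,w^{(m)}$ from $\ud$, and, for each edge $e$ independently, compute for every representative $\pred{w}_e \in \hs_e$ the empirical average $\tfrac{1}{m}\sum_{j=1}^m \jo(e)(w^{(j)},\pred{w}_e)$ and output the minimizer. To bound the sample complexity, note that $0 \le \jo(e) \le |E|-1$ always, so by Hoeffding's inequality a single hypothesis's empirical average deviates from its expectation by more than $\eps/(2|E|)$ with probability at most $2\exp\!\bigl(-2m(\eps/(2|E|))^2/(|E|-1)^2\bigr)$. Union-bounding over the at most $|E|(2|E|-1)$ hypotheses and requiring the total failure probability to be at most $\delta$ gives
\[
m \;\in\; \mathcal{O}\!\left(\frac{(\log(|E|) - \log(\delta/|E|))\cdot |E|^2}{(\eps/|E|)^2}\right),
\]
matching the stated bound. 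A standard ERM argument then yields, with probability at least $1-\delta$, that the chosen edge-wise minimizers beat any competitor (in particular $\pred{w}^*$) by at most $\eps/|E|$ per edge in expectation, which sums to $\eps$. The running time is dominated by evaluating $\jo(e)$ on $m$ samples for $\mathcal{O}(|E|^2)$ hypotheses, each evaluation costing $\mathcal{O}(|E|)$, so it is polynomial in $m$ and $|E|$.

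The only subtlety I anticipate is making the finite reduction fully rigorous: one must check that $\jo(e)$ is truly constant within each sub-interval of $I_e$ for every realization $w$ (it is, since the classification relative to each $I_{e'}$ is constant there), and that taking one representative per sub-interval does not exclude $\pred{w}^*$ from effectively being achievable, i.e., there exists $\pred{w}\in \hs$ (with a representative for every edge) that attains the same expected loss as $\pred{w}^*$. Both are immediate from the decomposition and the piecewise-constant structure, so the argument goes through without further obstacles.
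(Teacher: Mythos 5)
Your proposal is correct and follows essentially the same route as the paper: decompose the expected hop distance edge by edge (using that $\jo(e)$ depends only on $w_e$ and $\pred{w}_e$), discretize each $I_e$ into $\mathcal{O}(|E|)$ regions induced by the other intervals' endpoints, then run per-edge ERM with a finite-class uniform-convergence / Hoeffding-plus-union-bound argument and set $\eps' = \eps/|E|$, $\delta' = \delta/|E|$. The only small detail to tighten is that, because the predicted relation uses mixed strict/non-strict inequalities, the endpoints $L_{e'}, U_{e'}$ themselves form singleton pieces of the partition and must appear in $\hs_e$ alongside one interior representative per open sub-interval (as the paper's construction does explicitly), but this only changes $|\hs_e|$ by a constant factor and does not affect the stated sample complexity.
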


Since each $I_e$ is an open interval, there are infinitely many predictions $\pred{w}$, and, thus, the set $\hs$ is also infinite.
In order to reduce the size of $\hs$, we discretize each $I_e$ by fixing a finite number of potentially predicted values $\pred{w}_e$ of $I_e$.
We define the set $\hs_e$  of 
predicted values for $I_e$ as follows.
Let $\{B_1,\ldots,B_l\}$ be the set of lower and upper limits of intervals in $\mathcal{I} \setminus I_e$ that are contained in $I_e$.
Assume that $B_1,\ldots,B_l$ are indexed by increasing value. 
Let $B_0 = L_e$ and $B_{l+1} = U_e$ and, for each $j \in \{0,\ldots,l\}$, let $h_j$ be an arbitrary value of $(B_j,B_{j+1})$.
We define $\hs_i = \{B_1,\ldots,B_l,h_0,\ldots,h_l\}$. 
Since two values $\pred{w}_e,\pred{w}_e' \in (B_j,B_{j+1})$ always lead to the same hop distance for interval $I_e$, there will always be an element of $\hs_e$ that minimizes the expected hop distance for $I_e$.
As $k_h(w,\pred{w})$ is just the sum of the hop distances over all $I_e$, and the hop distances of two intervals $I_e$ and $I_{e'}$ with $e \not= e'$ are independent, restricting $\hs$ to the set $\hs_1 \times \hs_2 \times \ldots \times \hs_{|E|}$ (assuming the edges are enumerated from $1$ to $|E|$) does not affect the accuracy of our predictions.
Each $\hs_e$ contains at most $\mathcal{O}(|E|)$ values, and, thus, the discretization reduces the size of $\hs$ to at most $\mathcal{O}(|E|^{|E|})$.
In particular, $\hs$ is now finite.

To efficiently learn predictions that satisfy Theorem~\ref{theo_learnability_hop}, we again exploit that the hop distances of two intervals $I_e$ and $I_{e'}$ with $e \not= e'$ are independent. This is, because the hop distance of $I_e$ only depends on the predicted value $\pred{w}_e$ and the true value $w_e$, but is independent of all $\pred{w}_{e'}$ and $w_{e'}$ with $e\not= e'$.
Let $\jo_e(w_e,\pred{w}_e)$ denote the hop distance $\jo(e)$ of interval $I_e$ for the predicted value $\pred{w}_e$ and the true value $w_e$, and,
for each $e \in E$, let $\pred{w}^*_e$ denote the predicted value that minimizes $\EX_{w\sim \ud}[\jo_e(w_e,\pred{w}_e)]$.
Since the hop distances of the single intervals are independent, the vector $\pred{w}^*$ then minimizes the expected hop distance of the complete instance.
Thus, if we can approximate the individual $\pred{w}^*_e$, then we can show Theorem~\ref{theo_learnability_hop}.

\begin{lemma}
	\label{lemma_learnability_hop}
	For any~$\eps, \delta \in (0,1)$, and any $e \in E$, there exists a learning algorithm that, using a training set of size~$m  \in \mathcal{O}\left(\frac{(\log(|E|) - \log(\delta))\cdot |E|^2}{\eps^2}\right),$ returns a predicted value $\pred{w}_e \in \hs_e$ in time polynomial in~$|E|$ and~$m$, such that
	$\EX_{w \sim \ud}[\jo_e(w_e,\pred{w}_e)] \le \EX_{w \sim \ud}[\jo_e(w_e,\pred{w}_e^*)] + \eps$ holds with probability at least~$(1-\delta)$, where~$\pred{w}_e^* = \arg\min_{\pred{w}_e \in \hs} \EX_{w \sim \ud}[\jo_e(w_e,\pred{w}_e)]$.
\end{lemma}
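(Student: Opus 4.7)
The plan is a standard empirical risk minimization (ERM) argument over the finite hypothesis class $\hs_e$, combined with Hoeffding's inequality and a union bound. The key observations that make this work are that $|\hs_e| \in \mathcal{O}(|E|)$ by construction, that the loss $\jo_e(w_e,\pred{w}_e)$ takes values in $\{0,1,\ldots,|E|-1\}$ (it counts how many other intervals $I_{e'}$ have an incorrectly predicted relation to $w_e$), and that for any fixed $\pred{w}_e \in \hs_e$ the quantity $\jo_e(w_e,\pred{w}_e)$ can be evaluated in polynomial time from the realization $w$.

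First, I would draw $m$ i.i.d.\ samples $w^{(1)},\ldots,w^{(m)} \sim \ud$ from the training distribution. For each candidate $\pred{w}_e \in \hs_e$, define the empirical loss
\[
\hat{L}_m(\pred{w}_e) \;=\; \frac{1}{m}\sum_{j=1}^{m} \jo_e\!\left(w^{(j)}_e,\pred{w}_e\right),
\]
and let the algorithm output $\pred{w}_e = \arg\min_{\pred{w}_e' \in \hs_e} \hat{L}_m(\pred{w}_e')$. Since $|\hs_e| \in \mathcal{O}(|E|)$ and each empirical loss takes $\mathcal{O}(|E|)$ work per sample (count, for each of the $|E|-1$ other intervals, whether the predicted relation matches the realized one), the total running time is polynomial in $m$ and $|E|$.

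Next, I would bound the estimation error. Fix any $\pred{w}_e \in \hs_e$. Because $\jo_e(\cdot,\pred{w}_e) \in [0,|E|-1]$, Hoeffding's inequality yields
\[
\Pr\!\left[\,\bigl|\hat{L}_m(\pred{w}_e) - \EX_{w\sim\ud}[\jo_e(w_e,\pred{w}_e)]\bigr| \;\ge\; \tfrac{\eps}{2}\,\right] \;\le\; 2\exp\!\left(-\frac{m\,\eps^2}{2(|E|-1)^2}\right).
\]
A union bound over the at most $\mathcal{O}(|E|)$ candidates in $\hs_e$ shows that choosing
\[
m \;\in\; \Theta\!\left(\frac{|E|^2\bigl(\log|E| + \log(1/\delta)\bigr)}{\eps^2}\right)
\]
suffices so that, with probability at least $1-\delta$, all empirical losses are within $\eps/2$ of their expectations simultaneously. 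On this event, a standard ERM argument gives
\[
\EX_{w\sim\ud}[\jo_e(w_e,\pred{w}_e)] \;\le\; \hat{L}_m(\pred{w}_e) + \tfrac{\eps}{2} \;\le\; \hat{L}_m(\pred{w}_e^*) + \tfrac{\eps}{2} \;\le\; \EX_{w\sim\ud}[\jo_e(w_e,\pred{w}_e^*)] + \eps,
\]
which is exactly the claim.

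The only genuinely delicate point is the discretization already carried out before the lemma: we need $\hs_e$ to be finite yet rich enough that $\pred{w}_e^* \in \hs_e$. This is precisely why $\hs_e$ is defined using the interval endpoints $B_1,\ldots,B_l$ together with one representative per open sub-interval $(B_j,B_{j+1})$, since two predictions lying in the same sub-interval induce identical hop-distance functions for $I_e$. With that in place, the remaining argument is a routine finite-class PAC bound, and I do not expect any real obstacle beyond choosing the constants in $m$ so that the $\log|\hs_e| = \Theta(\log|E|)$ factor and the $(|E|-1)^2$ range of $\jo_e$ appear correctly in the sample complexity.
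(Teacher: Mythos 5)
Your proof is correct and follows essentially the same route as the paper: run ERM over the finite discretized class $\hs_e$ and apply a uniform-convergence bound for a finite hypothesis class with loss bounded in $[0,|E|]$. The only cosmetic difference is that you spell out the uniform convergence step explicitly via Hoeffding's inequality plus a union bound over $\hs_e$, whereas the paper invokes the uniform convergence property of finite classes abstractly and cites standard references; the resulting sample complexity $m = \Theta\!\left(\frac{|E|^2(\log|E|+\log(1/\delta))}{\eps^2}\right)$ matches the paper's $\lceil 2\log(2|\hs_e|/\delta)|E|^2/\eps^2\rceil$.
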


\begin{proof}
	We show that the basic \emph{empirical risk minimization (ERM)} algorithm already satisfies the lemma. ERM first i.i.d.~samples a trainingset~$S=\{w^1,\ldots,w^m\}$ of~$m$ true value vectors from~$\ud$.
	Then, it returns the~$\pred{w}_e \in \hs_e$ that minimizes the \emph{empirical error}~$h_S(\pred{w}_e) = \frac{1}{m} \sum_{j=1}^{m} h_e(w^j_e,\pred{w}_e)$.
	
	Recall that, as a consequence of the discretization,~$\hs_e$ contains at most $\mathcal{O}(|E|)$ values.
	Since~$\hs_e$ is finite, and the error function $\jo_e$ is bounded by the interval $[0,|E|]$, it satisfies the \emph{uniform convergence property}; cf.~\cite{Shalev2014}. (This follows also from the fact that~$\hs_e$ is finite and, thus, has finite VC-dimension; cf.~\cite{Vapnik1992}.) 
	This implies that, for~$$m = \left\lceil \frac{2\log(2|\hs_i|/\delta)|E|^2}{\eps^2} \right\rceil \in \mathcal{O}\left(\frac{(\log(|E|) - \log(\delta))\cdot |E|^2}{\eps^2}\right),$$ it holds~$\EX_{w \sim \ud}[\jo_e(w_e,\pred{w}_e)] \le \EX_{w \sim \ud}[\jo_e(w_e,\pred{w}_e^*)] + \eps$ with probability at least~$(1-\delta)$, where~$\pred{w}_e$ is the predicted value learned by ERM (cf.~\cite{Shalev2014,vapnik1999}). 
	As $|\hs_e| \in \mathcal{O}(|E|)$, ERM also satisfies the running time requirements of the lemma.
\end{proof}

\begin{proof}[Proof of Theorem~\ref{theo_learnability_hop}]
	Let $\eps' = \frac{\eps}{|E|}$ and $\delta' = \frac{\delta}{|E|}$.
	Furthermore, let $\hs_{\max} = \arg\max_{\hs_e} |\hs_e|$.
	To learn predictions that satisfy the theorem, we first sample a training set $S=\{w^1,\ldots,w^m\}$ with $m = \left\lceil \frac{2\log(2|\hs_{\max}|/\delta')|E|^2}{\eps'^2} \right\rceil$.
	Next, we apply Lemma~\ref{lemma_learnability_hop} to each $\hs_e$ to learn a predicted value $\pred{w}_e$ that satisfies the guarantees of the lemma for $\eps',\delta'$.
	In each application of the lemma, we use the \emph{same} training set~$S$ that was previously sampled.
	
	For each $\pred{w}_e$ learned by applying the lemma, the probability that the guarantee of the lemma is \emph{not} satisfied is less than $\delta'$.
	By the union bound this implies that the probability that at least one $\pred{w}_e$ with $e \in E$ does not satisfy the guarantee is upper bounded by $\sum_{e \in E} \delta' \le |E| \cdot \delta' = \delta$. 
	Thus, with probability at least $(1-\delta)$, all $\pred{w}_e$ satisfy $\EX_{w \sim \ud}[\jo_e(w_e,\pred{w}_e)] \le \EX_{w \sim \ud}[\jo_e(w_e,\pred{w}^*_e)] + \eps'$.
	Since by linearity of expectations $\EX_{w\sim \ud}[k_h(w,\pred{w}^*)] = \sum_{e \in E} \EX_{w \sim \ud}[\jo_e(w_e,w^*_e)]$, we can conclude that the following inequality, where $\pred{w}$ is the vector of the learned predicted values, holds with probability at least $(1-\delta)$, which implies the theorem:
	\begin{align*}
	\EX_{w\sim \ud}[k_h(w,\pred{w})] &= \sum_{e \in E} \EX_{w \sim \ud}[\jo_e(w_e,\pred{w}_e)]\\
	&\le \sum_{e \in E} \EX_{w \sim \ud}[\jo_e(w_e,\pred{w}^*_e)] + \eps'\\
	&\le \left(\sum_{e \in E} \EX_{w \sim \ud}[\jo_e(w_e,\pred{w}^*_e)]\right) + |E| \cdot \eps'\\
	&\le \EX_{w\sim \ud}[k_h(w,\pred{w}^*)] + \eps.
	\end{align*}
\end{proof}

\section{Missing proofs of~\Cref{sec:mst:prelim}}
\label{app:mst:prelim}
\paragraph*{Witness sets and mandatory queries}

\LemMSTPreprocessing*

\begin{proof}
	Let $T_L$ be a lower limit tree for a given instance $G$ and let $T_U$ be an upper limit tree.
	According to~\cite{megow17mst}, all elements of $T_L \setminus T_U$ are mandatory and we can repeatedly query them for (the adapting) $T_L$ and $T_U$ until $T_L = T_U$. 
	We refer to this process as the \emph{first preprocessing step}.
	
	Consider an $f \in E \setminus T_U$ and the cycle $C$ in $T_U \cup \{f\}$.
	If $f$ is trivial, then the true value $w_f$ is maximal in $C$ and we may delete $f$ without loss of generality.
	Assume otherwise. 
	If the upper limit of $f$ is uniquely maximal in $C$, then $f$ is not part of any upper limit tree.
	If there is an $l \in C$ with $U_f = U_l$, then $T_U' = T_U \setminus \{l\} \cup \{f\}$ is also an upper limit tree.
	Since $T_L \setminus T_U' = \{l\}$, we may execute the first preprocessing step for $T_L$ and $T_U'$.
	We repeatedly do this until each $f \in E \setminus T_U$ is uniquely maximal in the cycle $C$ in $T_U \cup \{f\}$.
	Then, $T_U$ is unique.
	
	To achieve uniqueness for $T_L$, consider some $l \in T_L$ and the cut $X$ of $G$ between the two connected components of $T_L \setminus \{l\}$.
	If $l$ is trivial, then the true value $w_l$ is minimal in $X$ and we may contract 
	$l$ without loss of generality.
	Assume otherwise. 
	If $L_l$ is uniquely minimal in $X$, then $l$ is part of every lower limit tree.
	If there is an $f \in X$ with $L_l = L_f$, then $T_L' = T_L \setminus \{l\} \cup \{f\}$ is also a lower limit tree.
	Since $T_L' \setminus T_U = \{f\}$ follows from $T_L = T_U$, we may execute the first preprocessing step for $T_L'$ and $T_U$.
	We repeatedly do this until \tomc{$L_l$ for} each $l \in T_L$ is uniquely minimal in the cut $X$ of $G$ between the two components of $T_L \setminus \{l\}$.
	Then, $T_L$ is unique.
\end{proof}

\mstTreeChange*

\begin{proof}
			Let $e \in T_L \setminus T_L'$, then $e \in T_L$ and $T_L$ being unique imply that $e$ has the unique minimal lower limit in the cut $X_e$ of $G$ between the two connected components of $T_L \setminus \{e\}$.
			Thus, $e$ is part of any lower limit tree for $G$.
			For $e$ not to be part of $T'_L$, it cannot have the unique minimal lower limit in the cut $X_e$ of $G'$.
			Since querying elements in $X_e \setminus \{e\}$ only increases their lower limits, this can only happen if $e \in Q$.
			
			Let $e \in T_L' \setminus T_L$. Then $T_L = T_U$ and $T_L' = T_U'$ imply $e \in T_U' \setminus T_U$.
			Since $e \not\in T_U$ and $T_U$ is unique, it follows that $e$ has the unique largest upper limit in the cycle $C_e$ of $T_U \cup \{e\}$.
			Thus, $e$ is not part of any upper limit tree for $G$.
			For $e$ to be part of $T_U'$, is cannot have the unique largest upper limit in the cycle $C_e$ of $G'$.
			Since querying elements in $C_e \setminus \{e\}$ only decreases their upper limits, this can only happen if $e \in Q$.
\end{proof}

\lemHopDist*

\begin{proof}
	Consider an instance $G=(V,E)$ with
	uncertainty intervals $I_e=(L_e,U_e)$, true values $w_e$ and predicted
	values $\w_e$ for all $e\in E$.
	Let $E_P$ and $E_M$ be the mandatory queries with respect to
	the predicted and true values, respectively.
	We 
	claim that, for
	every interval $I_e$ of an edge $e \in E_P \sym E_M$ , there is an interval $I_g$
	of an edge $g$ that lies on a cycle with~$e$ such that at least one of the following inequalities holds $w_g \le L_e < \w_g$, $w_g < U_e \le \w_g$, $\w_g \le L_e < w_g$ or $\w_g < U_e \le w_g$. 
	This then implies $\oj(e) \ge 1$.
	
	We continue by proving the claim. Consider an edge $e\in E_P\setminus E_M$. (The argumentation
	for edges in $E_M\setminus E_P$ is symmetric, with the roles of $w$ and $\w$ exchanged.)
	As $e$ is not in $E_M$, replacing all intervals $I_g$
	for $g\in E\setminus\{e\}$ by their true values yields an instance
	that is solved.
	This means that for edge $e$ one of the following
	cases applies:
	\begin{itemize}
		\item[(a)] $e$ is known to be in the MST. Then there is a cut $X_e$
		containing edge $e$ (namely, the cut between the two vertex sets
		obtained from the MST by removing the edge~$e$) such that $e$ is known to
		be a minimum weight edge in the cut, i.e., every other edge $g$ in the cut
		satisfies $w_g\ge U_e$.
		\item[(b)] $e$ is known not to be in the MST. Then there is a cycle $C_e$
		in $G$ (namely, the cycle that is closed when $e$ is added to the MST)
		such that $e$ is a maximum weight edge in $C_e$, i.e., every other
		edge $g$ in the cycle satisfies $w_g\le L_e$.
	\end{itemize}
	As $e$ is in $E_P$, replacing all intervals $I_g$
	for $g\in E\setminus\{e\}$ by their predicted values yields an instance~$\Pi$
	that is not solved. Let $T'$ be the minimum spanning tree of
	$G'=(V,E\setminus\{e\})$ for~$\Pi$. Let $C'$ be the cycle
	closed in $T'$ by adding $e$, and let $f$ be an edge with the largest predicted
	value in $C'\setminus\{e\}$. Then there are only
	two possibilities for the minimum spanning tree of $G$ for $\Pi$: Either
	$T'$ is also a minimum spanning tree of $G$ (if $\w_e\ge \w_f$),
	or the minimum spanning tree is $T'\cup\{e\}\setminus\{f\}$.
	As knowing whether $e$ is in the minimum spanning tree would
	allow us to determine which of the two cases applies, it
	must be the case that we cannot determine whether $e$
	is in the minimum spanning tree or not without querying~$e$.
	If $e$ satisfied case (a) with cut $X_e$ above, then there must be an edge
	$g$ in $X_e\setminus\{e\}$ with $\w_g<U_e$, because otherwise
	$e$ would also have to be in the MST of $G$ for $\Pi$, a contradiction.
	Thus, $\w_g < U_e \le w_g$.
	If $e$ satisfied case (b) with cycle $C_e$ above, then there must be an edge
	$g$ in $C_e\setminus\{e\}$ with $\w_g>L_e$, because otherwise
	$e$ would also be excluded from the MST of $G$ for $\Pi$, a contradiction.
	Thus, $w_g \le L_e < \w_g$.
	In conclusion $\oj(e) \ge 1$, which establishes claim and lemma.
\end{proof}  

\paragraph*{Identifying witness sets}

	We introduce new structural properties to identify witness sets.
Existing algorithms for MST under uncertainty~\cite{erlebach08steiner_uncertainty,megow17mst} essentially follow the algorithms of Kruskal or Prim, and only identify witness sets in the cycle or cut that is currently under consideration.
Let $f_1,\ldots,f_l$ denote the edges in $E\setminus T_L$ ordered by non-decreasing lower limit.
	Then, $C_i$ with $i \in \{1,\ldots,l\}$ denotes the unique cycle in $T_L \cup \{f_i\}$.
	Furthermore, define $G_i = (V,E_i)$ with $E_i = T_L \cup \{f_1,\ldots,f_i\}$.
	For each $e \in T_L$, let $X_e$ denote the set of edges in the cut of the two connected components of $T_L \setminus \{e\}$.
	Existing algorithms for MST under explorable uncertainty repeatedly consider (the changing) $C_1$ or $X_e$, where $e$ is the edge in $T_L$ with maximum upper limit, and identify the maximum or minimum edge in the cycle or cut by querying witness sets of size two, until the problem is solved. 
	For our algorithms, we need to identify witness sets in cycles $C_i \not= C_1$ and cuts $X_{e'} \not= X_e$.
		Thus, the following two lemmas \nnew{alone} are not sufficient for our purpose.

\begin{lemma}[{\cite[Lemma~4.1]{megow17mst}}]
	\label{lemma_mst_1}
	Let $i \in \{1,\ldots,l\}$. 
	Given a feasible query set $Q$ for the uncertainty graph $G=(V,E)$, the set $Q_i := Q \cap E_i$ is a feasible query set for $G_i = (V,E_i)$.
\end{lemma}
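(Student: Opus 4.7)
The plan is to prove the lemma by contradiction. Suppose $Q_i$ is not feasible for $G_i$; then there exist two weight assignments $w^{(1)}, w^{(2)}$ on $E_i$, both agreeing with the values revealed by $Q_i$ and lying in the respective intervals, such that $G_i$ admits no common MST under these two realizations.

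The first step is to lift $w^{(1)}, w^{(2)}$ to weight functions $\tilde w^{(1)}, \tilde w^{(2)}$ on all of $E$ that agree on every edge of $E \setminus E_i$ and are consistent with the values $Q$ reveals on those edges: on $Q \cap (E \setminus E_i)$ we use the (fixed) true values, and on the remaining edges we pick any common value inside the corresponding interval. Both extensions are then consistent with all values revealed by $Q$ on $E$, so feasibility of $Q$ for $G$ yields a spanning tree $T$ that is an MST of $G$ under $\tilde w^{(1)}$ \emph{and} $\tilde w^{(2)}$ simultaneously.

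The core step is to massage $T$ into a spanning tree $T' \subseteq E_i$ that remains a common MST of $G$ under both extensions. Once this is achieved, $T'$ is automatically a common MST of $G_i$ under $w^{(1)}, w^{(2)}$, since every spanning tree of $G_i$ is also a spanning tree of $G$, so any MST of $G$ contained in $E_i$ must be an MST of $G_i$ as well; this contradicts our assumption. To obtain $T'$, I would process each $f_j \in T$ with $j > i$ by exchanging it for an edge $e_j \in T_L \cap E_i$ lying in the fundamental cycle $C_j = T_L \cup \{f_j\}$ and crossing the cut of $V$ defined by $T \setminus \{f_j\}$; such an $e_j$ must exist because the $T_L$-path in $C_j$ joins the endpoints of $f_j$ and therefore crosses this cut. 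Since $T_L$ is the lower-limit MST, $f_j$ has the maximum lower limit in $C_j$, and the MST cut property applied to $T$ under both extensions gives $\tilde w^{(k)}(f_j) \leq \tilde w^{(k)}(e_j)$ for $k = 1, 2$. Using the freedom we retained in choosing $\tilde w^{(k)}(f_j)$ for $f_j \notin Q$, we can force equality simultaneously under both extensions, making $T \setminus \{f_j\} \cup \{e_j\}$ another common MST. Iterating over all $f_j \in T$ with $j > i$ (e.g.\ in the order induced by the indices) yields the desired $T'$.

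The main obstacle is the exchange step when $f_j \in Q \cap (E \setminus E_i)$, since then $\tilde w^{(k)}(f_j) = w_{f_j}$ is fixed and we cannot tune it to force equality. Here I would argue that a strict inequality $w_{f_j} < \tilde w^{(k)}(e_j)$ would contradict the feasibility of $Q$: by varying the (free) extension value of $e_j$ (or of some other interval in the cut), we could construct a realization consistent with $Q$ under which $T$ fails to be an MST. This forces the desired equality and enables the swap, but carrying it out requires a careful choice of $e_j$ among the edges of $C_j \cap T_L$ exploiting the ordering $L_{f_{i+1}} \leq \dots \leq L_{f_l}$ and the uniqueness-style properties of $T_L$. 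Handling this interaction cleanly is the technical heart of the proof.
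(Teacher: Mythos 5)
The paper does not prove this lemma itself; it cites it directly from Megow, Mei{\ss}ner, and Skutella~\cite{megow17mst}. I will therefore assess the proposal on its own merits.

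Your proof contains a genuine gap in the exchange step, and it stems from trying to prove something stronger than what is needed. You aim to produce a spanning tree $T'\subseteq E_i$ that is a common MST \emph{of $G$} under both extensions, and you correctly note that such a $T'$ would also be an MST of $G_i$. But such a tree need not exist. To run your exchange, you require $\tilde w^{(k)}(f_j)=\tilde w^{(k)}(e_j)$ for an edge $e_j\in T_L$ crossing the cut of $T\setminus\{f_j\}$. When $f_j\notin Q$ you claim the freedom to tune $\tilde w^{(k)}(f_j)$, but this case cannot arise: with unique $T_L=T_U$, an unqueried $f_j\in T$ with $j>i$ is impossible, because $f_j$ has strictly maximum upper limit in $C_j$ while the cut-minimality of $f_j$ in $T$ would require $U_{f_j}\leq w_{e_j}<U_{e_j}$. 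The only live case is $f_j\in Q$; there $w_{f_j}$ is fixed, and one can show $e_j$ must also be in $Q$, so $w_{e_j}$ is fixed too. Your attempt to derive a contradiction from $w_{f_j}<w_{e_j}$ does not work: once both endpoints of the witness pair are queried, that strict inequality is entirely consistent with $Q$ being feasible and with $T$ (containing $f_j$) being the unique MST of $G$ at that cut. In that situation no MST of $G$ lies inside $E_i$, so the object you are hunting for simply does not exist.

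The correct target is weaker: $T'$ only needs to be a common MST \emph{of $G_i$}, and the MST of $G_i$ may have strictly larger total weight than the MST of $G$ (precisely because removing light edges $f_{i+1},\dots,f_l$ can increase the optimum). A repair would replace each such $f_j$ by the minimum-weight $E_i$-edge crossing the cut of $T\setminus\{f_j\}$ and then argue that $Q_i$ determines this replacement edge in every realization consistent with $Q_i$, using that the unqueried edges of $T_L$ in the cut have lower limits $\geq w_{f_j}$ and that the ordering $L_{f_{i+1}}\leq\dots\leq L_{f_l}$ (together with $T_L=T_U$) isolates the witness. As a secondary, smaller point: your opening step equates ``$Q_i$ is not feasible'' with ``there exist two realizations with no common MST.'' That direction (infeasibility $\Rightarrow$ two incompatible realizations) is a Helly-type statement and is not immediate; it would need its own justification or a reformulation that avoids it.
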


\begin{lemma}[{\cite[Lemma~4.2 and ff.]{megow17mst}}]
	\label{lemma_mst_2}
	For some realization of edge weights, let $T_i$ be an 
	MST for graph $G_i$ and let $C$ be the cycle closed by adding $f_{i+1}$ to $T_i$.
	Further, let $h$ be some edge with the largest upper limit in $C$ and $g \in C \setminus \{h\}$ be an edge with $U_g > L_h$. 
	Then any feasible query set for $G_{i+1}$ contains $h$ or $g$.
	Moreover, if $I_g$ is contained in $I_h$, then edge $h$ is mandatory. Further, $f_{i+1}$ has the largest upper limit on $C$ after querying $Q_{i}= E_{i} \cap Q$ for a feasible query set $Q$.
\end{lemma}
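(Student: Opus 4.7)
The plan is to establish the three claims of the lemma in turn, relying on the interval structure of the cycle $C$ together with the feasibility of $Q_i$ for the sub-instance $G_i$ provided by Lemma~\ref{lemma_mst_1}.

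For the witness set property, I would use an adversarial argument. Suppose, for contradiction, that a feasible query set $Q$ for $G_{i+1}$ contains neither $h$ nor $g$. Since $h$ maximizes the upper limit on $C$, we have $U_h \ge U_g > L_h$, so $I_h \cap I_g$ contains a nonempty open subinterval. I would exhibit two realizations of edge weights that agree on all queried edges but place $w_h, w_g$ inside $I_h\cap I_g$ in opposite orders. Applying the standard cycle property of MSTs (the uniquely heaviest edge on any cycle is excluded from every MST), the two realizations yield different MSTs of $G_{i+1}$, contradicting feasibility of $Q$. Hence $Q$ must contain $h$ or $g$, proving that $\{h,g\}$ is a witness set.

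For the mandatoriness claim, I would strengthen the adversarial argument using $I_g \subseteq I_h$, which gives $L_h < L_g < U_g < U_h$. Thus for \emph{every} choice of $w_g \in I_g$ the interval $I_h$ still contains values both above and below $w_g$. No matter which edges other than $h$ are queried, two realizations consistent with all revealed information remain that differ only in whether $w_h > w_g$ or $w_h < w_g$ on cycle $C$, and the resulting MSTs differ. Hence $h$ must belong to every feasible query set.

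For the final claim that $f_{i+1}$ has the largest upper limit on $C$ after querying $Q_i$, the plan is first to invoke Lemma~\ref{lemma_mst_1} so that $Q_i$ is feasible for $G_i$ and therefore pins down the MST of $G_i$. In particular, after $Q_i$ is executed, the (possibly shrunken) intervals of edges in $C \cap T_i$ must be consistent with $T_i$ being an MST of $G_i$; combined with the hypothesis that $h$ has the largest upper limit on $C$ initially, and the fact that $f_{i+1}\notin E_i$ is unaffected by $Q_i$, this should force $U_{f_{i+1}}$ to dominate on $C$. The main obstacle is formalizing this last step: feasibility of $Q_i$ only certifies the structure of the MST of $G_i$, not the exact weights, so I expect to argue by induction on $i$, leaning on the preprocessing that maintains unique $T_L = T_U$, to conclude that every edge in $C \cap T_i$ has its upper limit driven below $U_{f_{i+1}}$ once the queries in $Q_i$ have been performed.
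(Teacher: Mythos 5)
This lemma is imported directly from \cite{megow17mst} (cited as Lemma~4.2 and following); the paper does not reprove it, so there is no in-paper argument to compare your proposal against. On the merits of your sketch, the core idea for the first two claims is right, but a key step is missing: exhibiting two realizations with $w_h, w_g$ in opposite orders inside $I_h \cap I_g$ does not, on its own, change the MST of $G_{i+1}$. You must additionally show that $h$ is the unique maximum weight edge on $C$ in one realization while some other edge is the maximum in the other. That requires controlling every edge of $C$ (unqueried edges the adversary sets low, queried edges automatically have $w_e < U_e \le U_h$, so $w_h$ can be pushed above them toward $U_h$), and it requires invoking Lemma~\ref{lemma_mst_1} so that $Q_i = Q\cap E_i$ pins down $T_i$ as the MST of $G_i$ under both realizations, which is what makes the MST of $G_{i+1}$ turn entirely on resolving the maximum of the single cycle $C$. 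Your mandatoriness argument in part two inherits the same gaps.

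Your third part is, as you acknowledge, a placeholder rather than a proof, and this is a genuine gap that matters: the clean form of the witness-set argument relies on precisely this claim (so that, after querying $Q_i$, $h$ coincides with $f_{i+1}$), so it cannot be deferred or assumed. I would also caution against the route you propose of leaning on the unique $T_L=T_U$ preprocessing and an induction on $i$; the preprocessing is a design feature of this paper's algorithms and is not a stated hypothesis of the imported lemma, so an argument built on it risks proving a narrower statement than the one actually cited and used.
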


Lemma~\ref{lemma_mst_2} shows how to identify a witness set on the cycle closed by $f_{i+1}$ after an MST for graph $G_{i}$ has been verified. Known algorithms for MST under uncertainty build on this by iteratively resolving cycles closed by 
edges $f_1,\ldots,f_l$ one after the other (and analogously for cut-based algorithms). We design algorithms that query edges with a less local strategy; this requires to identify witness sets involving edges $f_{i+1}$ without first verifying an MST for~$G_{i}$. The  following two lemmas provide new structural insights that are fundamental for our algorithms.

\begin{restatable}{lemma}{LemMstWitnessSec}
	\label{lemma_mst_witness_set_2}
	Let $l_i \in C_i \setminus \{f_i\}$ with $I_{l_i} \cap I_{f_i} \not= \emptyset$ such that $l_i \not\in C_j$ for all $j < i$, then $\{l_i,f_i\}$ is a witness set.
	Furthermore, if $w_{l_i} \in I_{f_i}$, then $\{f_i\}$ is a witness set.
\end{restatable}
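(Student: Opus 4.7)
The plan is to prove both statements by contradiction, reducing to the subgraph $G_i = (V, E_i)$ via Lemma~\ref{lemma_mst_1} and exploiting the following structural consequence of the hypothesis $l_i \notin C_j$ for all $j < i$: since a non-tree edge $f_j$ lies in the fundamental cut $X_{l_i}$ (the cut between the two components of $T_L \setminus \{l_i\}$) exactly when $l_i \in C_j$, the hypothesis together with $l_i \in C_i$ implies that in $G_i$ the restricted cut $X_{l_i} \cap E_i$ is precisely the two-edge set $\{l_i, f_i\}$. A naive attempt to show $\{l_i, f_i\}$ is a witness set by tracking whether $l_i$ is in or out of some MST of $G_i$ turns out to be inconclusive, because an MST of $G_i$ can in principle include both $l_i$ and $f_i$ simultaneously; the argument will therefore track the status of $f_i$ instead, leveraging both the 2-element cut and the uniqueness of $T_U$.

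For the first part, suppose for contradiction that a feasible $Q$ omits both $l_i$ and $f_i$. By Lemma~\ref{lemma_mst_1}, $Q \cap E_i$ is feasible for $G_i$, so there is a fixed spanning tree $T^{\ast}$ that is an MST of $G_i$ for every realization consistent with $Q \cap E_i$. I would fix arbitrary admissible values for the unqueried edges in $E_i \setminus \{l_i, f_i\}$ and consider two realizations differing only in $(w_{l_i}, w_{f_i})$. In $R_1$, set $w_{l_i} = L_{l_i} + \epsilon$ and $w_{f_i} = U_{f_i} - \epsilon$; uniqueness of $T_L = T_U$ gives $U_e < U_{f_i}$ strictly for every $e \in C_i \setminus \{f_i\}$, so for small enough $\epsilon$ one has $w_{f_i} > w_e$ for every such $e$, making $f_i$ the strict maximum on cycle $C_i$ and forcing $f_i \notin T^{\ast}$ by the cycle property. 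In $R_2$, set $w_{l_i} = U_{l_i} - \epsilon$ and $w_{f_i} = L_{f_i} + \epsilon$; the overlap $I_{l_i} \cap I_{f_i} \neq \emptyset$ yields $L_{f_i} < U_{l_i}$, so $w_{f_i} < w_{l_i}$, making $f_i$ the strict minimum of the 2-element cut $\{l_i, f_i\}$ in $G_i$ and forcing $f_i \in T^{\ast}$ by the cut property. The two conclusions on $T^{\ast}$ contradict each other, so $\{l_i, f_i\}$ is a witness set.

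For the second part, assume additionally that $w_{l_i} \in I_{f_i}$ and suppose for contradiction that some feasible $Q$ omits $f_i$. Part~1 then forces $l_i \in Q$, so $w_{l_i}$ is a known fixed value with $L_{f_i} < w_{l_i} < U_{f_i}$. Working again in $G_i$ with a witness tree $T^{\ast}$, I would vary only $w_{f_i}$ inside $I_{f_i}$: taking $w_{f_i} = L_{f_i} + \epsilon < w_{l_i}$ makes $f_i$ the strict minimum of the cut $\{l_i, f_i\}$ and forces $f_i \in T^{\ast}$, whereas taking $w_{f_i} = U_{f_i} - \epsilon > w_{l_i}$ makes $f_i$ the strict maximum on $C_i$ (again via $U_e < U_{f_i}$) and forces $f_i \notin T^{\ast}$. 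The resulting contradiction shows $f_i$ is mandatory. The hardest step in the whole argument is pinpointing the right quantity to argue about: the obvious candidate $l_i$ need not flip between being in every MST and being in no MST across the two realizations, but the 2-element cut produced by the hypothesis, combined with the uniqueness of $T_U$, yields a clean dichotomy on $f_i$ that drives both parts.
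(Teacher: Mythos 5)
Your proof is correct, but it takes a genuinely different route from the paper's. Both arguments start from the same structural observation---that the hypothesis $l_i\notin C_j$ for $j<i$ together with $l_i\in C_i$ forces $X_{l_i}\cap E_i=\{l_i,f_i\}$---and both restrict to $G_i$ via Lemma~\ref{lemma_mst_1}. The paper then invokes Lemma~\ref{lemma_mst_2}: it uses the two-edge cut to argue that $l_i$ must lie on the cycle $C$ closed by $f_i$ in the MST $T_{i-1}$ of $G_{i-1}$, and feeds $h=f_i$, $g=l_i$ into Lemma~\ref{lemma_mst_2} to get the witness set, with the mandatory claim following from the same lemma's cycle-based reasoning. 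You instead argue directly from the MST cut and cycle properties: you exhibit two realizations in which $f_i$ is forced out of every MST (unique maximum on $C_i$, using the strict inequalities $U_e<U_{f_i}$ from the uniqueness of $T_U$) and into every MST (unique minimum on the two-element cut $\{l_i,f_i\}$, using $L_{f_i}<U_{l_i}$), contradicting the existence of a single tree $T^*$ witnessing feasibility. Your version is more self-contained and elementary---it does not rely on the machinery inside Lemma~\ref{lemma_mst_2}---at the cost of explicitly tracking realizations and $\epsilon$ choices. Your remark about tracking $f_i$ rather than $l_i$ is well-placed: the two-element cut yields a clean forced-in/forced-out dichotomy only for $f_i$, and the paper's route implicitly sidesteps the same pitfall by framing everything around the maximum edge on $C$.
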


\begin{proof}
	Consider the set of edges $X_{i}$ in the cut of $G$ defined by the two connected components of $T_L \setminus \{l_i\}$. 
	By assumption, $l_i,f_i \in X_i$. However, $f_j \not\in X_i$ for all $j < i$,  
	as otherwise $l_i \in C_j$ for an $f_j \in X_i$ with $j < i$, which contradicts  the assumption.
	We 
	observe $X_i \cap E_i = \{f_i,l_i\}$. 
	
	Let $Q$ be any feasible query set.
	By Lemma~\ref{lemma_mst_1}, $Q_{i-1} = E_{i-1} \cap Q$ verifies an MST $T_{i-1}$ for $G_{i-1}$. 
	Consider the unique cycle $C$ in $T_{i-1} \cup \{f_i\}$. 
	By Lemma~\ref{lemma_mst_2}, $f_i$ has the highest upper limit on $C$ after querying $Q_{i-1}$. 
	Since $f_i \in X_i$, $f_i \in C$ and $C$ is a cycle, it follows that another edge in $X_i\setminus \{f_i\}$ must be part of $C$.
	We already observed, $X_i \cap E_i = \{l_i,f_i\}$, and therefore $l_i \in C$. 
	Lemma~\ref{lemma_mst_2} implies that $\{f_i,l_i\}$ is a witness set. 
	If $w_{l_i} \in I_{f_i}$, then $f_i$ must be queried to identify the maximal edge in $C$, so it follows that $\{f_i\}$ is a witness set.
\end{proof}

The proof of Lemma~\ref{lemma_mst_witness_set_1} for cycles is very similar to the proof of Lemma~\ref{lemma_mst_witness_set_2} for cuts. However, it requires the following additional observation.

\begin{obs}\label{obs_mst_1}
	Let $Q$ be a feasible query set that verifies an MST $T^*$. 
	Consider any path $P \subseteq T_L$ between two endpoints $a$ and $b$, and let $e \in P$ be the edge with the highest upper limit in $P$. 
	If $e \not\in Q$, then the unique path $\hat{P} \subseteq T^*$ from $a$ to $b$ is such that $e \in \hat{P}$ and $e$ has the highest upper limit in $\hat{P}$ after $Q$ has been queried. 
\end{obs}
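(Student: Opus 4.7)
My plan is to first show that $e\in T^*$, then to construct a walk in $T^*$ that traces $P$ and to use a parity-of-traversals argument to conclude both $e\in\hat{P}$ and the upper-limit bound. The upper-limit bound will ultimately rest on the cycle property of the verified MST $T^*$ applied to the edges of $P$ that lie outside $T^*$.

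First I observe that $e\in T^*$ via \Cref{lemma_mst_tree_change}: the original instance has unique $T_L=T_U$ and the queried instance has $T^*$ as its unique $T_L'=T_U'$, so $T_L\sym T^*\subseteq Q$; since $e\in T_L\setminus Q$, this forces $e\in T^*$. I then build a walk $W$ in $T^*$ from $a$ to $b$ by processing $P=p_1\cdots p_k$ edge by edge: for $p_i\in P\cap T^*$ I traverse $p_i$ itself, and for $p_i\in P\setminus T^*$ (which by the previous step lies in $Q$) I traverse the unique $T^*$-path $\pi_{p_i}$ between $p_i$'s endpoints. Because $T^*$ is a tree, a standard cancellation argument shows that any walk from $a$ to $b$ in $T^*$ reduces to $\hat{P}$, so an edge lies on $\hat{P}$ iff it is traversed an odd number of times in~$W$.

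The crux is to show that $e$ is traversed exactly once in $W$. It appears once as the edge $p_{i^*}\in P\cap T^*$ equal to $e$, and the key claim is that $e\notin\pi_{p_i}$ for any other $p_i\in P\setminus T^*$. Suppose otherwise; then $e$ lies on the cycle $C^*_{p_i}:=\{p_i\}\cup\pi_{p_i}$ in $T^*\cup\{p_i\}$. The cycle property of the verified MST $T^*$ forces $w_{p_i}\ge w_e$ in every realization consistent with $Q$, and since $e\notin Q$ allows $w_e$ to be arbitrarily close to $U_e$, this gives $w_{p_i}\ge U_e$. However $p_i\in P$ and $e$ has the highest upper limit in $P$, so $U_{p_i}\le U_e$ and $w_{p_i}<U_{p_i}\le U_e$, a contradiction. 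Hence $e\in\hat{P}$.

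For the upper-limit bound I take any $f\in\hat{P}$; since $f$ must appear in $W$, either $f\in P\cap T^*$, in which case $f\in P$ gives $U_f\le U_e$ and hence the post-query upper limit of $f$ (either $U_f$ or $w_f<U_f$) is at most $U_e$; or $f\in\pi_{p_i}$ for some $p_i\in P\setminus T^*$, in which case the same cycle-property argument gives $w_{p_i}\ge w_f$ and also $w_{p_i}\ge U_f$ whenever $f\notin Q$, and combined with $w_{p_i}<U_{p_i}\le U_e$ this bounds the post-query upper limit of $f$ strictly below $U_e$. The main obstacle is the key claim that $e$ is absent from every replacement subpath $\pi_{p_i}$; once this is in hand the parity-of-traversals argument in trees finishes both parts uniformly.
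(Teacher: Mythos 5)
Your proof is correct, but it takes a genuinely different route from the paper. The paper proceeds by induction over the Kruskal-style sequence of intermediate trees $T^*_0,\ldots,T^*_l$ (verified by $Q_0\subseteq\cdots\subseteq Q_l$), maintaining at each step a path $P^*_i\subseteq T^*_i$ from $a$ to $b$ containing $e$ with $e$ of maximum post-query upper limit. You instead work with the final verified tree $T^*$ directly: you first show $e\in T^*$, then trace $P$ through $T^*$ by substituting for each $p_i\in P\setminus T^*$ its replacement path $\pi_{p_i}\subseteq T^*$, and close the argument with a parity-of-traversals count plus the cycle property of the verified MST. This avoids the iterative bookkeeping over intermediate $G_i$, $\hat C_i$ and $T^*_i$ entirely, and the upper-limit claim for $\hat P$ drops out uniformly from the same cycle-property bound $w_{p_i}\ge w_f$ applied to the replacement-path edges. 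The trade-off is that the paper's inductive invariant is exactly the structure reused in the proofs of Lemmas~\ref{lemma_mst_witness_set_1} and~\ref{lemma_mst_witness_set_2}, so the paper's version packages the induction once for all; your version is shorter but proves only the stated observation.

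One point to tighten: your first step invokes Lemma~\ref{lemma_mst_tree_change} to conclude $e\in T^*$ (and, in the walk construction, $p_i\in Q$ for $p_i\in P\setminus T^*$). That lemma assumes the post-query instance $G'$ has unique $T_L'=T_U'$, and that this tree coincides with $T^*$; the observation's hypothesis only gives you a feasible $Q$ verifying some MST $T^*$, which need not be the unique lower/upper limit tree of $G'$. You should either justify that identification or argue directly: if $e\in T_L$ is non-trivial, unqueried, and $e\notin T^*$, the cycle $C^*_e$ in $T^*\cup\{e\}$ must contain another edge $e''\in X_e$, and the verified-MST cycle property forces $L_e\ge w_{e''}$ for all consistent realizations, while uniqueness of $T_L$ gives $L_{e''}>L_e$ hence $w_{e''}\ge L_{e''}>L_e$, a contradiction. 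The same direct argument shows every non-trivial $p_i\in T_L\setminus T^*$ must lie in $Q$, and (together with uniqueness) that trivial edges of $T_L$ always survive into $T^*$, which justifies using $w_{p_i}<U_{p_i}$ in your contradiction step. With that substitution the argument is fully sound.
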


\begin{proof}
	For each $i \in \{0,\ldots,l\}$ let $T^*_i$ be the MST for $G_i$ as verified by $Q_i = Q \cap E_i$ and let $\hat{C_i}$ be the unique cycle in $T^*_{i-1} \cup \{f_i\}$. 
	Then $T^*_i = T^*_{i-1} \cup \{f_i\} \setminus \{h_i\}$ holds where $h_i$ is the maximal edge on $\hat{C_i}$.
	Assume $e \not\in Q$. 
	We claim that there cannot be any $\hat{C_i}$ with $e \in \hat{C_i}$ such that $Q_i$ verifies that an edge $e' \in \hat{C_i}$ with $U_{e'} \le U_e$ is maximal in $\hat{C_i}$. 
	Assume otherwise.
	If $e'\not= e$, $Q_i$ would need to verify that $w_{e} \le w_{e'}$ holds. 
	Since $U_{e'} \le U_{e}$, this can only be done by querying $e$, which is a contradiction to $e \not\in Q$.
	If $e'=e$, then $\hat{C}_i$ still contains edge $f_i$.
	Since $T_L = T_U$, $f_i$ has a higher lower limit than $e$. 
	To verify that $e$ is maximal in $\hat{C}_i$, $Q_i$ needs to prove $w_e \ge w_{f_i} > L_{f_i}$. 
	This can only be done by querying $e$, which is a contradiction to $e \not\in Q$.
	
	We show via induction on $i \in \{0,\ldots,l\}$ that each $T^*_i$ contains a path $P^*_i$ from $a$ to $b$ with $e \in P^*_i$ such that $e$ has the highest upper limit in $P^*_i$ after $Q_i$ has been queried. 
	For this proof via induction we define $Q_0 = \emptyset$. 
	\emph{Base case $i=0$}: Since $G_0 = (E,T_L)$ is a spanning tree, $T^*_0 = T_L$ follows. 
	Therefore $P^*_0 = P$ is part of $T_L$ and by assumption $e \in P$ has the highest upper limit in $P^*_0$.
	
	\emph{Inductive step}: By induction hypothesis, there is a path $P^*_i$ from $a$ to $b$ in $T^*_i$ with $e \in P^*_i$ such that $e$ has the highest upper limit in $P^*_i$ after querying $Q_i$.
	Consider cycle $\hat{C}_{i+1}$. 
	If an edge $e' \in \hat{C}_{i+1} \setminus P^*_i$ is maximal in $\hat{C}_{i+1}$, then $T^*_{i+1} = T^*_i \cup \{f_{i+1}\} \setminus \{e'\}$ contains path $P^*_i$. 
	Since $e$ by assumption is not queried, $e$ still has the highest upper limit on $P^*_i = P^*_{i+1}$ after querying $Q_{i+1}$ and the statement follows.

	Assume some $e' \in P^*_i \cap \hat{C}_{i+1}$ is maximal in $\hat{C}_{i+1}$, then $U_{e'} \le U_e$ follows by induction hypothesis since $e$ has the highest upper limit in $P^*_i$. 
	We already observed that $\hat{C}_{i+1}$ then cannot contain $e$. 
	Consider $P' = \hat{C}_{i+1} \setminus P^*_i$. 
	Since $e' \in P^*_i$ is maximal in $\hat{C}_{i+1}$, we can observe that $P' \subseteq T^*_{i+1}$ holds.  
	It follows that path $P^*_{i+1} = P' \cup (P^*_i \setminus\hat{C}_{i+1})$ with $e \in P^*_{i+1}$ is part of $T^*_{i+1}$.
	Since $e$ is not queried, it still has a higher upper limit than all edges in $P^*_i$. 
	Additionally, we can observe that after querying $Q_{i+1}$ no $u \in P'$ can have an upper limit $U_{u} \ge U_{e'}$. 
	If such an $u$ would exist, querying $Q_{i+1}$ would not verify that $e'$ is maximal on $\hat{C}_{i+1}$, which contradicts the assumption. 
	Using $U_e \ge U_{e'}$, we can conclude that $e$ has the highest upper limit on $P^*_{i+1}$ and the statement follows.
\end{proof}

\LemMstWitnessFir*

\begin{proof}
	To prove the lemma, we have to show that each feasible query set contains at least one element of $\{f_i,l_i\}$. Let $Q$ be an arbitrary feasible query set. 
	By Lemma~\ref{lemma_mst_1}, $Q_{i-1} := Q \cap E_{i-1}$ is a feasible query set for $G_{i-1}$ and verifies some MST $T_{i-1}$ for $G_{i-1}$. 
	We show that $l_i \not\in Q_{i-1}$ implies either $l_i \in Q $ or $f_i \in Q$. 	
	
	Assume $l_i \not\in Q_{i-1}$ and let $C$ be the unique cycle in $T_{i-1} \cup \{f_i\}$.
	Since $T_L = T_U$, edge $f_i$ has the highest upper limit in $C$ after querying $Q_{i-1}$.
	While we only assume $T_L = T_U$ for the initially given instance, \Cref{lemma_mst_2} implies that $f_i$ still has the highest upper limit in $C$ after querying $Q_{i-1}$. 
	If we show that $l_i\not\in Q_{i-1}$ implies $l_i \in C$, we can apply Lemma~\ref{lemma_mst_2} to derive that $\{f_i,l_i\}$ is a witness set for graph $G_i$, and thus either $f_i \in Q_i \subseteq Q$ or $l_i \in Q_i \subseteq Q$.
	For the remainder of the proof we show that $l_i\not\in Q_{i-1}$ implies $l_i \in C$. 
	Let $a$ and $b$ be the endpoints of $f_i$, then the path $P = C_i \setminus \{f_i\}$ from $a$ to $b$ is part of $T_L$ and $l_i$ has the highest upper limit in $P$. 
	Using $l_i \not\in Q_{i-1}$ we can apply Observation~\ref{obs_mst_1} to conclude that there must be a path $\hat{P}$ from $a$ to $b$ in $T_{i-1}$ such that $l_i$ has the highest upper limit on $\hat{P}$ after querying $Q_{i-1}$. 
	Therefore, $C = \hat{P} \cup \{f_i\}$ and it follows $l_i \in C$.
	
	If $w_{f_i} \in I_{l_i}$ and $l_i \not\in Q_{i-1}$, then $l_i$ must be queried to identify the maximal edge on $C$, thus it follows that $\{l_i\}$ is a witness set.
\end{proof}

\LemMSTPredFreeIff*

\begin{proof}
	We start by showing the first part of the lemma, i.e., that an instance $G$ is prediction mandatory free if and only if 
	$\pred{w}_{f_i} \ge U_{e}$ and $\pred{w}_e \le L_{f_i}$ holds for each $e \in C_i \setminus \{f_i\}$ and each cycle $C_i$ with ${i} \in \{1,\ldots,l\}$.

	For the first direction, assume $\pred{w}_{f_i} \ge U_{e}$ and $\pred{w}_e \le L_{f_i}$ holds for each $e \in C_i \setminus \{f_i\}$ and each cycle $C_i$,  ${i} \in \{1,\ldots,l\}$.
	Then each $f_i \in E \setminus T_L$ is predicted to be maximal on $C_i$ and each $e \in T_L$ is predicted to be minimal in $X_e$.
	Assuming the predictions are correct, we observe that each vertex cover of 
	$\bar{G}$ is a feasible query set~\cite{erlebach14mstverification}, where 
	$\bar{G} = (\bar{V},\bar{E})$ with $\bar{V} = E$ {(excluding trivial edges)} and $\bar{E} = \{ \{f_i,e\} \mid i \in \{1,\ldots,l\}, e \in C_i \setminus \{f_i\} \text{ and } I_e \cap I_{f_i} \not= \emptyset\}$. 
	Since both $Q_1 := T_L$ and $Q_2 := E \setminus T_L$ are vertex covers for $\bar{G}$, $Q_1$ and $Q_2$ are feasible query sets under the assumption that the predictions are correct.
	This implies that no element is part of every feasible solution because $Q_1 \cap Q_2 = \emptyset$.
	We can conclude that no element is prediction mandatory and the instance is prediction mandatory free.
	
	Next, We show that instance $G$ being prediction mandatory free implies that
		$\pred{w}_{f_i} \ge U_{e}$ and $\pred{w}_e \le L_{f_i}$ holds for each $e \in C_i \setminus \{f_i\}$ and each cycle $C_i$ with ${i} \in \{1,\ldots,l\}$; via contraposition.	
	Assume there is a cycle $C_i$ such that $\w_{f_{i}} \in I_e$ or $\w_e \in I_{f_i}$ for some $e \in C_i \setminus \{e\}$.
	Let $C_i$ be such a cycle with the smallest index.
	If $\w_{f_i} \in I_e$ for some $e \in C_i \setminus \{e\}$, then also $\w_{f_i} \in I_{l_i}$ for the edge $l_i$ with the highest upper limit in $C_i \setminus \{f_i\}$.
	(This is because we assume $T_L = T_U$.)
	Under the assumption that the predictions are true, Lemma~\ref{lemma_mst_witness_set_1} implies that $l_i$ is mandatory and thus prediction mandatory.
	It follows that $G$ is not prediction mandatory free.
	
	Assume $\w_e \in I_{f_i}$.
	We can conclude $e \not\in C_j$ for each $j < i$.
	This is because $\w_e \in I_{f_i}$ and $j < i$ would imply $\w_e \in I_{f_j}$.
	As we assumed that $C_i$ is the first cycle with this property, $e \in C_j$ leads to a contradiction.
	Under the assumption that the predictions are true, Lemma~\ref{lemma_mst_witness_set_2} implies that ${f}_i$ is mandatory and thus prediction mandatory.
	It follows that $G$ is not prediction mandatory free.
%
%
	
	We show the second part of the lemma, i.e., that, once an instance is prediction mandatory free, it remains so even if we query further elements, as long as we maintain unique $T_L = T_U$.
	We show this part ´by proving the following claim.
	
	\noindent\textbf{Claim:} \emph{Let $G$ be a prediction mandatory free instance with unique $T_L = T_U$, and let $G'$ be an instance with unique $T'_L = T_U'$ that is obtained from $G$ by querying a set of edges $Q$, where $T_L'$ and $T_U'$ are the lower and upper limit trees of $G'$.
			Then $G'$ is prediction mandatory free.}
	
	Let $G = (V,E)$ and $G'=(V',E')$ as well as $T_L = T_U$ and $T_L' = T_U'$ be as described in the claim.
	We show that $G$ being prediction mandatory free implies that $G'$ is prediction mandatory free via proof by contradiction.
	Therefore, assume that $G'$ is not prediction mandatory free.
	
	Let $T_L'$ be the lower limit tree of $G'$, let $f_1',\ldots,f'_{l'}$ be the (non-trivial) edges in $E' \setminus T_L'$ ordered by lower limit non-decreasingly, and let $C_i'$ be the unique cycle in $T_L' \cup \{f_i'\}$.
	By assumption, $T_L' = T_U'$ holds and $T_L' = T_U'$ is unique.
	We can w.l.o.g.\ ignore trivial edges in $E' \setminus T_L'$ since those are maximal in a cycle and can be deleted. 
	Since $G'$ is not prediction mandatory free, there must be some $C_i'$ such that either $\w_e \in I_{f_i'}$ or $\w_{f_i'} \in I_e$ for some non-trivial $e \in C_i' \setminus \{f_i'\}$ (cf.~\Cref{mst_pred_free_characterization}).
	
	Assume $e \not\in T_L$.
	Since $e$ is part of $T_L' = T_U'$, Lemma~\ref{lemma_mst_tree_change} implies that $e$ must have been queried and therefore is trivial, which is a contradiction.
	Assume $e \in T_L$ and $\w_e \in I_{f_i'}$.
	As $G$ is prediction mandatory free, the assumption implies $e \not\in C_{f_i'}$ and, therefore, $f_{i'} \not\in X_e$, where $X_e$ is the cut between the two components of $T_L \setminus \{e\}$ in $G$.
	Thus (since $e \in T_L$), cycle $C_i'$ must contain some $f \in X_e \setminus \{e\}$ with $f \not= f_i'$.
	Note that $f \in X_e \setminus \{e\}$ implies $e \in C_f$, where $C_f$ is the cycle in $T_L \cup \{f\}$.
	By~\Cref{mst_pred_free_characterization}, instance $G$ being prediction mandatory free implies $\w_e \not\in I_f$ where $I_f$ denotes the uncertainty interval of $f$ before querying it.
	If $\w_e \in I_{f_i'}$, this implies $L_{f_i'} < L_{f}$. It follows that $f$ has the highest lower limit in $C_i'$, which contradicts $f_i'$ having the highest lower limit in $C'_i$.

	Assume $e \in T_L$ and $\w_{f_i'} \in I_e$. 
	Remember that $f_i'$ is non-trivial and $f_i' \not\in T_L' = T_U'$. 
	According to Lemma~\ref{lemma_mst_tree_change}, it follows $f_i' \not\in T_L = T_U$.
	Let $C_{f_i'}$ be the cycle in $T_L \cup \{f_i'\}$.
	Since $G$ is prediction mandatory free, $\w_{f_i'} \not\in I_{e'}$ for each $e' \in C_{f_i'} \setminus \{f_i'\}$, which implies $U_e > U_{e'}$.
	It follows that the highest upper limit on the path between the two endpoints of $f_i'$ in $T_U' = T_L'$ is strictly higher than the highest upper limit on the path between the two endpoints of $f_i'$ in $T_U=T_L$.
	We argue that this cannot happen and we have a contradiction to $e \in T_L$ and $\pred{f'_i} \in I_e$.
	
	Let $P$ be the path between the endpoints $a$ and $b$ of $f_i'$ in $T_U$ and let $P'$ be the path between $a$ and $b$ in $T_U'$.
	Define $U_P$ to be the highest upper limit on $P$.
	Observe that the upper limit of each edge can only decrease from $T_U$ to $T_U'$ since querying edges only decreases their upper limits.
	Therefore each $e' \in P' \cap P$ cannot have a higher upper limit than $U_P$.
	It remains to argue that the upper limit of each $e' \in P' \setminus P$ cannot be larger than $U_P$.
	Consider the set $\mathcal{S}$ of maximal subpaths $S \subseteq P'$ such that $P \cap P' = \emptyset$.
	Each $e' \in P' \setminus P$ is part of such a subpath $S$.
	Let $S$ be an arbitrary element of $\mathcal{S}$, then there is a cycle $C \subseteq S \cup P$ with $S \subseteq C$.
	Assume $e' \in S$ has a strictly larger upper limit than $U_P$, then an element of $S$ has the unique highest upper limit on $C$.
	It follows that subpath $S$ and path $P'$ cannot be part of any upper limit tree in the instance $G'$, which contradicts the assumption of $P'$ being a path in $T_U'$.
	We conclude that the graph $G'$ is prediction mandatory free.
\end{proof}

\section{Making instances prediction mandatory free}
\label{app:mst-predfree}

Algorithm~\ref{ALG_mst_part_1}
transforms arbitrary instances into prediction mandatory free instances with the following guarantee.

	\begin{figure}[th]
			\begin{minipage}{0.24\textwidth}
			\centering
			\begin{tikzpicture}[line width = 0.3mm, scale = 0.8, transform shape]
			\intervalp{$I_{f_i}$}{-3}{-0.5}{0}{-0.7}
			\intervalp{$I_{l_i}$}{-3.5}{-1}{0.5}{-3.3}
			\intervalp{}{-4}{-1.5}{1}{-3.2}
			\path (-2.75, 1.6) -- (-2.75, 1.6) node[font=\LARGE, midway, sloped]{$\dots$};
			\end{tikzpicture}
			\end{minipage}
			\begin{minipage}{0.24\textwidth}
				\centering
				\begin{tikzpicture}[line width = 0.3mm, scale = 0.8, transform shape]
				\intervalp{$I_{f_i}$}{-3}{-0.5}{0}{-2.5}
				\intervalp{$I_{l_i}$}{-3.5}{-1}{0.5}{-2.8}
				\interval{}{-4}{-1.5}{1}
				\path (-2.75, 1.6) -- (-2.75, 1.6) node[font=\LARGE, midway, sloped]{$\dots$};
				\node[] at (-4.75,1.8){$(b)$};
				\end{tikzpicture}
			\end{minipage}
			\begin{minipage}{0.24\textwidth}
				\centering
				\begin{tikzpicture}[line width = 0.3mm, scale = 0.8, transform shape]
				\intervalp{$I_{f_i}$}{-3}{-0.5}{0}{-2.5}
				\intervalp{$I_{l_i}$}{-3.5}{-1}{0.5}{-3.3}
				\interval{}{-4}{-1.5}{1}
				\path (-2.75, 1.6) -- (-2.75, 1.6) node[font=\LARGE, midway, sloped]{$\dots$};
				\node[] at (-4.75,1.8){$(c)$};
				\end{tikzpicture}
			\end{minipage}
			\begin{minipage}{0.24\textwidth}
				\centering
				\begin{tikzpicture}[line width = 0.3mm, scale = 0.8, transform shape]
				\intervalp{$I_{f_i}$}{-3}{-0.5}{0}{-0.7}
				\intervalp{$I_{l_i}$}{-3.5}{-1}{0.5}{-3.3}
				\intervalp{$I_{l_i'}$}{-4}{-1.5}{1}{-1.75}
				\path (-2.75, 1.6) -- (-2.75, 1.6) node[font=\LARGE, midway, sloped]{$\dots$};
				\node[] at (-4.75,1.8){$(d)$};
				\end{tikzpicture}
			\end{minipage}
	\caption{with predictions indicated as red crosses. $(a)$ Intervals in a prediction mandatory free cycle.  
				$(b)$--$(d)$ Intervals in a cycle that is not prediction mandatory free.
	}
	\label{Ex_mst_phase1_cases_app}
\end{figure}
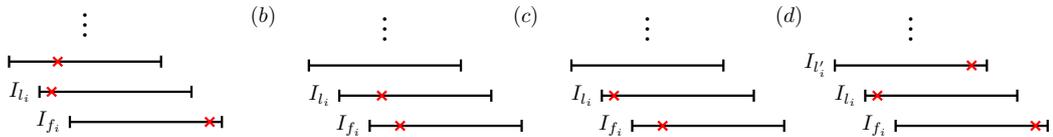

\MSTEndOfPhaseOne*

Note that the Theorem compares $|\ALG|$ with $|(\ALG \cup D) \cap \OPT|$ instead of just $|\OPT|$ since we combine Algorithm~\ref{ALG_mst_part_1} with algorithms for prediction mandatory free instances in the following sections. We will compare the queries of those algorithms against $|\OPT \setminus (\ALG \cup D)|$, which will allow us to analyze the combined algorithms. 

In each iteration the algorithm starts by querying elements that are prediction mandatory for the current instance.
The set of prediction mandatory elements can be computed in polynomial time~\cite{erlebach14mstverification}.
Our algorithm sequentially queries such elements until either $\gamma - 2$ prediction mandatory elements have been queried or no more exist (cf. Line~\ref{line_mst_one_fillup}). 
After each query, the algorithm ensures unique $T_L = T_U$ by using Lemma~\ref{mst_preprocessing}.
Note that the set of prediction mandatory elements with respect to the current instance can change when elements are queried, and therefore we query the elements sequentially.
By \Cref{Theo_hop_distance_mandatory_distance}, each of the {at most} $\gamma-2$ elements is either mandatory or contributes one to the hop distance.  

Afterwards, the algorithm iterates through $i \in \{1,\ldots,l\}$ until the current cycle $C_i$ is not prediction mandatory free. 
If it finds such a cycle~$C_i$, closed by~$f_i$, it queries edges on the cycle and possibly future cycles as follows.
Let $l_i$ denote the edge in $C_i\setminus\{f_i\}$ with highest upper limit. As $C_i$ is not prediction mandatory free, the configuration of $l_i$ and
$f_i$ and their predicted values must be one of those illustrated in Fig.~\ref{Ex_mst_phase1_cases_app}(b)--(c).
In each case, the algorithm queries one to three edges while ensuring $1.5$-consistency
and $2$-robustness \emph{locally} for those queries, as well as a more refined guarantee depending
on prediction errors that will be needed when the algorithm
is used as part of our error-sensitive algorithm in Section~\ref{sec:error-sensitive}.
Line~\ref{line_mst_one_case_a_start} handles the case of Fig.~\ref{Ex_mst_phase1_cases_app}(b),
Lines~\ref{line_mst_one_case_b_start}--\ref{line_mst_one_case_b2} the case of Fig.~\ref{Ex_mst_phase1_cases_app}(c),
and Lines~\ref{line_mst_one_case_c_start}--\ref{line_mst_one_case_c2} the case of Fig.~\ref{Ex_mst_phase1_cases_app}(d).
We now briefly sketch the analysis of these three cases and defer the formal
statements and proof details to Lemmas~\ref{mst_phase1_case_a},
\ref{mst_phase1_case_b} and~\ref{mst_phase1_case_c} in Appendix~\ref{app:mst-predfree}.
In Line~\ref{line_mst_one_case_a}, we can
show that $\{l_i,f_i\}$ is a witness set (giving local $2$-robustness) and either $|\OPT\cap\{l_i,f_i\}|=2$
(giving local $1$-consistency) or $\jo(\{l_i,f_i\})\ge 1$.
For the queries made in Line~\ref{line_mst_one_case_b1}, we can show that if the algorithm
queries three edges, $\OPT$ must query at least two of them (ensuring $1.5$-consistency
and $1.5$-robustness). If the algorithm queries only the two edges $f_i$ and~$l_i$,
they form a witness set (2-robustness) and $\jo(\{f_i,l_i\})\ge 1$.
For Line~\ref{line_mst_one_case_b2}, we show that if the algorithm queries only~$l_i$,
then $\{f_i,l_i\}$ is a witness set and $f_i$ can be deleted from the instance without
querying it (giving $1$-consistency and $1$-robustness). If the algorithm queries
$f_i$ and~$l_i$, they form a witness set and $\jo(\{f_i,l_i\}) \ge 1$.
The guarantees we can prove for queries made in
Lines~\ref{line_mst_one_case_c_start}--\ref{line_mst_one_case_c2} are analogous,
except that the edge $l_i'$ can be contracted instead of deleted if the algorithm
queries only $f_i$ in Line~\ref{line_mst_one_case_c2}.
After processing $C_i$ in this way, the algorithm restarts.
The algorithm terminates when all $C_i$ are prediction mandatory free,
which holds at the latest when all edges in $E$ have been queried. 

\begin{algorithm}[tbh]
	\KwIn{Uncertainty graph $G=(V,E)$ and predictions $\pred{w}_e$ for each {$e \in E$}}
	{Ensure unique $T_L = T_U$. Sequentially query prediction mandatory elements (while ensuring unique $T_L = T_U$) until either $\gamma - 2$ prediction mandatory elements are queried or the instance is prediction mandatory free\label{line_mst_one_fillup}}\;
	Let $T_L$ be the lower limit tree and $f_1,\ldots,f_l$ be the edges in $E \setminus T_L$ ordered by lower limit non-decreasingly\label{line_mst_one_order}\;
	\ForEach{{$C_i$ with $i=1$ to $l$\label{line_mst_one_foreach_f}\label{line_mst_ine_unique_cycle}}}{
		\If{$C_i$ is not prediction mandatory free\label{line_mst_one_if_pred_free}}{
			Let $l_i$ be an edge with highest upper limit in $C_i\setminus\{f_i\}$\;
			\lIf(\tcp*[h]{cf.\ Fig.~\ref{Ex_mst_phase1_cases_app}(b)}){\label{line_mst_one_case_a_start}$\pred{w}_{f_i} \in I_{l_i}$ and $\pred{w}_{l_i} \in I_{f_i}$}{
				Query $\{f_i,l_i\}$\label{line_mst_one_case_a}
			}
			\ElseIf(\tcp*[h]{cf.\ Fig.~\ref{Ex_mst_phase1_cases_app}(c)}){\label{line_mst_one_case_b_start}$\pred{w}_{f_i} \in I_{l_i}$}{
				\If{$\exists l_i' \in C_i \setminus \{f_i,l_i\}$ with $I_{l_i'} \cap I_{f_i} \not= \emptyset$}{	
					$l_i' \gets$ edge in  $C_i \setminus \{f_i,l_i\}$ with the largest upper limit\;	
					Query $\{f_i,l_i\}$, query $l_i'$ only if $w_{f_i} \in I_{l_i}$ and $w_{l_i} \not\in I_{f_j}$ for all $j$ with $l_i \in C_j$\;\label{line_mst_one_case_b1}
				}
				\lElse{
					Query $l_i$, query $f_i$ only if $w_{l_i} \in I_{f_i}$\label{line_mst_one_case_b2}
				}
				
			}
			\ElseIf(\tcp*[h]{cf.\ Fig.~\ref{Ex_mst_phase1_cases_app}(d)}){\label{line_mst_one_case_c_start}$\pred{w}_{l'_i} \in I_{f_i}$ for some $l_i' \in C_i$}{
				$l_i' \gets$ edge with the largest upper limit in $\{l \in C_i\setminus \{f_i\} \mid \pred{w}_{l} \in I_{f_i} \}$\;
				\If{$\exists f_j \in X_{l_i'} \setminus \{f_i,l'_i\}$ with $I_{f_j} \cap I_{l'_i} \not= \emptyset$}{
					$f_j \gets$ edge in $X_{l_i'} \setminus \{f_i,l'_i\}$ with the smallest lower limit\;
					Query $\{f_i,l'_i\}$, query $f_j$ only if $w_{l_i'} \in I_{f_j}$ and $w_{f_i} \not\in I_{e}$ for all $e \in C_i$\;\label{line_mst_one_case_c1}
				}
				\lElse{
					Query $f_i$, query $l'_i$ only if $w_{f_i} \in I_{l'_i}$\label{line_mst_one_case_c2}
				}
			}
			Restart at Line~\ref{line_mst_one_fillup}\;
			
		}
	}
	\caption{Algorithm to make instances prediction mandatory free}
	\label{ALG_mst_part_1}
\end{algorithm}

%
%
%
%

We sketch the proof of Theorem~\ref{mst_end_of_phase_one}.
Elements queried in Line~\ref{line_mst_one_fillup} to ensure unique $T_L=T_U$ are mandatory by Lemma~\ref{mst_preprocessing} and can be ignored in the analysis.
Each iteration of the algorithm queries a set $P_i$ of up to $\gamma-2$ prediction mandatory
edges $e$ in Line~\ref{line_mst_one_fillup}, each of which is mandatory or satisfies
$\oj(e)\ge 1$ by \Cref{Theo_hop_distance_mandatory_distance}, showing
$|P_i|\le |P\cap \OPT|+\oj(P_i)$. In the last iteration, these are the only
queries, and they contribute the additive term $\gamma-2$ to the bound.
In each iteration prior to the last, a set $W_i$ of at most 3 queries
is made in Line~\ref{line_mst_one_case_a}, \ref{line_mst_one_case_b1}, \ref{line_mst_one_case_b2}, \ref{line_mst_one_case_c1} or~\ref{line_mst_one_case_c2}.
These cases are covered by the following possibilities:
(1) The set $W_i$ consists of three edges, and $\OPT$ contains at least two
of them, giving $|W_i|\le 1.5 \cdot |\OPT\cap W_i|$ and $|W_i|\le |\OPT\cap W_i|+1$;
(2) The set $W_i$ consists of two edges, and either $\OPT$ contains
both or $\OPT$ contains one of them and $\jo(W_i)\ge 1$, giving
$|W_i|\le 2\cdot |\OPT\cap W_i|$ and $|W_i|\le |\OPT\cap W_i|+\jo(W_i)$;
(3) The set $W_i$ contains a single edge $e$ and we can delete or contract
another edge $g(e)$ such that $\{e,g(e)\}$ is a witness set,
giving $|W_i|\le |\OPT\cap \{e,g(e)\}|$. Combining these bounds over all
iterations yields Theorem~\ref{mst_end_of_phase_one}.

To conclude the section, we observe that all edges queried by Algorithm~\ref{ALG_mst_part_1} can, w.l.o.g., be contracted or deleted:
Since all queried edges are trivial and we ensure unique $T_L=T_U$, we can observe that each queried $e \in T_L=T_U$ is minimal on a cut and can be contracted, and each queried $e \not\in T_L=T_U$ is maximal on a cycle and can be deleted.
This allows us to treat the instance after the execution of the algorithm independently of all previous queries.

In the following we present the formal lemmas stating that, in each iteration of Algorithm~\ref{ALG_mst_part_1}, the queries made in one excution of Lines~\ref{line_mst_one_case_a_start}--\ref{line_mst_one_case_c2} locally 
satisfy $1.5$-  consistency and $2$-robustness.
All lemmas consider a cycle $C_i$ such that all $C_j$ with $j < i$ are prediction mandatory free, {$l_i$ is the edge with the highest upper limit in $C_i \setminus \{f_i\}$} and predictions are as indicated
in Figure~\ref{Ex_mst_phase1_cases_app}(b) (Lemma~\ref{mst_phase1_case_a}),
Figure~\ref{Ex_mst_phase1_cases_app}(c) (Lemma~\ref{mst_phase1_case_b}),
and Figure~\ref{Ex_mst_phase1_cases_app}(d) (Lemma~\ref{mst_phase1_case_c}).

\begin{lemma}
        \label{mst_phase1_case_a}
        Let $\{f_i,l_i\}$ denote a pair of edges queried in Line~\ref{line_mst_one_case_a}, then $\{f_i,l_i\}$ is a witness set, and either both edges are mandatory or        $\jo(\{f_i,l_i\}) \ge 1$.
\end{lemma}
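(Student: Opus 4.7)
The plan is to handle the two parts of the statement separately and then combine them via a simple case distinction on whether $w_{f_i} \in I_{l_i}$ and whether $w_{l_i} \in I_{f_i}$.

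First, I would establish that $\{f_i, l_i\}$ is a witness set by invoking Lemma~\ref{lemma_mst_witness_set_1} directly. By definition of $l_i$, it is the edge with the largest upper limit in $C_i \setminus \{f_i\}$. The condition of Line~\ref{line_mst_one_case_a_start} gives $\pred{w}_{l_i} \in I_{f_i}$, which in particular implies $I_{l_i} \cap I_{f_i} \neq \emptyset$ (as the predicted value $\pred{w}_{l_i}$ must lie in $I_{l_i}$ as well). The hypotheses of Lemma~\ref{lemma_mst_witness_set_1} are thus satisfied, yielding that $\{f_i, l_i\}$ is a witness set.

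For the second part, I consider three cases. If $w_{f_i} \in I_{l_i}$ \emph{and} $w_{l_i} \in I_{f_i}$, I would argue that both edges are mandatory. For $l_i$, the condition $w_{f_i} \in I_{l_i}$ together with Lemma~\ref{lemma_mst_witness_set_1} gives that $l_i$ is mandatory. For $f_i$, I first want to apply Lemma~\ref{lemma_mst_witness_set_2}, for which I need $l_i \notin C_j$ for every $j < i$. This is where I use the ordering maintained by the algorithm: by the outer loop, every $C_j$ with $j < i$ is prediction mandatory free, so by Lemma~\ref{mst_pred_free_characterization}, $\pred{w}_{l_i} \leq L_{f_j}$ would have to hold whenever $l_i \in C_j \setminus \{f_j\}$. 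However, $\pred{w}_{l_i} \in I_{f_i}$ gives $\pred{w}_{l_i} > L_{f_i} \geq L_{f_j}$ (using the ordering of the $f_k$'s by non-decreasing lower limit), a contradiction. Hence $l_i \notin C_j$ for all $j < i$, and Lemma~\ref{lemma_mst_witness_set_2} combined with $w_{l_i} \in I_{f_i}$ shows that $f_i$ is mandatory.

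In the remaining cases, at least one of $w_{f_i} \in I_{l_i}$ or $w_{l_i} \in I_{f_i}$ fails. I would show that each failure directly contributes to $\jo$: the condition $\pred{w}_{f_i} \in I_{l_i}$ means $L_{l_i} < \pred{w}_{f_i} < U_{l_i}$, so the prediction asserts that $w_{f_i}$ is contained in $I_{l_i}$; if instead $w_{f_i} \notin I_{l_i}$, the predicted relation of $w_{f_i}$ to $I_{l_i}$ is violated, giving $k_{l_i}(f_i) = 1$ and hence $\jo(f_i) \geq 1$. Symmetrically, $w_{l_i} \notin I_{f_i}$ gives $\jo(l_i) \geq 1$. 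Either way, $\jo(\{f_i, l_i\}) \geq 1$, completing the proof. The main subtlety is the argument that $l_i \notin C_j$ for $j < i$, which relies on carefully chaining the prediction mandatory free characterization with the monotone ordering of lower limits; the rest is a direct bookkeeping of predicted versus true relations.
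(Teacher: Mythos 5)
Your proof is correct and follows essentially the same approach as the paper. The only small difference is the lemma used to establish the witness-set claim: you invoke Lemma~\ref{lemma_mst_witness_set_1} (using that $l_i$ has the largest upper limit in $C_i\setminus\{f_i\}$), whereas the paper first shows $l_i \notin C_j$ for all $j<i$ and then applies Lemma~\ref{lemma_mst_witness_set_2}; since both proofs require the $l_i \notin C_j$ argument anyway (you need it for the mandatoriness of $f_i$ via Lemma~\ref{lemma_mst_witness_set_2}), the two routes are effectively interchangeable, and your case analysis of $w_{f_i} \in I_{l_i}$ and $w_{l_i} \in I_{f_i}$ and the translation of each failure into a contribution to $\jo$ matches the paper's bookkeeping.
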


\begin{proof}
        By assumption, all $C_j$ with $j < i$ are prediction mandatory free.
        We claim that this implies $l_i \not\in C_j$ for all $j < i$.
        Assume, for the sake of contradiction, that there is a $C_j$ with $j < i$ and $l_i \in C_j$.
        Then, $T_L = T_U$ and \tomc{$j < i$} imply that $f_i$ and $f_j$ have larger upper and lower limits than $l_i$ and, since $L_{f_i} \ge L_{f_j}$, it follows $I_{l_i}    \cap I_{f_i} \subseteq I_{l_i} \cap I_{f_j}$.
        Thus, $\pred{w}_{l_i} \in I_{f_i}$ implies $\pred{w}_{l_i} \in I_{{f}_j}$, which contradicts $C_j$ being prediction mandatory free.
        According to Lemma~\ref{lemma_mst_witness_set_2}, $\{f_i,l_i\}$ is a witness set.

        Consider any feasible query set $Q$, then $Q_{i-1}$ verifies the MST $T_{i-1}$ for graph $G_{i-1}$ and $Q$ needs to identify the maximal edge on the unique cycle $C$  in $T_{i-1} \cup \{f_i\}$.
        Following the argumentation of Lemma~\ref{lemma_mst_witness_set_2}, we can observe $l_i,f_i \in C$.
        Since we assume $T_L = T_U$, we can also observe that $f_i$ has the highest upper limit in $C$.
        By Observation~\ref{obs_mst_1}, $l_i$ has the highest upper limit in $C \setminus \{f_i\}$ after querying $Q_{i-1} \setminus \{l_i\}$.

        If $w_{l_i} \in I_{f_i}$, then $f_i$ is part of any feasible query set according to Lemma~\ref{lemma_mst_witness_set_2}.
        Otherwise, $w_{l_i} \le L_{f_i} < \pred{w}_{l_i}$ and $\jo(l_i) \ge 1$.
        If $w_{f_i} \in I_{l_i}$, then $l_i$ is part of any feasible query set according to Lemma~\ref{lemma_mst_witness_set_1}.
        Otherwise, $w_{f_i} {\ge} U_{l_i} {>} \pred{w}_{f_i}$ and $\jo(f_i) \ge 1$.
        In conclusion, either $\{f_i,l_i\} \subseteq Q$ for any feasible query set $Q$ or $\jo(\{f_i,l_i\}) \ge 1$.
\end{proof}

\begin{restatable}{lemma}{MstPhaseOneCaseB}
        \label{mst_phase1_case_b}
        Let $\{f_i,l_i,l_i'\}$ denote the elements of Line~\ref{line_mst_one_case_b1}.
        If the algorithm queries all three elements, then $|\{f_i,l_i,l_i'\} \cap \OPT| \ge 2$.
        Otherwise,  $\jo(\{f_i,l_i\}) > 0$ and $|\{f_i,l_i\} \cap \OPT| \ge 1$.

        Let $\{l_i,f_i\}$ denote the elements of Line~\ref{line_mst_one_case_b2}.
        If the algorithm queries only $l_i$, then $|\{f_i,l_i\} \cap \OPT| \ge 1$ and $f_i$ can be deleted from the instance without querying it.
        Otherwise, $\jo(\{f_i,l_i\}) > 0$ and $|\{f_i,l_i\} \cap \OPT| \ge 1$.
\end{restatable}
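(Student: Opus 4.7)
The plan is to analyze each of the two parts of the lemma by a case split according to which branch of the algorithm is taken. Throughout, the case~(c) assumption (Figure~\ref{Ex_mst_phase1_cases_app}(c)) gives $\pred{w}_{f_i} \in I_{l_i}$ and $\pred{w}_{l_i} \not\in I_{f_i}$, and the fact that $l_i \in T_L = T_U$ with $l_i$ having the largest upper limit in $C_i \setminus \{f_i\}$ together with $I_{l_i} \cap I_{f_i} \ne \emptyset$ gives $L_{l_i} < L_{f_i} < U_{l_i} < U_{f_i}$, hence $\pred{w}_{l_i} \le L_{f_i}$. In every branch, Lemma~\ref{lemma_mst_witness_set_1} immediately yields that $\{f_i, l_i\}$ is a witness set, covering all $|\{f_i,l_i\} \cap \OPT| \ge 1$ claims of the lemma.

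For Line~\ref{line_mst_one_case_b1}, first consider the sub-case where the algorithm queries only $\{f_i, l_i\}$. The condition for querying $l_i'$ must fail, so either $w_{f_i} \notin I_{l_i}$ or there exists $j$ with $l_i \in C_j$ and $w_{l_i} \in I_{f_j}$. In the former, $\pred{w}_{f_i} \in I_{l_i}$ combined with $w_{f_i} \notin I_{l_i}$ yields $\jo(f_i) \ge 1$. In the latter, I would argue that $\pred{w}_{l_i} \notin I_{f_j}$ for every such $j$: for $j < i$ by the prediction-mandatory-freeness of $C_j$ via Lemma~\ref{mst_pred_free_characterization}; for $j = i$ directly from the case~(c) assumption; and for $j > i$ from $\pred{w}_{l_i} \le L_{f_i} \le L_{f_j}$ using the ordering by non-decreasing lower limit. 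Either way $\jo(\{f_i,l_i\}) \ge 1$. When all three edges are queried, the condition $w_{f_i} \in I_{l_i}$ makes $l_i$ mandatory by Lemma~\ref{lemma_mst_witness_set_1}, so $l_i \in \OPT$. I would then show $\{f_i, l_i'\}$ is a witness set in the instance obtained by querying $l_i$ (together with any additional mandatory queries via Lemma~\ref{mst_preprocessing} needed to restore unique $T_L' = T_U'$): since $w_{l_i} \le L_{f_i}$ (from $w_{l_i} \notin I_{f_i}$ and $U_{l_i} < U_{f_i}$), $l_i$ can be contracted, and in the resulting cycle $C_i' = C_i \setminus \{l_i\}$ the edge $l_i'$ has the largest upper limit in $C_i' \setminus \{f_i\}$ (using $U_{l_i'} > L_{f_i}$, which follows from $I_{l_i'} \cap I_{f_i} \ne \emptyset$), with $I_{l_i'} \cap I_{f_i} \ne \emptyset$. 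Lemma~\ref{lemma_mst_witness_set_1} then yields the witness-set property for $\{f_i,l_i'\}$ in the modified instance; as $\OPT \setminus \{l_i\}$ is feasible there, it must contain $f_i$ or $l_i'$, giving $|\OPT \cap \{f_i, l_i, l_i'\}| \ge 2$.

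For Line~\ref{line_mst_one_case_b2}, when only $l_i$ is queried, $w_{l_i} \notin I_{f_i}$ together with $U_{l_i} < U_{f_i}$ forces $w_{l_i} \le L_{f_i}$. The sub-case assumption that every edge $e \in C_i \setminus \{f_i, l_i\}$ satisfies $I_e \cap I_{f_i} = \emptyset$, combined with $U_e \le U_{l_i} < U_{f_i}$, forces $U_e \le L_{f_i}$. Hence after querying $l_i$, every edge of $C_i \setminus \{f_i\}$ has true weight at most $L_{f_i} < w_{f_i}$, so $f_i$ is the unique maximum on $C_i$ and can safely be deleted without querying. When both $l_i$ and $f_i$ are queried, the condition $w_{l_i} \in I_{f_i}$ makes $f_i$ mandatory via Lemma~\ref{lemma_mst_witness_set_1}, while $\pred{w}_{l_i} \notin I_{f_i}$ versus $w_{l_i} \in I_{f_i}$ gives $\jo(l_i) \ge 1$.

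I expect the main obstacle to lie in the three-query branch of Line~\ref{line_mst_one_case_b1}: one must justify that after querying $l_i$ the cycle $C_i$ can safely be reduced to $C_i' = C_i \setminus \{l_i\}$ so that Lemma~\ref{lemma_mst_witness_set_1} applies cleanly. The subtlety is that querying $l_i$ may change the lower/upper limit trees, necessitating further mandatory queries to restore uniqueness of $T_L' = T_U'$; those additional queries are all mandatory, hence in $\OPT$, and do not disturb the final bound on $|\OPT \cap \{f_i, l_i, l_i'\}|$.
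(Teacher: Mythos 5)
Your proof is correct and follows essentially the same approach as the paper's: the same case split, the same use of Lemma~\ref{lemma_mst_witness_set_1} to obtain the witness sets $\{f_i,l_i\}$ and (after querying $l_i$) $\{f_i,l_i'\}$, and the same hop-distance arguments in the two-query and ``both queried'' sub-cases. The only differences are presentational: you handle the index $j>i$ as a separate case where the paper observes that $w_{l_i}\in I_{f_j}$ for some $j>i$ already implies $w_{l_i}\in I_{f_i}$ (so one may assume $j\le i$); and you phrase the three-query argument via contracting $l_i$ where the paper says the lower limit tree is unchanged --- these are equivalent. One tiny imprecision: in the Line~\ref{line_mst_one_case_b2}-only-$l_i$ case, you speak of the unqueried edges of $C_i\setminus\{f_i,l_i\}$ having ``true weight at most $L_{f_i}$''; the right statement is that their upper limits are at most $L_{f_i}$ (which is what forces $f_i$ to be uniquely maximal in every realization), but your inequality chain already shows this, so the conclusion stands.
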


\begin{proof}
        By assumption, all $C_j$ with $j < i$ are prediction mandatory free.
        According to Lemma~\ref{lemma_mst_witness_set_1}, $\{f_i,l_i\}$ is a witness set.

        Consider the first part of the lemma, i.e., the elements $\{f_i,l_i,l_i'\}$ of Line~\ref{line_mst_one_case_b1}.
        Assume first that the algorithm queries all three elements.
        By Line~\ref{line_mst_one_case_b1}, this means that $w_{f_i} \in I_{l_i}$ and $w_{l_i} \not\in I_{f_j}$ for each~$j$ with $l_i \in C_j$.
        According to Lemma~\ref{lemma_mst_witness_set_1}, $w_{f_i} \in I_{l_i}$ implies that $l_i$ is part of any feasible query set.
        Consider the relaxed instance where $l_i$ is already queried, then $w_{l_i} \not\in I_{f_j}$ for each $j$ with $l_i \in C_j$ implies that $l_i$ is minimal in          $X_{l_i}$ and that the lower limit tree does not change by querying $l_i$.
        It follows that $l_i'$ is the edge with the highest upper limit in $C_i \setminus \{f_i\}$ in the relaxed instance and, by Lemma~\ref{lemma_mst_witness_set_1}, $\{f_i,l_i'\}$ is a witness set.
        Thus, $|\{f_i,l_i,l_i'\} \cap \OPT| \ge 2$.

        Next, assume that the algorithm queries only $\{f_i,l_i\}$.
        Then, either $w_{f_i} \not\in I_{l_i}$ or $w_{l_i} \in I_{f_j}$ for some $j$ with $l_i \in C_j$.
        Note that if $w_{l_i} \in I_{f_j}$ holds for some $j$ with $l_i \in C_j$, then, by the ordering of the edges in $T_L\setminus E$, the statement holds for some $j \le i$.
        If $w_{f_i} \not\in I_{l_i}$, then $w_{f_i} \ge U_{l_i} > \pred{w}_{f_i}$ and $\jo(f_i) \ge 1$ follows.
        If $w_{l_i} \in I_{f_j}$, then $\pred{w}_{l_i} \le L_{f_j} < w_{l_i}$ and $\jo(l_i) \ge 1$ follows. 
        Therefore, if either  $w_{f_i} \not\in I_{l_i}$ or $w_{l_i} \in I_{f_j}$ for some $j$ with $l_i \in C_j$, then $\jo(f_i,l_i) \ge 1$.

        Consider the second part of the lemma, i.e., the elements $\{f_i,l_i\}$ of Line~\ref{line_mst_one_case_b2}.
        Assume first that the algorithm queries only $l_i$.
        Clearly, $|\{f_i,l_i\} \cap \OPT| \ge 1$ as $\{f_i,l_i\}$ is a witness set.
        Consider the cycle $C_i$.
        As $f_i$ is not queried, it follows $w_{l_i} \not\in I_{f_i}$.
        Furthermore, by definition of Line~\ref{line_mst_one_case_b2}, it holds $I_{l_i'} \cap I_{f_i} = \emptyset$ for all $l_i' \in C_i \setminus \{f_i,l_i\}$.
        Thus, $f_i$ is proven to be uniquely maximal on cycle $C_i$ and, therefore, can be deleted from the instance.
        Next, assume that the algorithm queries $l_i$ and $f_i$.
        Clearly, $|\{f_i,l_i\} \cap \OPT| \ge 1$ still holds.
        As $f_i$ is queried, we have $w_{l_i} > L_{f_i} \ge \pred{w}_{l_i}$ and, therefore, $\jo(\{f_i,l_i\}) \ge 1$.
\end{proof}

Using an analogous strategy, we can show the following lemma.

\begin{restatable}{lemma}{MstPhaseOneCaseC}
        \label{mst_phase1_case_c}
        Let $\{f_i,l_i',f_j\}$ denote the elements of Line~\ref{line_mst_one_case_c1}.
        If the algorithm queries all three elements, then $|\{f_i,l_i',f_j\} \cap \OPT| \ge 2$.
        Otherwise,  $\jo(\{f_i,l'_i\}) > 0$ and $|\{f_i,l'_i\} \cap \OPT| \ge 1$.

        Let $\{l'_i,f_i\}$ denote the elements of Line~\ref{line_mst_one_case_c2}.
        If the algorithm queries only $f_i$, then $|\{f_i,l'_i\} \cap \OPT| \ge 1$ and $l'_i$ can be contracted from the instance without querying it.
        Otherwise, $\jo(\tomc{\{f_i,l'_i\}}) > 0$ and $|\tomc{\{f_i,l'_i\}} \cap \OPT| \ge 1$.
\end{restatable}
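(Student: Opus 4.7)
The plan is to mirror the proof of \Cref{mst_phase1_case_b}, exchanging the role of cycles (with their largest-upper-limit analysis) and cuts (with their smallest-lower-limit analysis). First, I would establish that $l'_i \notin C_j$ for every $j<i$: assuming $l'_i \in C_j$, the ordering $L_{f_j}\le L_{f_i}$ and uniqueness of $T_L=T_U$ (which, together with $l'_i\in T_L\cap C_j$, gives $U_{l'_i}\le U_{f_j}$) yield the containment $I_{l'_i}\cap I_{f_i}\subseteq I_{l'_i}\cap I_{f_j}$; so the hypothesis $\pred{w}_{l'_i}\in I_{f_i}$ forces $\pred{w}_{l'_i}\in I_{f_j}$, contradicting \Cref{mst_pred_free_characterization} applied to the prediction-mandatory-free $C_j$. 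Invoking \Cref{lemma_mst_witness_set_2} then yields that $\{f_i,l'_i\}$ is a witness set, so $|\{f_i,l'_i\}\cap\OPT|\ge 1$ in both parts of the lemma.

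For Line~\ref{line_mst_one_case_c1}, when all three edges are queried the condition $w_{f_i}\notin I_e$ for $e\in C_i$, combined with the cycle property $L_{f_i}\ge L_e$ and $w_{f_i}>L_{f_i}$, forces $w_{f_i}\ge U_e$, so $f_i$ is maximal on $C_i$ and may be deleted. In the resulting reduced instance, $f_j$ has the smallest lower limit in $X_{l'_i}\setminus\{l'_i\}$, and since $w_{l'_i}\in I_{f_j}$ the cut-side analogue of \Cref{lemma_mst_witness_set_1} (obtained in the spirit of the proof of \Cref{mst_phase2_2}: after contracting the now-trivial $l'_i$, the edge $l'_i$ becomes the largest-upper-limit edge in $C_{f_j}\setminus\{f_j\}$ so that \Cref{lemma_mst_witness_set_1} applies directly to $C_{f_j}$) makes $f_j$ mandatory, giving $f_j\in\OPT$ and $|\{f_i,l'_i,f_j\}\cap\OPT|\ge 2$. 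If only $\{f_i,l'_i\}$ is queried then the query-condition fails: either $w_{f_i}\in I_e$ for some $e\in C_i$, in which case the ElseIf hypothesis $\pred{w}_{f_i}\notin I_{l_i}$ together with $\pred{w}_{f_i}>L_{f_i}\ge L_{l_i}$ forces $\pred{w}_{f_i}\ge U_{l_i}\ge U_e$, so the predicted ``right of $I_e$'' disagrees with the actual ``inside $I_e$'' and $\jo(f_i)\ge 1$; or $w_{l'_i}\notin I_{f_j}$, and I would then derive the required error in $\{f_i,l'_i\}$ by exploiting that $f_j=f_{j'}$ with $j'>i$ (since $l'_i\notin C_{j'}$ for $j'<i$), so $\pred{w}_{l'_i}<U_{f_i}\le U_{f_j}$ handles the sub-case $w_{l'_i}\ge U_{f_j}$, and the sub-case $w_{l'_i}\le L_{f_j}$ is handled by tracking the mismatch between the predicted and actual relation of $w_{l'_i}$ (or of $w_{f_j}$) to $I_{f_j}$ via an $\oj$-contribution of $f_i$ when needed.

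For Line~\ref{line_mst_one_case_c2}, if only $f_i$ is queried then $w_{f_i}\notin I_{l'_i}$, and $L_{f_i}\ge L_{l'_i}$ together with $w_{f_i}>L_{f_i}$ yields $w_{f_i}\ge U_{l'_i}>w_{l'_i}$. The else-branch precondition that no $f_j\in X_{l'_i}\setminus\{f_i,l'_i\}$ has $I_{f_j}\cap I_{l'_i}\ne\emptyset$, together with the cut property $L_{f_j}\ge L_{l'_i}$ (which excludes $U_{f_j}\le L_{l'_i}$), forces $L_{f_j}\ge U_{l'_i}$ for every such $f_j$, so $w_{f_j}>w_{l'_i}$ and $l'_i$ is provably the unique minimum in $X_{l'_i}$ and may be contracted without being queried. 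If both are queried then $w_{f_i}\in I_{l'_i}$, so $w_{f_i}<U_{l'_i}$; the ElseIf hypothesis $\pred{w}_{f_i}\notin I_{l_i}$ together with $\pred{w}_{f_i}>L_{f_i}\ge L_{l_i}$ rules out $\pred{w}_{f_i}\le L_{l_i}$ and forces $\pred{w}_{f_i}\ge U_{l_i}\ge U_{l'_i}$; the predicted ``right of $I_{l'_i}$'' therefore disagrees with the actual ``inside'', yielding $\jo(f_i)\ge 1$.

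The hardest step I expect is the cut-side analogue of \Cref{lemma_mst_witness_set_1} invoked in Line~\ref{line_mst_one_case_c1}'s all-three-queried case, which should follow either from matroid duality or from an argument paralleling \Cref{mst_phase2_2} once $f_i$ has been deleted and $l'_i$ made trivial. A second delicate point is the $w_{l'_i}\notin I_{f_j}$ branch of Line~\ref{line_mst_one_case_c1}'s Case~B, where the prediction-mandatory-free characterisation of $C_j$ for $j<i$ does not apply directly (since $l'_i\notin C_{j'}$ for $j'<i$); there I would rely on the ordering $j'>i$ for $f_j=f_{j'}$ together with $\pred{w}_{l'_i}\in I_{f_i}$ and the relative positions of $L_{f_j}$ and $U_{f_j}$ against $\pred{w}_{l'_i}$ and $w_{l'_i}$ to extract the required error, with the cleanest formulation most likely to involve an $\oj$-contribution of $f_i$ rather than a pure $\jo$-contribution of $l'_i$.
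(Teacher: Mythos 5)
Your overall structure is sound and several pieces match the paper exactly: the argument that $l'_i\notin C_j$ for $j<i$ (via $I_{l'_i}\cap I_{f_i}\subseteq I_{l'_i}\cap I_{f_j}$ and prediction-mandatory-freeness of $C_j$), the invocation of \Cref{lemma_mst_witness_set_2} to get the witness set $\{f_i,l'_i\}$, and both halves of the Line~\ref{line_mst_one_case_c2} analysis are all correct and essentially as in the paper. Two points, however, deserve comment.

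First, the ``cut-side analogue of \Cref{lemma_mst_witness_set_1}'' that you flag as your hardest step is not new machinery; it is exactly \Cref{lemma_mst_witness_set_2}, which \emph{is} the cut-side counterpart to \Cref{lemma_mst_witness_set_1}. The paper simply notes that $w_{l'_i}\in I_{f_i}$ together with \Cref{lemma_mst_witness_set_2} already makes $f_i$ mandatory, then observes that $w_{f_i}\notin I_e$ for all $e\in C_i$ lets one relax to an instance in which $f_i$ has been queried without changing the lower limit tree, after which $f_j$ is the smallest-index non-tree edge whose cycle contains $l'_i$; \Cref{lemma_mst_witness_set_2} applied in that relaxed instance (with $f_j$ playing the role of $f_i$ and $l'_i$ the role of $l_i$) gives the second witness set. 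No contraction of $l'_i$ or appeal to \Cref{mst_phase2_2} is needed, and indeed contracting $l'_i$ and then treating $l'_i$ as a largest-upper-limit edge in $C_{f_j}\setminus\{f_j\}$ is not coherent.

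Second, and more seriously, your handling of the ``only two queried'' branch of Line~\ref{line_mst_one_case_c1} has a real gap. The paper's proof reads the query condition as $w_{l'_i}\in I_{f_i}$ (the dual of the $w_{f_i}\in I_{l_i}$ test in Line~\ref{line_mst_one_case_b1}), even though the algorithm text prints $I_{f_j}$; with the $I_{f_i}$ reading, the negated condition $w_{l'_i}\notin I_{f_i}$ immediately gives $w_{l'_i}\le L_{f_i}<\pred{w}_{l'_i}$ and hence $\jo(l'_i)\ge 1$. You instead took the literal $I_{f_j}$ and correctly identified the troublesome sub-case $w_{l'_i}\le L_{f_j}$ with $\pred{w}_{l'_i}\le L_{f_j}$ as well, but your proposed resolution via ``an $\oj$-contribution of $f_i$'' does not establish what the lemma actually claims, namely $\jo(\{f_i,l'_i\})>0$: the statement counts hops of the \emph{predicted} values of $f_i$ and $l'_i$ against other intervals, not hops of other edges' values against $I_{f_i}$. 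Under the literal $I_{f_j}$ reading, the sub-case with $w_{l'_i},\pred{w}_{l'_i}\in(L_{f_i},L_{f_j}]$ and $w_{f_i}\notin I_e$ for all $e$ produces no forced $\jo$-error in $\{f_i,l'_i\}$, so the claim is not provable; the fix is to use $w_{l'_i}\in I_{f_i}$ as the algorithmic test (which also dualises Line~\ref{line_mst_one_case_b1} cleanly), at which point the paper's one-line argument closes the case.
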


\begin{proof}
		By assumption, all $C_j$ with $j < i$ are prediction mandatory free. 
		We claim that this implies $l_i' \not\in C_j$ for all $j < i$. 
		Assume, for the sake of contradiction, that there is a $C_j$ with $j < i$ and $l_i' \in C_j$. 
		Then, $T_L = T_U$ and \tomc{$j < i$} imply that $f_i$ and $f_j$ have larger upper and lower limits than $l_i'$ and, since $L_{f_i} \ge L_{f_j}$, it follows $I_{l_i'} \cap I_{f_i} \subseteq I_{l_i'} \cap I_{f_j}$.
		Thus, $\pred{w}_{l_i'} \in I_{f_i}$ implies $\pred{w}_{l_i'} \in I_{{f}_j}$, which contradicts $C_j$ being prediction mandatory free.
		According to Lemma~\ref{lemma_mst_witness_set_2}, $\{f_i,l_i'\}$ is a witness set.

		Consider the first part of the lemma, i.e., the elements $\{f_i,l_i',f_j\}$ of Line~\ref{line_mst_one_case_c1}.
		Assume first that the algorithm queries all three elements.
		By Line~\ref{line_mst_one_case_c1}, this means that $w_{l_i'} \in I_{f_i}$ and $w_{f_i} \not\in I_{e}$ for each $e \in C_i$. 
		According to Lemma~\ref{lemma_mst_witness_set_2}, $w_{l_i'} \in I_{f_i}$ implies that $f_i$ is part of any feasible query set. 
		Consider the relaxed instance where $f_i$ is already queried, then $w_{f_i} \not\in I_{e}$ for each $e \in C_i$ implies that $f_i$ is maximal in $C_i$ and that the lower limit tree does not change by querying $f_i$. 
		It follows that $f_j$ is the edge with the smallest index (observe that $j>i$ must hold by the argument at the beginning of the proof) and $l_i' \in C_j$ in the relaxed instance and, by Lemma~\ref{lemma_mst_witness_set_2}, $\{f_j,l_i'\}$ is a witness set. 
		Thus, $|\{f_i,l'_i,f_j\} \cap \OPT| \ge 2$.
		Next, assume that the algorithm queries only $\{f_i,l'_i\}$.
		Then, either $w_{l_i'} \not\in I_{f_i}$ or $w_{f_i} \in I_{e}$ for some $e \in  C_i$.
		If $w_{l_i'} \not\in I_{f_i}$, then $w_{l_i'} \le L_{f_i} < \w_{l_i'}$ and $\jo(l_i') \ge 1$ follows 
		If $w_{f_i} \in I_{e}$, then $\w_{f_i} \ge U_e > w_{f_i}$ and $\jo(f_i) \ge 1$ follows.
		Therefore  $w_{l_i'} \not\in I_{f_i}$ or $w_{f_i} \in I_{e}$ for some $e \in C_i$ implies $\jo(\{f_i,l_i'\}) \ge 1$.
	
		Consider the second part of the lemma, i.e., the elements $\{f_i,l'_i\}$ of Line~\ref{line_mst_one_case_c2}.
		Assume first that the algorithm queries only $f_i$.
		Clearly, $|\{f_i,l'_i\} \cap \OPT| \ge 1$ as $\{f_i,l'_i\}$ is a witness set.
		Consider the cut $X_{l_i'}$. 
		As $l'_i$ is not queried, it follows $w_{f_i} \not\in I_{l_i'}$.
		Furthermore, by definition of Line~\ref{line_mst_one_case_c2}, it holds $I_{f_j} \cap I_{l_i'} = \emptyset$ for all $f_j \in X_{l_i'} \setminus \{f_i,l'_i\}$.
		Thus, $l'_i$ is proven to be uniquely minimal in cut $X_{l_i'}$ and, therefore, can be contracted from the instance.
		Next, assume that the algorithm queries $f_i$ and $l'_i$.
		Clearly, $|\{f_i,l'_i\} \cap \OPT| \ge 1$ still holds.
		As $l'_i$ is queried, we have $w_{f_i} < U_{l'_i} \ge \pred{w}_{f_i}$ and, therefore, $\jo(\{f_i,l'_i\}) \ge 1$.
\end{proof}

Using these three lemmas, we give a full proof of Theorem~\ref{mst_end_of_phase_one}.

\begin{proof}[Proof of Theorem~\ref{mst_end_of_phase_one}]
	Since Algorithm~\ref{ALG_mst_part_1} only terminates if Line~\ref{line_mst_one_if_pred_free} determines each $C_i$ to be prediction mandatory free, the instance after executing the algorithm is prediction mandatory free by definition {and Lemma~\ref{mst_pred_free_characterization}}.
	All elements queried in Line~\ref{line_mst_one_fillup} to ensure unique $T_L=T_U$ are mandatory by Lemma~\ref{mst_preprocessing} and never worsen the performance guarantee.
	
	Since the last iteration does not query in Lines~\ref{line_mst_one_case_a}, \ref{line_mst_one_case_b1}, \ref{line_mst_one_case_b2}, \ref{line_mst_one_case_c1} and \ref{line_mst_one_case_c2}, the last iteration queries at most $\gamma-2$ intervals.
	In the following, we consider iterations $i$, that are not the last iteration.
	Each such iteration $i$ queries a set $P_i$ of $\gamma-2$ prediction mandatory elements in Line~\ref{line_mst_one_fillup} and a set $W_i$ in Line~\ref{line_mst_one_case_a}, \ref{line_mst_one_case_b1}, \ref{line_mst_one_case_b2}, \ref{line_mst_one_case_c1} or \ref{line_mst_one_case_c2}.
	By~\Cref{Theo_hop_distance_mandatory_distance}, each $e \in P_i$ is either mandatory or $\oj(e) \ge 1$.

	Consider an arbitrary iteration $i$ (that is not the last one).
	Then, $|P_i| \le |\OPT \cap P_i| + \oj(P_i)$ by the argument above.
	Assume $W_i$ was queried in Line~\ref{line_mst_one_case_a},~\ref{line_mst_one_case_b1} or~\ref{line_mst_one_case_c1}, then \Cref{mst_phase1_case_a,mst_phase1_case_b,mst_phase1_case_c} imply $|W_i| \le \frac{3}{2} \cdot \left(|W_i \cap \OPT| + \jo(W_i)\right)$.
	Thus, $|W_i \cup P_i| \le (1 + \frac{1}{\gamma}) \cdot (|\OPT \cap (W_i \cup P_i)| + \jo(W_i) + \oj(P_i))$.
	At the same time, the lemmas imply that $W_i$ is a witness set, and, therefore, $|W_i \cup P_i| \le \gamma \cdot |\OPT \cap (W_i \cup P_i)|$.
	Putting it together, we get $|W_i \cup P_i| \le \min\{(1 + \frac{1}{\gamma}) \cdot (|\OPT \cap (W_i \cup P_i)| + \jo(W_i) + \oj(P_i)),\gamma \cdot |\OPT \cap (W_i \cup P_i)|\}$.
	Let $\mathcal{K}_1$ denote the union of all sets $W_i \cup P_i$ that satisfy the assumption and let $J_1$ denote the index set of the corresponding sets, then 
	\begin{align*}
	|\mathcal{K}_1| &= \sum_{i \in J_1} |W_i\cup P_i|\\ 
	&\le \sum_{i\in J_1} \min\{(1 + \frac{1}{\gamma}) \cdot (|\OPT \cap (W_i \cup P_i)| + \jo(W_i) + \oj(P_i)),\gamma \cdot |\OPT \cap (W_i \cup P_i)|\}\\
	&\le \min\{(1 + \frac{1}{\gamma}) \cdot (|\OPT \cap \mathcal{K}_1| + \jo(\mathcal{K}_1) + \oj(\mathcal{K}_1)),\gamma \cdot |\OPT \cap \mathcal{K}_1|\}.
	\end{align*}
	
	Assume now that $W_i$ was queried in Line~\ref{line_mst_one_case_b2} or~\ref{line_mst_one_case_c2}.
	Let $e_i \in W_i$ denote the element that is queried first, and let $g(e_i)$ denote the element that either is queried second or deleted/contracted after the first query.
	Since each $g(e_i)$ is either queried or deleted/contracted, no such edge is considered more than once.
	\Cref{mst_phase1_case_b,mst_phase1_case_c} imply $|W_i| \le \left(|(W_i \cup \{g(e_i)\}) \cap \OPT| + \jo(W_i)\right)$ and $|W_i| \le 2 \cdot |(W_i \cup \{g(e_i)\}) \cap \OPT|$.
	Thus, $|W_i \cup P_i| \le (1 + \frac{1}{\gamma}) \cdot (|{\OPT} \cap (W_i \cup \{g(e_i)\} \cup P_i)| + \jo(W_i) + \oj(P_i))$ and $|W_i| \le \gamma \cdot  |(W_i \cup \{g(e_i)\} \cup P_i) \cap \OPT|$.
	Let $\mathcal{K}_2$ denote the union of all sets $W_i\cup P_i$ that satisfy the assumption, let $J_2$ denote the index set of the corresponding sets, and let $\mathcal{G}$ denote the union of all corresponding $\{g(e_i)\}$:
	\begin{align*}
	|\mathcal{K}_2| &= \sum_{W_i \in J_2} |W_i\cup P_i|\\ 
	&\le \sum_{W_i \in J_2} \min\{|\OPT \cap (W_i \cup \{g(e_i)\} \cup P_i)| + \jo(W_i\cup P_i), \gamma \cdot |\OPT \cap (W_i \cup \{g(e_i)\} \cup P_i)|\}\\
	&\le \min\{|\OPT \cap (\mathcal{K}_2 \cup \mathcal{G})|+ \jo(\mathcal{K}_2 ) + \oj(\mathcal{K}_2),\gamma \cdot |\OPT \cap (\mathcal{K}_2 \cup \mathcal{G})|\}.
	\end{align*}
	Let $\mathcal{K}_3$ denote the queries of the last iteration.
	Recall that $D$ is the set of edges in $E\setminus \ALG$ that can be deleted/contracted by the algorithm \emph{before the final iteration.
	By definition, $\mathcal{G} \subseteq D$.}
	Furthermore, note that $|\mathcal{K}_3| \le \gamma -2$ and $|\mathcal{K}_3| \le |\OPT \cap \mathcal{K}_3| + \oj(\mathcal{K}_3)$.
	Since $\mathcal{K}_1$, $\mathcal{K}_2 \cup \mathcal{G}$ and $\mathcal{K}_3$ are pairwise disjoint, we can conclude
	\begin{align*}
	|\ALG_1| &= \mathcal{K}_1 + \mathcal{K}_2 + \mathcal{K}_3\\
	&\le \min\{(1 + \frac{1}{\gamma}) \cdot (|{(\ALG \cup D) \cap} \OPT| + \jo(\ALG) + \oj(\ALG))\\
	&\phantom{\le\min\{},\gamma \cdot |{(\ALG \cup D) \cap} \OPT| + \gamma-2\}.
	\end{align*}
\end{proof}

\section{Missing proofs and examples of~\Cref{sec:optimal-tradeoff}}
\label{app:mst-optimal-tradeoff}
\ThmLBTradeoffWithoutError*

	\begin{proof}
		Assume, for the sake of contradiction, that there is a deterministic $\beta$-robust algorithm that is $\alpha$-consistent with $\alpha=1 + \frac{1}{\beta}-\eps$, for some $\eps>0$.
		Consider the instance that consists of a single cycle with the edges $e_0,e_1,\ldots,e_\beta$ and uncertainty intervals as indicated in Figure~\ref{fig_combined_lb_min_single_set}. 

		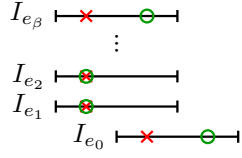
\begin{figure}
			\centering
			\begin{tikzpicture}[line width = 0.3mm, scale=0.8]
					\intervalpr{$I_{e_1}$}{1}{3}{0}{1.5}{1.5}
					\intervalpr{$I_{e_2}$}{1}{3}{0.5}{1.5}{1.5}
					\intervalpr{$I_{e_\beta}$}{1}{3}{1.5}{1.5}{2.5}
					\path (2, 0.5) -- (2, 1.6) node[font=\normalsize, midway, sloped]{$\dots$};
					\intervalpr{$I_{e_0}$}{2}{4}{-0.5}{2.5}{3.5}
			\end{tikzpicture}		
			\caption{Uncertainty intervals for the lower bound example of \Cref{theo_minimum_combined_lb}. Red crosses indicate predicted values, and green circles show true values.}
			\label{fig_combined_lb_min_single_set}
		\end{figure}

		The algorithm must query the edges $\{e_{1}, \ldots, e_{\beta}\}$ first as otherwise, it would query $\beta+1$ intervals in case all predictions are correct, while there is an optimal query set of size $\beta$, which would imply a consistency of $1 + \frac{1}{\beta} > \alpha$.  
		
		Suppose w.l.o.g.\ that the algorithm queries the edges $\{e_{1}, \ldots, e_{\beta}\}$ in order of increasing indices. Consider the adversarial choice $w_{e_i} = \pred{w}_{e_i}$, for $i = 1, \ldots, \beta - 1$, and then $w_{e_\beta} \in I_{e_0}$ and $w_{e_0} \notin I_{e_1} \cup \ldots \cup I_{e_\beta}$.
		This forces the algorithm to query also~$I_0$, while an optimal solution only queries~$I_0$.
		Thus any such algorithm has robustness at least~$\beta+1$, a contradiction.
		
		The second part of the theorem directly follows from the first part and the known general lower bound of $2$ on the competitive ratio~\cite{erlebach08steiner_uncertainty,kahan91queries}. Assume there is an $\alpha$-consistent deterministic algorithm with some $\alpha=1 + \frac{1}{\beta'}$, for some $\beta'\in [1,\infty)$. Consider the instance above with $\beta=\beta'-1$. Then the algorithm has to query edges $\{e_{1}, \ldots, e_{\beta}\}$ first to ensure $\alpha$-consistency as otherwise it would have a competitive ratio of $\frac{\beta+1}{\beta}>1+\frac{1}{\beta'}=\alpha$ in case that all predictions are correct. By the argumentation above, the robustness factor of the algorithm is at least $\beta+1=\beta'=\frac{1}{\alpha-1}$.
	\end{proof}

\MstPhaseTwTwo*

\begin{proof}[Proof of the second statement]
	Consider an arbitrary $f'_i$ and $h(f'_i)$ with $i \le b$.
	By definition of $h$, the edge $h(f'_i)$ is part of the lower limit tree. 
	{Let $X_i$ be the cut between the two components of $T_L \setminus \{h(f'_i)\}$, then we claim that $X_i$ only contains $h(f'_i)$ and edges in $\{f'_{1},\ldots,f'_{g}\}$ (and possibly irrelevant edges that do not intersect $I_{h(f'_i)}$).
		To see this, assume an $f_j \in \{f_1,\ldots,f_l\}$ with $f_j \not\in VC$ was part of $X_i$. 
		Since $f_j \not\in VC$, each edge in $C_j \setminus \{f_j\}$ must be part of $VC$ as otherwise $VC$ would not be a vertex cover. 
		But if $f_j$ is in cut $X_i$, then $C_j$ must contain $h(f'_i)$. 
		By definition, $h(f'_i) \not\in VC$ holds  which is a contradiction. 
		We can conclude that $X_i$ only contains $h(f'_i)$ and edges in $\{f'_{1},\ldots,f'_{g}\}$.}
	
	Now consider any feasible query set $Q$, then $Q_{i'-1}$ verifies an MST $T_{i'-1}$ for graph $G_{i'-1}$ where $i'$ is the index of $f'_i$ in the order $f_1,\ldots,f_l$. 
	Consider again the cut $X_i$. 
	Since each $\{f'_1,\ldots,f'_{i-1}\}$ is maximal in a cycle by assumption, $h(f'_{i})$ is the only edge in the cut that can be part of $T_{i'-1}$.
	Since one edge in the cut must be part of the MST, we can conclude that $h(f'_i) \in T_{i'-1}$. 
	Finally, consider the cycle $C$ in $T_{i'-1}\cup \{f'_i\}$. 
	We can use that $f'_i$ and $h(f'_i)$ are the only elements in $X_i \cap (T_{i'-1}\cup \{f'_i\})$ to conclude that $h(f'_i) \in C$ holds. 
	According to Lemma~\ref{lemma_mst_2}, $\{f'_i,h(f'_i)\}$ is a witness set. 
\end{proof}

\optTradeoff*

\begin{proof}
	We show that the algorithm that first executes Algorithm~\ref{ALG_mst_part_1} and then Algorithm~\ref{ALG_mst_part_2} satisfies the theorem.
	Let $\ALG = \ALG_1 \cup \ALG_2$ be the query set queried by the algorithm, where $\ALG_1$ and $\ALG_2$ are the queries of Algorithm~\ref{ALG_mst_part_1} and Algorithm~\ref{ALG_mst_part_2}, respectively. 
	Let $P \subseteq \ALG_1$ denote the edges queried in the last iteration of Algorithm~\ref{ALG_mst_part_1}.
	Furthermore, let $D$ denote the set of edges in $E \setminus \ALG_1$ that can be deleted or contracted during the execution of Algorithm~\ref{ALG_mst_part_1} before the final iteration. 
	It follows $D \cap \ALG_2 = \emptyset$.
	
	Assume first that $\ALG_2 = \emptyset$. 
	Then, querying $\ALG_1$ solves the problem.
	Theorem~\ref{mst_end_of_phase_one} directly implies $(1+\frac{1}{\gamma})$-consistency. 
	However, due to the additive term of $\gamma-2$ in the second term of the minimum, the theorem does not directly imply $\gamma$-robustness.
	Recall that the additive term is caused exactly by the queries to $P$.
	As the algorithm executes queries in the last iteration, it follows that the instance is not solved at the beginning of the iteration (a solved instance is prediction mandatory free and would lead to direct termination).
	Thus, $|\OPT \setminus ((\ALG_1 \cup D)\setminus P)| \ge 1$ and $|P| \le (\gamma-2) \cdot |\OPT \setminus ((\ALG_1 \cup D)\setminus P)|$.
	Ignoring the additive term caused by $P$, Theorem~\ref{mst_end_of_phase_one} implies $|\ALG_1\setminus P|\le \gamma \cdot |\OPT \cap ((\ALG_1 \cup D)\setminus P)|$.
	As $|\ALG| = |\ALG_1\setminus P| + |P|$, adding up the inequalities implies $|\ALG|\le \gamma \cdot |\OPT|$.
	
	Now, assume that $\ALG_2 \not= \emptyset$.
	Let $\OPT = \OPT_1 \cup \OPT_2$ be an optimal query set with $\OPT_1 = \OPT \cap (\ALG_1 \cup D)$ and $\OPT_2 = \OPT \setminus (\ALG_1 \cup D)$. 
	By Theorem~\ref{thm:predfree}, $\ALG_2 \not= \emptyset$ implies $\OPT_2 \not= \emptyset$.
	Thus, $|\OPT_2| \ge 1$.
	Then, Theorems~\ref{thm:predfree} and~\ref{mst_end_of_phase_one} directly imply $|\ALG_1\setminus P| \le \gamma \cdot |\OPT_1\setminus P|$ (since the additive term in Theorem~\ref{mst_end_of_phase_one} is caused by $P$) and $|\ALG_2 \cup P| \le \gamma \cdot |\OPT_2 |$ (since $|\OPT_2| \ge 1$).
	Together, the inequalities imply $\gamma$-robustness.
	
	In terms of consistency, Theorem~\ref{mst_end_of_phase_one} implies $|\ALG_1| \le (1+\frac{1}{\gamma}) \cdot |\OPT_1|$ and Theorem~\ref{thm:optimal-tradeoff} implies $|\ALG_2| = |\OPT_2|$.
	Together, the inequalities imply $(1+\frac{1}{\gamma})$-consistency.
\end{proof}

\begin{example}
	Figure~\ref{fig:vc_ex1} shows a predictions mandatory instance $G=(V,E)$. The  corresponding vertex cover instance $\bar{G}$ is the complete bipartite graph with the partitions $T_L$ and $E\setminus T_L$. 
	Thus, both $T_L$ and $E\setminus T_L$ are minimum vertex covers of size $n/2$.
	However, querying only $l_1$ and $f_1$ (for the true values as depicted in the figure) solves the instance. 
	So, the size of the initial minimum vertex cover is no lower bound on $|\OPT|$.
\end{example}

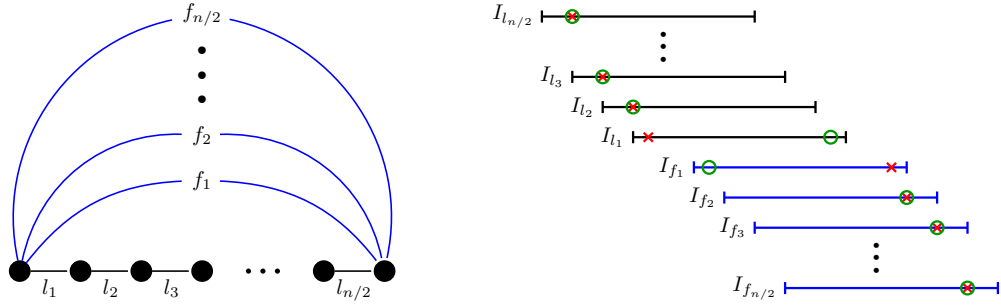
\begin{figure}[h]
	\begin{minipage}{0.35\textwidth}
		\centering
		\begin{tikzpicture}[->,>=stealth',shorten >=1pt,auto,node distance=1cm,
		semithick, 
		condstart/.style = {rectangle,draw,scale=1},
		condend/.style = {condstart},
		stdnode/.style = {circle,draw,fill=black},
		invis/.style = {},, scale = 0.8, transform shape]
		
		\node[stdnode](n1){};
		\node[stdnode,right of = n1](n2){};
		\node[stdnode,right of = n2](n3){};
		\node[stdnode,right of = n3](n4){};
		\node[invis,right of = n4](n5){\Huge$\ldots$};
		\node[stdnode,right of = n5](n6){};
		\node[stdnode,right of = n6](n7){};

		\node[invis,] at (3,1.5)(f1){$f_1$};
		\node[invis,] at (3,2.25)(f2){$f_2$};
		\node[invis,] at (3,3.25)(f3){\small$\vertDOTS$};
		\node[invis,] at (3,4.25)(f3){$f_{n/2}$};
		
		\path
		(n1) edge[-] node[label=below:{$l_1$}]{} (n2)
		(n2) edge[-] node[label=below:{$l_2$}]{} (n3)
		(n3) edge[-] node[label=below:{$l_3$}]{} (n4)
		(n6) edge[-] node[label=below:{$l_{n/2}$}]{} (n7);

		\path
		(n1) edge[-,bend left=25,blue] (f1)
		(f1) edge[-,bend left=25,blue] (n7)	
		
		(n1) edge[-,bend left=40,blue] (f2)
		(f2) edge[-,bend left=40,blue] (n7)	
		
		(n1) edge[-,bend left=45,blue] (f3)
		(f3) edge[-,bend left=45,blue] (n7)	
		;
		
		\end{tikzpicture}		
	\end{minipage}
	\begin{minipage}{0.7\textwidth}
		\centering
		\begin{tikzpicture}
		[line width = 0.3mm, scale = 0.8, transform shape]
		
		\node[] at (-5.5,3)(f3){\tiny$\vertDOTS$};
		\node[] at (-2,-0.5)(f3){\tiny$\vertDOTS$};
		\bintervalpr{$I_{f_1}$}{-5}{-1.5}{1}{-1.75}{-4.75}	
		\bintervalpr{$I_{f_2}$}{-4.5}{-1}{0.5}{-1.5}{-1.5}	
		\bintervalpr{$I_{f_3}$}{-4}{-0.5}{0}{-1}{-1}	
		\bintervalpr{$I_{f_{n/2}}$}{-3.5}{0}{-1}{-0.5}{-0.5}	
		
		\intervalpr{$I_{l_1}$}{-6}{-2.5}{1.5}{-5.75}{-2.75}	
		\intervalpr{$I_{l_2}$}{-6.5}{-3}{2}{-6}{-6}	
		\intervalpr{$I_{l_3}$}{-7}{-3.5}{2.5}{-6.5}{-6.5}	
		\intervalpr{$I_{l_{n/2}}$}{-7.5}{-4}{3.5}{-7}{-7}	
		\end{tikzpicture}
	\end{minipage}
	\caption{Uncertainty graph $G=(V,E)$ (left) and uncertainty intervals of $G$ (right).  
		The black edges ($l_1,\ldots,l_{n/2}$) form the lower limit tree $T_L$ and the blue edges ($f_1,\ldots,f_{n/2}$) are the edges outside of $T_L$.	
		Circles illustrate true values and crosses illustrate the predicted values.}
	\label{fig:vc_ex1}
\end{figure}

\begin{example}
	Figure~\ref{fig:vc_ex2} shows an instance $G=(V,E)$ and the corresponding vertex cover instance $\bar{G}$. 
	With the uncertainty intervals and predictions of Figure~\ref{fig:vc_ex1}, the instance is prediction mandatory free.
	Clearly, the minimum vertex cover of the vertex cover instance is $\{f_1,l_1\}$.
	However, for the true values of Figure~\ref{fig:vc_ex1}, querying $\{f_1,l_1\}$ proves that $f_1$ is part of the MST but $l_1$ is not. 
	After querying $\{f_1,l_1\}$, the new lower limit tree is the path $f_1,l_{n/2},\ldots,l_3,l_2,l_1$ and all edges outside of the lower limit tree connect the endpoint of the path.
	The resulting graph has the same form as the one in Figure~\ref{fig:vc_ex1} and the vertex cover instance is the complete bipartite graph. 
	So, the size of the new minimum vertex cover is $(n/2)-1$, which shows that the size of the minimum vertex cover can increase.	
\end{example}

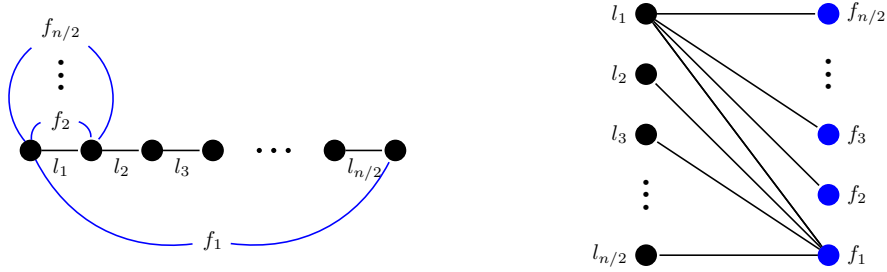
\begin{figure}[h]
	\begin{minipage}{0.5\textwidth}
		\centering
		\begin{tikzpicture}[->,>=stealth',shorten >=1pt,auto,node distance=1cm,
		semithick, 
		condstart/.style = {rectangle,draw,scale=1},
		condend/.style = {condstart},
		stdnode/.style = {circle,draw,fill=black},
		invis/.style = {},scale = 0.8, transform shape]
		
		\node[stdnode](n1){};
		\node[stdnode,right of = n1](n2){};
		\node[stdnode,right of = n2](n3){};
		\node[stdnode,right of = n3](n4){};
		\node[invis,right of = n4](n5){\Huge$\ldots$};
		\node[stdnode,right of = n5](n6){};
		\node[stdnode,right of = n6](n7){};

		\node[invis,] at (3,-1.5)(f1){$f_1$};
		\node[invis,] at (0.5,0.5)(f2){$f_2$};
		\node[invis,] at (0.5,1.25)(f3){\tiny$\vertDOTS$};
		\node[invis,] at (0.5,2)(f3){$f_{n/2}$};
		
		\path
		(n1) edge[-] node[label=below:{$l_1$}]{} (n2)
		(n2) edge[-] node[label=below:{$l_2$}]{} (n3)
		(n3) edge[-] node[label=below:{$l_3$}]{} (n4)
		(n6) edge[-] node[label=below:{$l_{n/2}$}]{} (n7);

		\path
		(n1) edge[-,bend right=35,blue] (f1)
		(f1) edge[-,bend right=35,blue] (n7)	
		
		(n1) edge[-,bend left=40,blue] (f2)
		(f2) edge[-,bend left=40,blue] (n2)	
		
		(n1) edge[-,bend left=45,blue] (f3)
		(f3) edge[-,bend left=45,blue] (n2)	
		;
		
		\end{tikzpicture}		
	\end{minipage}
	\begin{minipage}{0.5\textwidth}
		\centering
		\begin{tikzpicture}[->,>=stealth',shorten >=1pt,auto,node distance=1cm,
				semithick, 
				condstart/.style = {rectangle,draw,scale=1},
				condend/.style = {condstart},
				stdnode/.style = {circle,draw,fill=black},
				invis/.style = {},scale = 0.8, transform shape]

				\node[stdnode,label=left:{$l_1$}](n1){};
				\node[stdnode,below of = n1,label=left:{$l_2$}](n2){};
				\node[stdnode,below of = n2,label=left:{$l_3$}](n3){};
				\node[invis,below of = n3](n4){\tiny$\vertDOTS$};
				\node[stdnode,below of = n4,label=left:{$l_{n/2}$}](n5){};

				\node[stdnode,blue,fill=blue,label=right:{$f_{n/2}$}] at (3,0)(f1){};
				\node[invis,below of = f1](f4){\tiny$\vertDOTS$};
				\node[stdnode,blue,fill=blue,below of = f4,label=right:{$f_3$}](f2){};
				\node[stdnode,blue,fill=blue,below of = f2,label=right:{$f_2$}](f3){};
				\node[stdnode,blue,fill=blue,below of = f3,label=right:{$f_{1}$}](f5){};

				\path
					(n1) edge[-] (f1)
					(n1) edge[-] (f2)
					(n1) edge[-] (f3)
					(n1) edge[-] (f5)

					(f5) edge[-] (n1)
					(f5) edge[-] (n2)
					(f5) edge[-] (n3)
					(f5) edge[-] (n5); 
				
		\end{tikzpicture}
	\end{minipage}
	\caption{Uncertainty graph $G=(V,E)$ (left) and corresponding vertex cover instance $\bar{G}$ (right). The black edges ($l_1,\ldots,l_{n/2}$) form the lower limit tree $T_L$.}
	\label{fig:vc_ex2}
\end{figure}
\section{Missing proofs of~\Cref{sec:error-sensitive}}
\label{app:error-sensitive}
\ThmLBAllErrorMeasures*

\begin{proof}
Consider the instance of Figure~\ref{fig:lb_allmeasures} with the three edges $e_1$, $e_2$ and $e_3$ of a triangle.
Using the depicted predictions, the instance is prediction mandatory free by~\Cref{mst_pred_free_characterization}.
Clearly, edge $e_3$ is part of any MST, so it remains to determine which of $e_1$ and $e_2$ has larger edge weight.
If the algorithm $\ALG$ starts querying~$e_1$, then the adversary picks $w_{e_1} \in I_{e_2}$ and the algorithm is forced to query~$e_2$. Then $w_{e_2} \in I_{e_2} \setminus I_{e_1}$, so the optimum queries only~$e_2$.
It is easy to see that $k_h = 1$ and, thus, $\frac{|\ALG|}{|\OPT|} = 2 = \min\{1+ \frac{k_h}{|\OPT|},2\}$.
A symmetric argument holds if the algorithm starts querying~$e_2$. In that case, $w_{e_2} \in I_{e_1}$ and the algorithm is forced to query~$e_1$. Then $w_{e_1} \in I_{e_1} \setminus I_{e_2}$, so the optimum queries only~$e_1$.
Again, $k_h = 1$ and, thus, $\frac{|\ALG|}{|\OPT|} = 2 = \min\{1+ \frac{k_h}{|\OPT|},2\}$.
Taking multiple copies of this instance (connected by a tree structure), gives the same results for larger values of $k_h$.
\end{proof}

\begin{figure}[h]
	\begin{minipage}{0.45\textwidth}
		\centering
		\begin{tikzpicture}
			[default/.style={draw, fill, circle}, scale=0.9,line width = 0.3mm]
			
			\node[default] (1) at (0,0) {};
			\node[default] (2) at +(2,0) {};
			\pgfmathsetmacro{\yc}{sin(60)}
			\node[default] (3) at +(1,2*\yc) {};	
			\path 
			(1) edge node[anchor=north, below] 		{$e_3$}	(2)
			(2) edge node[anchor=southwest, right] 	{$e_2$}	(3)
			(1)	edge node[anchor=southeast, left]  		{$e_1$}	(3);
		\end{tikzpicture}
	\end{minipage}
	\begin{minipage}{0.45\textwidth}
		\centering
		\begin{tikzpicture}[line width = 0.3mm, scale=0.8]
			\intervalpr{$I_{e_1}$}{0}{2}{0}{0.5}{1.5}
			\intervalpr{$I_{e_2}$}{1}{3}{0.5}{2.5}{2.5}
			\intervalpr{$I_{e_3}$}{-3}{-2}{1}{-2.5}{-2.5}
		\end{tikzpicture}
	\end{minipage}	
	\caption{Lower bound example for the proof of~\Cref{thm_lb_error_measure}. Circles illustrate true values and crosses illustrate the predicted values. }
	\label{fig:lb_allmeasures}
\end{figure}
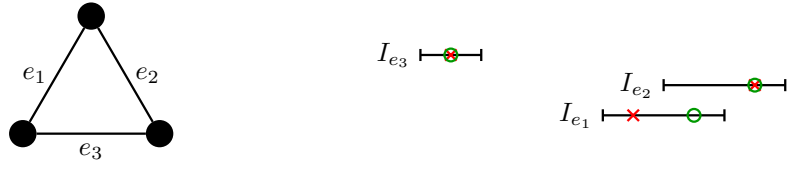

\remEdgeError*

\begin{proof}
	\textbf{Case $\{l,f\} \in \bar{E}_{j-1} \setminus \bar{E}_j$ for non-trivial $l$ and $f$.}
	Assume w.l.o.g.~that $l \in T_L^{j-1}$, $f \not\in T_L^{j-1}$, $l \in C_f^{j-1}$, $f \in X^{j-1}_l$ and $I_l \cap I_f \not= \emptyset$, where $C_f^{j-1}$ is the unique cycle in $T_L^{j-1} \cup \{f\}$ and $X_l^{j-1}$ is the set of edges between the two connected components of $T_L^{j-1}\setminus\{l\}$ 
	.
	By~\Cref{lemma_mst_tree_change}, $l \in T_L^j$ and $f \not\in T_L^j$.
	
	Since $\{l,f\} \not\in \bar{E}_j$, we have $l \not\in C_f^j$ and $f \not\in X_l^j$.
	Each cycle containing $f$ must contain an edge of $X_l^{j-1}\setminus \{f\}$ and, therefore, there must be some $l' \in C_f^j$ with $l' \in X_l^{j-1} \setminus \{f,l\}$. 
	This implies $l' \in T_L^j$.
	On the other hand, $T_L^{j-1} \cap X^{j-1}_l = \{l\}$ and, therefore, $l' \not\in T_L^{j-1}$.
	\Cref{lemma_mst_tree_change} implies that $l'$ must have been queried while transforming instance $G_{j-1}$ into instance $G_j$.
	
	As $G_{j-1}$ is prediction mandatory free, $\w_{l'} \ge U_l > L_f$ by \Cref{mst_pred_free_characterization}.
	However, $w_{l'} \le L_f$ as $w_{l'} > L_f$ is a contradiction to $T_L^j$ being a lower limit tree.
	We can conclude $\w_{l'} \ge U_l > w_{l'}$ and $\w_{l'} > L_f \ge w_{l'}$, which implies $\oj(l),\oj(f) \ge 1$.
	
	\textbf{Case $\{l,f\} \in \bar{E}_j \setminus \bar{E}_{j-1}$.}
	W.l.o.g.~assume $l \in T_L^j$ and $f \not\in T_L^{j}$.
	Since $l$ and $f$ are {non}-trivial, \Cref{lemma_mst_tree_change} implies $e \in T_L^{j-1}$ and $f \not\in T_L^{j-1}$.	

	Since $\{f,l\} \in \bar{E}_j$ and $f \not\in T^j_L$, the definition of the vertex cover instance $\bar{G}_j$ implies $l \in C^j_f$ such that $I_f \cap I_{e} \not= \emptyset$, where $C^j_f$ is the unique cycle in $T^j_L \cup \{f\}$.
		
	Define $X^j_{l}$ to be the set of edges in the cut of $G_j$ between the two connected components of $T_L^j \setminus \{l\}$.
	Remember that $f \in X^j_l$ since $f \not\in T_L^j$ and $l \in C^j_f$.
	Since $\{f,l\} \not\in \bar{E}_{j-1}$ implies $l\not\in C^{j-1}_f$, there must be an element $l' \in T^{j-1}_L \cap (X^j_{l} \setminus \{l,f\})$ such that $l' \in C^{j-1}_f$, where $C^{j-1}_f$ is the unique cycle in $T^{j-1}_L \cup \{f\}$.

	As the instance is prediction mandatory free, $l' \in C^{j-1}_f$ implies $\w_{l'} \not\in I_f$ (more precisely $\w_{l'} \le L_f$).
	Further, $X^j_l \cap T^j_L = \{l\}$ holds by definition and implies $l' \not\in T_L^{j}$.
	According to Lemma~\ref{lemma_mst_tree_change}, $l' \in T^{j-1}_L \setminus T_L^j$ implies that $l'$ must have been queried in order to transform instance $G_{j-1}$ into instance $G_j$.
	If $w_{l'} \in I_l$, then $w_{l'} < U_l$, which implies that $l$ does not have the smallest upper limit in cut $X^{j}_l$. This is a contradiction to $T_L^j$ being an upper limit tree.
	
	If $w_{l'} \not\in I_l$, then also $w_{l'} \ge U_l$ ($w_{l'} \le L_l$ cannot be the case because $l'$ would have the smallest lower limit in $X^j_l$ which would contradict $l' \not\in T_L^j$).
	As $I_l \cap I_f \not= \emptyset$, we can conclude that $w_{l'} \ge U_l$ implies $w_{l'} > L_f$.
	It follows $\w_{l'} \le L_f < w_{l'}$, which implies $\oj(f) \ge 1$.
	
	Similarly, $l \in C_f^j$, $I_l \cap I_f \not= \emptyset$ and $l,f$ being non-trivial imply $L_f < U_l$.
	Thus, we have $\w_{l'} \le L_f < U_l \le w_{l'}$, which implies $\oj(l) \ge 1$.
	
	\jnew{Since all errors of both cases are caused by elements $l$ that have been queried to transform $G_{j-1}$ into $G_j$, i.e., are contained in $Y_j$,
	the arguments in both cases also imply $\jo(Y_j) \ge |U|$ for the set $U$ of all endpoints of edges $\{l,f\}$ that are covered by one of the cases.}	
\end{proof}

\thmErrorSensitive*

We actually show a robustness of $\max\{3,\gamma + \frac{1}{|\OPT|}\}$ which might be smaller than $\gamma +1$.

\begin{proof}
	Let $\ALG = \ALG_1 \cup \ALG_2$ be the query set queried by the algorithm where $\ALG_1$ and $\ALG_2$ are the queries of Algorithm~\ref{ALG_mst_part_1} and Algorithm~\ref{ALG_mst_part_2_error}, respectively. 
	Let $P \subseteq \ALG_1$ denote the edges queried in the last iteration of Algorithm~\ref{ALG_mst_part_1}.
	Furthermore, let $D$ denote the set of edges in $E \setminus \ALG_1$ that can be deleted or contracted during the execution of Algorithm~\ref{ALG_mst_part_1} before the final iteration. 
		It follows $D \cap \ALG_2 = \emptyset$

	Assume first that $\ALG_2 = \emptyset$. 
	Then, querying $\ALG_1$ solves the problem.
	Theorem~\ref{mst_end_of_phase_one} directly implies $|\ALG| = |\ALG_1| \le  (1+\frac{1}{\gamma}) \cdot {(|\OPT {\cap (\ALG_1 \cup D)}| + \jo(\ALG_1) + \oj(\ALG_1))}$ and $|\ALG_1 \setminus P| \le \gamma \cdot |\OPT \cap {(\ALG_1 \cup D)}|$ {(as the additive term of $\gamma-2$ is caused by $P$)}. 
	
	Thus, the error-dependent guarantee follows immediately.
	However, due to the additive term of $\gamma-2$ in the second term of the minimum, the second inequality does not directly transfer to $|\ALG|$ as $|\ALG| \le \gamma \cdot |\OPT|$ might not hold.
	Recall that the additive term is caused exactly by the queries to $P$.
	Since the algorithm executes queries in the last iteration, it follows that the instance is not solved at the beginning of the iteration (a solved instance is prediction mandatory free and would lead to direct termination).
	Thus, $|\OPT \setminus ({(\ALG_1 \cup D)}\setminus P)| \ge 1$ and, therefore $|P| \le (\gamma-2) \cdot |\OPT \setminus {((\ALG_1 \cup D)\setminus P)}|$.
	Ignoring the additive term caused by $P$, Theorem~\ref{mst_end_of_phase_one} implies $|\ALG_1\setminus P|\le \gamma \cdot |\OPT \cap {((\ALG_1 \cup D)\setminus P)}|$.
	We can conclude	
	\begin{align*}
	|\ALG| &= |\ALG_1\setminus P| + |P|\\
	&\le  \gamma \cdot |\OPT \cap {((\ALG_1 \cup D)\setminus P)}| + (\gamma-2) \cdot |\OPT \setminus {((\ALG_1 \cup D)\setminus P)}|\\
	&\le \gamma \cdot |\OPT|.
	\end{align*}
	Now, assume that $\ALG_2 \not= \emptyset$.
	Let $\OPT = \OPT_1 \cup \OPT_2$ be an optimal query set with $\OPT_1 = \OPT \cap {(\ALG_1 \cup D)}$ and $\OPT_2 = \OPT \setminus \ALG_1$. 
	By Theorem~\ref{thm:predfree}, $\ALG_2 \not= \emptyset$ implies $\OPT_2 \not= \emptyset$.
	Thus, $|\OPT_2| \ge 1$.
	Then, Theorems~\ref{mst_end_of_phase_one} and~\ref{thm:predfree-error} directly imply $|\ALG_1 {\setminus P}| \le \gamma \cdot |\OPT_1 \setminus P|$ (since the additive term in Theorem~\ref{mst_end_of_phase_one} is caused by $P$) and $|\ALG_2 \cup P| \le (\gamma+1) \cdot |\OPT_2|$ (since $|\OPT_2| \ge 1$).
	Together, the inequalities imply $|\ALG| \le (\gamma+1) \cdot |\OPT|$ and, thus $(\gamma+1)$-robustness.
	
	We remark that, for $\gamma \ge 3$, the robustness improves for increasing $|\OPT|$.
	To see this, note that if $\gamma \ge 3$, then $|\ALG_2 \cup P|\le \gamma \cdot |\OPT_2| + 1$ and, in combination with $|\ALG_1 \setminus P| \le \gamma \cdot |\OPT_1 \setminus P|$, $|\ALG| \le \gamma \cdot |\OPT| + 1$. 
	Thus, for $\gamma \ge 3$, the robustness of the algorithm is actually $\gamma + \frac{1}{|\OPT|}$. 
	For $\gamma=2$, the robustness is $3 = \gamma + 1$.
	Combined, the robustness is $\max\{3,\gamma + \frac{1}{|\OPT|}\}$.
	
	 {We continue by showing $|\ALG| \le (1+\frac{1}{\gamma}) \cdot  {|\OPT| + \red{5} \cdot k_h}$.
	Theorem~\ref{mst_end_of_phase_one} implies $|\ALG_1| \le (1+\frac{1}{\gamma}) \cdot (|\OPT_1|+ 2 \cdot \max\{\jo(\ALG_1),\oj(\ALG_1)\})$. 
	Since $\gamma \ge 2$, it follows $(1+\frac{1}{\gamma}) \cdot 2 \le 5$, and we can rewrite $|\ALG_1| \le (1+\frac{1}{\gamma}) \cdot |\OPT_1| + \red{5} \cdot \max\{\jo(\ALG_1),\oj(\ALG_1)\}$.
	Theorem~\ref{thm:predfree-error} implies $|\ALG_2| 	\le |\OPT_2|+ \red{5} \cdot k_h'$, where $k_h'$ is the error for the input instance of the second phase. 
	That is, the instance that does not contain any edge of $\ALG_1$ since those can be deleted/contracted after the first phase.
	This implies for the error $k_h$ of the complete instance that no error that is counted by $\max\{\jo(\ALG_1),\oj(\ALG_1)\}$ is considered by $k_h'$ and, therefore, $\max\{\jo(\ALG_1),\oj(\ALG_1)\} + k_h' \le k_h$.
	Thus, we can combine the inequalities to $|\ALG| \le (1+\frac{1}{\gamma}) \cdot |\OPT| + \red{5} \cdot k_h$.}
\end{proof}

\section{Removing the integrality condition from \Cref{thm:optimal-tradeoff,thm:error-sensitive}}
\label{app:rational-gamma}
The parameter $\gamma$ in \Cref{thm:optimal-tradeoff,thm:error-sensitive} is restricted to integral values since it determines sizes of query sets. Nevertheless, a generalization to arbitrary rational $\gamma \geq 2$ is possible by a simple rounding of $\gamma$ at a small loss in the performance guarantee. 

Let $A(\gamma')$ denote the algorithm that first executes Algorithm~\ref{ALG_mst_part_1} and then Algorithm~\ref{ALG_mst_part_2_error}, both with parameter $\gamma=\gamma'$. For given $\gamma \in \ZZ$, we run Algorithm $A(\gamma)$ and achieve the performance guarantee from Theorem \ref{thm:error-sensitive}. Assume $\gamma \notin\ZZ$, and let $\{\gamma\}:=\gamma - \lfloor \gamma \rfloor = \gamma - \lceil \gamma \rceil +1$ denote its fractional part. We randomly chose $\gamma'$ as 
	\[
	\gamma' = \begin{cases}
 	\ \lceil \gamma \rceil & \textup{ w.p. } \{\gamma\}\\
 	\ \lfloor \gamma \rfloor & \textup{ w.p. } 1-\{\gamma\}.
 	\end{cases}
	\]
Then we execute Algorithm $A(\gamma')$. We show that the guarantee from Theorem \ref{thm:error-sensitive} holds in expectation with an additive term less than $0.021$ in the consistency, more precisely, we show the following result generalizing \Cref{thm:error-sensitive}.



\begin{theorem}
	\label{thm:min-problem-arbitrary-gamma}
	For every rational $\gamma\ge 2$, there exists a randomized algorithm $\min\{ 1 + \frac{1}{\gamma} \nnew{+ \xi}+ \frac{\red{5} \cdot k_h}{|\OPT|}, \gamma+1 \}$-competitive algorithm for the MST problem under explorable uncertainty, \nnew{where $\xi=\frac{ \{\gamma\}(1-\{\gamma\}) }{\gamma \lceil \gamma \rceil \lfloor \gamma \rfloor} \leq \frac{1}{48} < 0.021$}.
\end{theorem}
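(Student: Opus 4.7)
The approach is to randomly round $\gamma$ to one of its two adjacent integers so that~\Cref{thm:error-sensitive} can be invoked pointwise. Concretely, for rational $\gamma \ge 2$ with $\gamma \notin \ZZ$, draw $\gamma' = \lceil \gamma \rceil$ with probability $\{\gamma\}$ and $\gamma' = \lfloor \gamma \rfloor$ with probability $1-\{\gamma\}$, and then run the deterministic algorithm $A(\gamma')$ of~\Cref{thm:error-sensitive}. The integral case is immediate from~\Cref{thm:error-sensitive} itself. Since the randomization uses a single coin flip before any queries, conditioning on $\gamma'$ leaves us with a purely deterministic guarantee.

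Conditioned on $\gamma'$, \Cref{thm:error-sensitive} yields the two bounds $|\ALG| \le \bigl(1+\tfrac{1}{\gamma'}+\tfrac{5 k_h}{|\OPT|}\bigr)\cdot|\OPT|$ and $|\ALG| \le (\gamma'+1)\cdot|\OPT|$ simultaneously. Since $k_h$ and $|\OPT|$ are instance quantities independent of the coin flip, taking expectations gives $\EX[|\ALG|] \le \bigl(\EX[1+\tfrac{1}{\gamma'}]+\tfrac{5k_h}{|\OPT|}\bigr)\cdot|\OPT|$ and $\EX[|\ALG|] \le (\EX[\gamma']+1)\cdot|\OPT|$, and the competitive ratio claim for the $\min$ of the two then follows by taking the $\min$ of these upper bounds.

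The probabilities were chosen so that $\EX[\gamma'] = \{\gamma\}\lceil\gamma\rceil + (1-\{\gamma\})\lfloor\gamma\rfloor = \gamma$, which settles the $(\gamma+1)$-robustness part immediately. For the consistency part, I would carry out the short algebraic computation of $\EX[\tfrac{1}{\gamma'}] - \tfrac{1}{\gamma}$ using $\lceil\gamma\rceil = \lfloor\gamma\rfloor+1$ and $\gamma = \lfloor\gamma\rfloor+\{\gamma\}$; collecting terms over the common denominator $\gamma\lceil\gamma\rceil\lfloor\gamma\rfloor$, the numerator collapses to $\{\gamma\}(1-\{\gamma\})$, matching the claimed $\xi$ exactly.

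To verify the numeric bound $\xi \le \tfrac{1}{48}$ I would use $\{\gamma\}(1-\{\gamma\}) \le \tfrac{1}{4}$ together with the observation that for any non-integral $\gamma \ge 2$ the factors $\lfloor\gamma\rfloor, \gamma, \lceil\gamma\rceil$ are at least $2, 2, 3$ respectively, so $\gamma\lceil\gamma\rceil\lfloor\gamma\rfloor \ge 12$, and $\tfrac{1/4}{12} = \tfrac{1}{48} < 0.021$. There is no real obstacle here; the entire argument is a standard convexification plus the identity $\EX[1/\gamma'] = 1/\gamma + \xi$, whose only subtlety is verifying that the additive term obtained from the two-point rounding distribution matches precisely the closed form in the theorem statement.
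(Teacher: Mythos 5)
Your proposal is correct and follows essentially the same route as the paper's proof: the same two-point rounding distribution with $\EX[\gamma']=\gamma$, the same observation that the two pointwise bounds of~\Cref{thm:error-sensitive} can be taken in expectation because $k_h$ and $|\OPT|$ are independent of the coin flip, the same algebraic simplification of $\EX[1/\gamma']-1/\gamma$ over the denominator $\gamma\lceil\gamma\rceil\lfloor\gamma\rfloor$, and the same $\tfrac{1}{4}\cdot\tfrac{1}{12}$ bound on $\xi$. No gaps.
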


\begin{proof}
	\newcommand{\E}[1]{\mathbb{E}\left[\,#1\,\right]}
	\newcommand{\gfloor}{\lfloor \gamma \rfloor}
	\newcommand{\gceil}{\lceil \gamma \rceil}
	
For a given $\gamma \in \ZZ$, we run Algorithm $A(\gamma)$ and achieve the performance guarantee from Theorem \ref{thm:error-sensitive}. 

Assume $\gamma \notin\ZZ$, and the algorithm $A(\gamma')$ is executed with $\gamma'\in \{\gfloor, \gceil\}$. 
%
For any fixed $\gamma'$, the arguments in the proof of Theorem \ref{thm:error-sensitive} on the robustness imply that the ratio of the algorithm's number of queries $|\ALG|$ and $|\OPT|$ is bounded by $\gamma'+1$. In expectation the robustness is
\begin{align*}
	\E{\gamma'+1} &= (1-\{\gamma\}) \cdot \gfloor + \{\gamma\} \cdot \gceil + 1\\
	&= (1-\{\gamma\}) \cdot (\gamma - \{\gamma\}) + \{\gamma\} \cdot (\gamma - \{\gamma\} +1) + 1\\
	&= \gamma +1.
\end{align*}

Noting that $\opt$ and $k_h$ are independent of~$\gamma'$, the error-dependent bound on the competitive ratio is in expectation 
\begin{align*}
	\E{ 1+\frac{1}{\gamma'} +  \frac{5 \cdot k_h}{\opt} } = 1 + \E{\frac{1}{\gamma'}} +  \frac{5 \cdot k_h}{\opt} .
\end{align*}

Applying simple algebraic transformations, we obtain
\begin{align*}
	\E{\frac{1}{\gamma'}} &= \{\gamma\}\frac{1}{\gceil} + (1-\{\gamma\})\cdot\frac{1}{\gfloor} \ -\  \frac{1}{\gamma} \ +\  \frac{1}{\gamma}\\
	&= \frac{1}{\gamma} + \frac{1}{\gamma \gceil \gfloor}\cdot \bigg( \{\gamma\} \gamma \gfloor  + (1- \{\gamma\}) \cdot \gceil  \gamma - \gceil \gfloor  \bigg).
\end{align*}
As we consider fractional $\gamma >2$, it holds $\lceil \gamma\rceil = \gfloor + 1$. Using this, we rewrite the term in brackets as
\begin{align*}
	& \{\gamma\} \gamma \gfloor  + (1- \{\gamma\}) \cdot (\gfloor +1)  \gamma - (\gfloor +1)  \gfloor\\
	& = \{\gamma\} \gamma \gfloor  + (\gfloor +1)  \gamma - \{\gamma\} \cdot (\gfloor +1)  \gamma - (\gfloor +1)  \gfloor\\
	& = (\gfloor +1)(\gamma -\gfloor) - \{\gamma\} \cdot \gamma\\
	& = \{\gamma\} (\gfloor + 1 - \gamma) \\
	& = \{\gamma\}(1-\{\gamma\}),
\end{align*}
where the third and fourth inequality come from the fact that $\gamma -\gfloor = \{\gamma\}$. Note that for $\{\gamma\} \geq 0$, the expression $\{\gamma\}(1-\{\gamma\})$ is at most $1/4$, where it reaches its maximum for $\{\gamma\} =1/2$. Further, notice that for fractional $\gamma > 2$ it holds that $\gamma \gceil \gfloor \geq 12$. We conclude that
\[
  \E{\frac{1}{\gamma'}} \leq \frac{1}{\gamma} + \frac{ \{\gamma\}(1-\{\gamma\}) }{\gamma \gceil \gfloor} \leq \frac{1}{\gamma} + \frac{1}{48} < \frac{1}{\gamma} + 0.021
\]
which proves the theorem.
\end{proof}

The generalized variant of \Cref{thm:optimal-tradeoff} follows along the same lines. 
\begin{theorem}
	For every rational $\gamma\ge 2$, there exists a $(\nnew{1.021}+\frac{1}{\gamma})$-consistent and $\gamma$-robust algorithm for the MST problem under explorable uncertainty.
\end{theorem}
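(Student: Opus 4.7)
The plan is to reuse the same randomized rounding scheme that was employed in the proof of \Cref{thm:min-problem-arbitrary-gamma} and feed it into the deterministic algorithm provided by \Cref{thm:optimal-tradeoff} instead of the error-sensitive algorithm. For integral $\gamma$ nothing needs to be done, since \Cref{thm:optimal-tradeoff} already delivers the stronger bound $(1+\tfrac{1}{\gamma})$-consistency and $\gamma$-robustness. So assume $\gamma\geq 2$ is rational but not integral, and let $\{\gamma\}:=\gamma-\lfloor\gamma\rfloor\in(0,1)$.

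The algorithm is as follows. Draw a random integer $\gamma'\in\{\lfloor\gamma\rfloor,\lceil\gamma\rceil\}$ with
\[
\Pr[\gamma'=\lceil\gamma\rceil]=\{\gamma\},\qquad \Pr[\gamma'=\lfloor\gamma\rfloor\}]=1-\{\gamma\},
\]
and run the $(1+\tfrac{1}{\gamma'})$-consistent and $\gamma'$-robust algorithm from \Cref{thm:optimal-tradeoff} with parameter $\gamma'$. Since $\gamma\geq 2$ implies $\lfloor\gamma\rfloor\geq 2$, this invocation is valid.

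The robustness analysis is immediate: for each fixed outcome of $\gamma'$, $|\ALG|\le \gamma'\cdot|\OPT|$ holds deterministically, so $\mathbb{E}[|\ALG|]\le \mathbb{E}[\gamma']\cdot|\OPT|$, and
\[
\mathbb{E}[\gamma']=\{\gamma\}\cdot\lceil\gamma\rceil+(1-\{\gamma\})\cdot\lfloor\gamma\rfloor=\gamma,
\]
yielding $\gamma$-robustness in expectation. For the consistency bound, the algorithm achieves $|\ALG|\le (1+\tfrac{1}{\gamma'})|\OPT|$ deterministically for each outcome, so it suffices to bound $\mathbb{E}[1/\gamma']$. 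This is exactly the computation already carried out in the proof of \Cref{thm:min-problem-arbitrary-gamma}: using $\lceil\gamma\rceil=\lfloor\gamma\rfloor+1$ and $\gamma-\lfloor\gamma\rfloor=\{\gamma\}$ one obtains
\[
\mathbb{E}\!\left[\tfrac{1}{\gamma'}\right]=\tfrac{1}{\gamma}+\tfrac{\{\gamma\}(1-\{\gamma\})}{\gamma\lceil\gamma\rceil\lfloor\gamma\rfloor}\le\tfrac{1}{\gamma}+\tfrac{1}{48}<\tfrac{1}{\gamma}+0.021,
\]
where the first inequality uses $\{\gamma\}(1-\{\gamma\})\le \tfrac14$ and $\gamma\lceil\gamma\rceil\lfloor\gamma\rfloor\ge 12$ for $\gamma>2$. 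Combining, $\mathbb{E}[|\ALG|]\le(1.021+\tfrac{1}{\gamma})\cdot|\OPT|$.

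There is essentially no main obstacle; the only thing to double-check is that the two competing guarantees (robustness $\gamma$ and consistency $1.021+\tfrac{1}{\gamma}$) can be obtained \emph{simultaneously} by the same random choice. This follows because for each realization of $\gamma'$ both inequalities $|\ALG|\le \gamma'|\OPT|$ and $|\ALG|\le(1+\tfrac{1}{\gamma'})|\OPT|$ hold deterministically (the first unconditionally, the second only when the predictions are correct), so taking expectations separately yields the claimed bounds without any coupling issue. This proves the theorem.
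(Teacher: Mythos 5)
Your proposal is correct and follows exactly the approach the paper intends: the paper's own proof of this theorem is simply the remark that it ``follows along the same lines'' as Theorem~\ref{thm:min-problem-arbitrary-gamma}, and you have carried out precisely that transfer—the same randomized choice of $\gamma'\in\{\lfloor\gamma\rfloor,\lceil\gamma\rceil\}$, now fed into the $(1+\tfrac{1}{\gamma'})$-consistent, $\gamma'$-robust algorithm of Theorem~\ref{thm:optimal-tradeoff}, with the identical algebraic bound $\mathbb{E}[1/\gamma'] \le \tfrac{1}{\gamma}+\tfrac{1}{48}$ and the linearity-of-expectation computation $\mathbb{E}[\gamma']=\gamma$.
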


\end{document}